\documentclass{lmcs}
\pdfoutput=1
\usepackage[utf8]{inputenc}

% LMCS Layouting Macros
\usepackage{lastpage}
\lmcsdoi{21}{4}{9}
\lmcsheading{}{\pageref{LastPage}}{}{}%
{Feb.~16,~2024}{Oct.~16,~2025}{}

\usepackage{tikz}
\usetikzlibrary{shapes.geometric,positioning}
\usepackage{tikz-cd}
\usepackage{cmll}
\usepackage{stmaryrd}
\usepackage{xparse}
\usepackage{mathtools}
\usepackage{amsmath}
\usepackage{amssymb}
\usepackage{amsfonts}
\usepackage{alphabeta}
\usepackage{bm}
\usepackage{ebproof}
\usepackage{wasysym}
\usepackage[shortlabels]{enumitem}
\setlength{\marginparwidth}{2.7cm}
\usepackage[mathscr]{eucal}

% remarques et couleurs

\newcommand{\definitive}[1]{\textbf{#1}}
\NewDocumentCommand{\lionel}{O{}m}{\todo[color=orange,#1]{#2 ---~Lionel}}
\NewDocumentCommand{\lison} {O{}m}{\todo[color=yellow,#1]{#2 ---~Lison}}
\NewDocumentCommand{\pierre}{O{}m}{\todo[color=cyan,#1]  {#2 ---~Pierre}}

% basic math
\newcommand{\eqdef}{\stackrel{\scriptscriptstyle\mathrm{def}}{=}}
\newcommand{\recdef}{\Coloneqq}
\newcommand{\iffdef}{\stackrel{\scriptscriptstyle\mathrm{def}}{\Leftrightarrow}}

\newcommand{\bij}{\simeq}

% delimiters
\DeclarePairedDelimiter{\pars}{(}{)}
\DeclarePairedDelimiter{\mset}{[}{]}
\DeclarePairedDelimiter{\tuple}{\langle}{\rangle}
\DeclarePairedDelimiter{\size}{|}{|}

% particular sets
\newcommand{\N}{\mathbf N}

% operations on sets
\newcommand{\WordsOf}[1]{#1^*}

\newcommand{\Mf}{\mathscr{B}}
\newcommand{\MfOf}[2][]{\Mf\pars[#1]{#2}}

\newcommand{\Sf}{\mathscr{S}}
\newcommand{\SfOf}[2][]{\Sf\pars[#1]{#2}}

\newcommand{\fSumsOf}[2][]{\Sigma\pars[#1]{#2}}

\NewDocumentCommand{\card}{sO{}m}{\#\IfBooleanTF{#1}{\pars[#2]{#3}}{#3}}
\newcommand{\supp}{\mathsf{supp}}

% seqs, bags
\newcommand{\seq}[1]{\vec{#1}}
\newcommand{\emptyseq}{\tuple{}}

\newcommand{\bag}[1]{\bar{#1}}
\newcommand{\emptybag}{\mset{}}
\newcommand{\bagcons}{*}

\newcommand{\splitinto}{\lhd}
\newcommand{\darksplitinto}{\LHD}
\newcommand{\restrict}{\mathbin\upharpoonright}

% semantics
\DeclarePairedDelimiter{\intr}{\llbracket}{\rrbracket}
\DeclarePairedDelimiter{\sintr}{\|}{\|}
\DeclarePairedDelimiter{\ev}{|}{|}

\newcommand{\pol}{\mathrm{pol}}
\newcommand{\imc}{\rightarrowtriangle}
\newcommand{\iso}{\cong}
\newcommand{\sym}{\cong}

\newcommand{\tto}{\Rightarrow}
\newcommand{\tensor}{\otimes}

\newcommand{\conf}{\mathcal{C}}
\newcommand{\init}{\mathsf{init}}
\newcommand{\display}{\partial}

\newcommand{\vide}{\varnothing} % configuration vide

\newcommand{\pos}{\mathscr{P}}
\newcommand{\ic}[1]{\overline{#1}}
\newcommand{\rep}[1]{\underline{#1}}

\DeclarePairedDelimiter{\deseq}{\llparenthesis}{\rrparenthesis}
\newcommand{\Aug}{\mathsf{Aug}}
\newcommand{\IAug}{\mathsf{Isog}}

\newcommand{\augOne}{{1}}
\newcommand{\lift}{\square}

\newcommand{\inter}{\circledast}
\newcommand{\causes}{\vartriangleright}
\newcommand{\rside}{\mathtt{r}}
\newcommand{\lside}{\mathtt{l}}
\newcommand{\just}{\mathsf{just}}
\newcommand{\depth}{\mathsf{depth}}

\newcommand{\blambda}{\bm{\lambda}}

\newcommand{\bdelta}{\bm{\delta}}
\newcommand{\bmu}{\bm{\mu}}

\newcommand{\rp}{\overline{\mathbb{R}}_+}
\newcommand{\augCC}{{c\!c}}
\newcommand{\isogCC}{\mathsf{c\!c}}

\newcommand{\id}{\mathrm{id}}
\newcommand{\Sym}{\mathsf{Sym}}
\NewDocumentCommand{\SymOf}{sO{}m}{\Sym\IfBooleanTF{#1}{\pars*{#3}}{\pars[#2]{#3}}}
\NewDocumentCommand{\nSymOf}{sO{}m}{\card{\Sym\IfBooleanTF{#1}{\pars*{#3}}{\pars[#2]{#3}}}}

% categories

\newcommand{\Strat}{\mathsf{Strat}}
\newcommand{\C}{\mathcal{C}}
\newcommand{\pid}{\id^{\bullet}}
\newcommand{\evm}{\mathrm{ev}}
\newcommand{\var}{\mathrm{var}}
\newcommand{\pack}[1]{\langle\!| #1 |\!\rangle} %changed to avoid stix2
\newcommand{\distribute}{\mathsf{d}}
\newcommand{\rassemble}{\mathsf{g}}
\newcommand{\bdistribute}{\mathbf{d}}
\newcommand{\brassemble}{\mathbf{g}}

% resource categorical diagrams

\tikzset{
    rcat/.style={
        scale=.7,
        every node/.style={font=\small,inner sep=1pt},
    }
}

% game semantics diagrams

\usetikzlibrary{arrows.meta}
\tikzset{
    move/.style={circle,inner sep=0},
    sdep/.style={-,dotted, bend left=15,looseness=.7}, % static order
    ddep/.style={<-,>={Triangle[length=1ex,width=1ex]}}, % dynamic order
    % (the dynamic order is drawn from bottom to top, so we reverse the direction by default)
}
\tikzcdset{
    gs/.style={
        cramped, cells={move},
        column sep=2pt, row sep=1ex, % default spacing
    },
    gs header/.style={gs,/tikz/row 1/.style={rectangle}}, % we do not want circle shaped nodes in the type line
}

% syntax
\NewDocumentCommand{\labs}{m O{.} s O{} m}{%
  \lambda #1#2%
  \IfBooleanT{#3}{#4(}#5\IfBooleanT{#3}{#4)}%
}
\NewDocumentCommand{\appl}{s O{} m s O{} m}{%
  \IfBooleanT{#1}{#2(}#3\IfBooleanT{#1}{#2)}\,%
  \IfBooleanT{#4}{#5(}#6\IfBooleanT{#4}{#5)}%
}
\let\rappl=\appl

\NewDocumentCommand{\rsubst}{s O{} m m O{} m}{%
  \IfBooleanT{#1}{#2(}#3\IfBooleanT{#1}{#2)}#5\langle#6/#4#5\rangle%
}
\newcommand{\ssubst}[3]{#1 \langle\hspace{-3pt}\langle #3 / #2 \rangle\hspace{-3pt}\rangle}

% reductions
\newcommand{\bred}{\leadsto}

% sets of terms
\newcommand\ResTerms{\Delta}
\newcommand\ResBags{\MfOf{\ResTerms}}

% typing

\newcommand{\rulename}[1]{(\mathrm{#1})}
\ebproofset{right label template=\small$\rulename{\inserttext}$}

\newcommand{\X}{\mathsf{X}}
\newcommand{\Trm}{\mathsf{Tm}}
\newcommand{\Bag}{\mathsf{Bg}}
\newcommand{\Seq}{\mathsf{Sq}}

\NewDocumentCommand{\jugX}{O{\X}}{\vdash_{#1}}
\newcommand{\jugTrm}{\jugX[\Trm]}
\newcommand{\jugBag}{\jugX[\Bag]}
\newcommand{\jugSeq}{\jugX[\Seq]}

\NewDocumentCommand{\SX}{O{\mathsf{X}}}{\Sigma#1}
\newcommand{\STrm}{\SX[\Trm]}

\NewDocumentCommand{\XOf}  {O{\X}mm}{#1(#2; #3)}
\NewDocumentCommand{\XOfNf}{O{\X}mm}{\XOf[#1_{\mathsf{nf}}]{#2}{#3}}
\newcommand{\TrmOf}[2]    {\XOf  [\Trm]{#1}{#2}}
\newcommand{\TrmOfNf}[2]  {\XOfNf[\Trm]{#1}{#2}}
\newcommand{\BagOf}[2]    {\XOf  [\Bag]{#1}{#2}}
\newcommand{\BagOfNf}[2]  {\XOfNf[\Bag]{#1}{#2}}
\newcommand{\SeqOf}[2]    {\XOf  [\Seq]{#1}{#2}}
\newcommand{\SeqOfNf}[2]  {\XOfNf[\Seq]{#1}{#2}}

\NewDocumentCommand{\SXOf}{O{\X}mm}{\XOf[\SX[#1]]{#2}{#3}}
\newcommand{\STrmOf}[2] {\SXOf[\Trm]{#1}{#2}}

% stupid aliases

\newcommand\varA{x}
\newcommand\varB{y}
\newcommand\varC{z}

\newcommand\termA{s}
\newcommand\termB{t}
\newcommand\termC{u}

\newcommand\bagA{\bag{\termA}}
\newcommand\bagB{\bag{\termB}}
\newcommand\bagC{\bag{\termC}}

\newcommand\termsA{S}
\newcommand\termsB{T}
\newcommand\termsC{U}

\newcommand\bagsB{\bag{\termsB}}

\newcommand\typeA{A}
\newcommand\typeB{B}
\newcommand\typeC{C}
\newcommand\typeo{o}

\newcommand\vtypeA{\vec\typeA}
\newcommand\vtypeB{\vec\typeB}
\newcommand\vtypeC{\vec\typeC}

\newcommand\vvarA{\vec\varA}

\newcommand\vbagA{\seq{\termA}}
\newcommand\vbagB{\seq{\termB}}
\newcommand\vbagC{\seq{\termC}}

% use \renewcommand, because lmcs.cls defines a "qu" theorem environment
\renewcommand{\qu}{\mathtt{q}}

\newcommand{\posX}{\mathsf{x}}
\newcommand{\posY}{\mathsf{y}}
\newcommand{\posZ}{\mathsf{z}}
\newcommand{\posU}{\mathsf{u}}
\newcommand{\posV}{\mathsf{v}}

\newcommand{\augQ}{{{q}}}
\newcommand{\augP}{{{p}}}
\newcommand{\augR}{{{r}}}

\newcommand{\isogQ}{\mathsf{q}}
\newcommand{\isogP}{\mathsf{p}}
\newcommand{\isogR}{\mathsf{r}}

\newcommand{\strS}{\sigma}
\newcommand{\strT}{\tau}

% Misc symbols
\newcommand{\starplus}{\mathop{\cup\hspace{-6.4pt}*}}
\newcommand{\smallstarplus}{\mathop{\cup\hspace{-5pt}*}}

\begin{document}
	
\title[Strategies as Resource Terms, and Their Categorical Semantics]{Strategies as Resource Terms,\texorpdfstring{\\}{} and Their Categorical Semantics}
\titlecomment{{\lsuper*}This paper is an extended version of \cite{BCVA23}.}

\author[L.~Blondeau-Patissier]{Lison Blondeau-Patissier}[a,b,c]
\author[P.~Clairambault]{Pierre
	Clairambault\lmcsorcid{https://orcid.org/0000-0002-3285-6028}}[b]
\author[L.~Vaux Auclair]{Lionel Vaux
	Auclair\lmcsorcid{https://orcid.org/0000-0001-9466-418X}}[c]	
	
\address{ENS de Lyon, CNRS, Université Claude Bernard Lyon~1, LIP, Lyon, France}
\email{lison.blondeau-patissier@ens-lyon.org}

\address{Aix Marseille Univ, CNRS, LIS, Marseille, France}
\email{pierre.clairambault@cnrs.fr}
\thanks{
	\emph{Pierre Clairambault}:~Supported by the ANR projects 
	DyVerSe (ANR-19-CE48-0010-01) 
	and by the Labex MiLyon (ANR-10-LABX-0070) 
	of Universit\'e de Lyon.}
	
\address{Aix-Marseille Univ, CNRS, I2M, Marseille, France}
\email{lionel.vaux@univ-amu.fr}
\thanks{
	\emph{Lionel Vaux Auclair}:~Supported by the ANR projects
	PPS (ANR-19-CE48-0014), LambdaComb (ANR-21-CE48-0017)
	and Reciprog (ANR-21-CE48-0019).
	}

\ACMCCS{Theory of computation $\to$ Denotational semantics, 
	    Categorical semantics}
\keywords{Resource calculus, Game semantics, Categorical semantics}

\begin{abstract}
	As shown by Tsukada and Ong, simply-typed, normal and $\eta$-long resource
	terms correspond to plays in Hyland-Ong games, quotiented by Melliès'
	homotopy equivalence.
        The original proof of this inspiring result is indirect, relying on the
        injectivity of the relational model with respect to both sides of the
        correspondence
	-- in particular, the dynamics of the resource calculus is taken into
	account only \emph{via} the compatibility of the relational model with 
	the composition of normal terms defined by normalization.
	
	In the present paper, we revisit and extend these results.
	Our first contribution is to restate the correspondence by
	considering causal structures we call \emph{augmentations},
	which are canonical representatives of Hyland-Ong plays up to homotopy.
	This allows us to give a direct and explicit
	account of the connection with normal resource terms. As a second
	contribution, we extend this account to the \emph{reduction} of
	resource terms: building on a notion of strategies as weighted sums
	of augmentations, we provide a denotational model of the resource
	calculus, invariant under reduction. A key step -- and our third
	contribution -- is a categorical model we call a \emph{resource
		category}, which is to the resource calculus what differential
	categories are to the differential $\lambda$-calculus.
\end{abstract}

\maketitle

\section{Introduction}

The \emph{Taylor expansion} of $λ$-terms translates programs with
possibly infinite behaviour to infinite formal sums of terms of a
language with a strongly finitary behaviour called the \emph{resource
	calculus}. Its discovery dates back to Ehrhard and Regnier's
\emph{differential $\lambda$-calculus}~\cite{ER03},
reifying syntactically the features of particular models of linear logic where
types are interpreted as particular topological vector spaces,
and proofs as linear and continuous maps between them.
Since its inception~\cite{ER08}, Taylor expansion was intended as a
quantitative alternative to order-based approximation techniques, such
as Scott continuity and Böhm trees.  For instance, Barbarossa and
Manzonetto leveraged it to get simpler proofs of known results in
pure $\lambda$-calculus~\cite{BM20}.

\emph{Game semantics} is another well-established line of work, also
representing programs as collections of finite behaviours.  It is
particularly well known for its many full abstraction 
results~\cite{HO00,AJM13}.
How different is the Taylor expansion of the $\lambda$-calculus from its game
semantics? Not very different, suggest Tsukada and Ong~\cite{TO16},
who show that certain normal and
$\eta$-long resource terms correspond bijectively to plays in
the sense of Hyland-Ong game semantics~\cite{HO00}, up to Opponent's scheduling of
the independent explorations of separate branches of the term, as
formalized by Melliès' homotopy equivalence on plays~\cite{M06}.

The account of this insightful result by Tsukada and Ong is inspiring, 
but it also comes with limitations.
Their focus is on normal resource terms, and the dynamics is treated
only in the form of the composition of terms,
\emph{i.e.}\ substitution followed by normalization.
The correspondence is also very indirect, relying on the injectivity of the
relational model {w.r.t.} both normal resource terms and plays up to homotopy.
In~\cite{TO16}, after laying out the intuitions supporting 
the correspondence, Tsukada and Ong motivate this indirect route, writing:
``The idea will now be intuitively clear. However the definition
based on the above argument, which heavily depends on graphical
operations, does not seem so easy to handle.''

In the present paper, we handle this very task.
We rely on a representation of
plays quotienting out Opponent's scheduling,
recently introduced by the first two authors~\cite{BC21}. 
This was inspired by concurrent
games~\cite{CCRW17} --
similar causal structures existed before, first
suggested in~\cite{L05},
and fleshed out more in~\cite{ST17}.
In~\cite{BC21}, plays are replaced by
so-called \emph{augmentations}, which \emph{augment} valid states of
the game with causal constraints imposed by the program. Our first
contribution is an explicit description of the bijection between
normal resource terms and isomorphism classes of augmentations (called
\emph{isogmentations}, for the sake of brevity), in a style
similar to traditional finite definability arguments.

We moreover strive to account for non-normal resource terms and \emph{reduction}
in the resource calculus.
In game semantics, this typically relies on a category of strategies,
whose composition is defined by interaction between plays.
Considering the interaction of augmentations
-- which was not addressed in~\cite{BC21} --
an interesting phenomenon occurs.
Indeed, there is no canonical way to \emph{synchronize} two augmentations:
they can only interact \emph{via} a mediating map, called a \emph{symmetry},
and the result of the interaction depends on the chosen symmetry!
Composition is then obtained by summing over all symmetries.
This is not an artificial phenomenon arising from our implementation
choices: it is analogous to the non-determinism inherent to the
substitution of resource terms. And this is instrumental in our second
contribution: the correspondence between
normal resource terms and isogmentations refines into a denotational
interpretation, invariant under reduction, of resource terms as
``strategies'' -- weighted sums of isogmentations.

To establish this result we expose the structure of the category of strategies
that is relevant to obtain a model of the resource calculus:
we call \emph{resource categories} the resulting categorical model,
which is our third contribution.
We then show that strategies indeed form a resource category, completing the
proof of the previous point.

\paragraph{Related and future work.}
As mentioned above, Tsukada and Ong~\cite{TO16} 
considered some dynamic aspects of the correspondence:
they proved their bijection is compatible with the compositions of terms and
plays, \emph{via} composition in the relational model.
Nonetheless, they did not consider an interpretation of \emph{non-normal}
resource terms as strategies: the question of invariance under reduction
could not even be formulated, and the relevant structure of the category of
strategies could not be exposed.
Still, they state that the normal form of the Taylor
expansion of a $\lambda$-term is isomorphic to its game semantics. 

Our contributions in the present paper constitute a first step to flesh out
this isomorphism into the commuting diagrams:
\[
    \begin{tikzcd}
            M\arrow[|->,"\mathcal T"]{r}\arrow[|->,"\intr{-}"']{d}\arrow[phantom,"(a)"]{rd}&
            \mathcal T(M) \arrow[|->,"\mathcal N"]{r}\arrow[|->,"\intr{-}"]{d}\arrow[phantom,"(b)"]{rd}&
            \mathcal N(\mathcal T(M))\arrow[phantom,"\simeq" sloped]{d}\\
            \intr{M}\arrow[phantom,"="]{r}&\intr{\mathcal T(M)}\arrow[phantom,"="]{r}&\intr{\mathcal N(\mathcal T(M))}
    \end{tikzcd}
    \qquad
    \begin{tikzcd}
            s\arrow[|->,"\mathcal N"]{r}\arrow[|->,"\intr{-}"']{d}\arrow[phantom,"(c)"]{rd}&
            \mathcal N(s)\arrow[phantom,"\simeq" sloped]{d}\\
            \intr{s}\arrow[phantom,"="]{r}&\intr{\mathcal N(s)}
    \end{tikzcd}
\]
where $M$ is a $\lambda$-term, $s$ is a resource term, $\mathcal T$ is Taylor expansion, 
$\mathcal N$ is normalization and $\intr-$ is game semantics. The main
outcome of this paper is to establish the commutation of $(c)$, which appears in our paper under the form of
\autoref{cor:main} -- here, $\simeq$ is the bijection between
weighted sums of normal resource terms and \emph{strategies}, obtained
by lifting to weighted sums our bijection between normal resource terms
and isogmentations. The commutation of $(b)$ follows just using the
linearity of $\intr{-}$ and $\mathcal N$.
In contrast, $(a)$ does not follow from this paper. We expect that it
should be derived from the inductive definitions of both Taylor
expansion and the game semantics interpretation, showing that the
interpretation of each ($η$-long) $λ$-term constructor matches the
Taylor expansion of that constructor; but we prefer to keep this paper
focused on the finitary resource calculus, so we leave that for future
work (see also the next paragraphs and \autoref{sec:conclusion}). 

A significant aspect of our contributions is to take
coefficients into account. This is far
from anecdotal: it requires new methods (we cannot get that \emph{via} the
relational model), it makes the development significantly more complex,
and it is necessary if one expects to apply these tools to a
quantitative setting (e.g., with probabilities) and to provide the basis
of a full game semantical account of quantitative Taylor expansion.

The exact relationship between differential categories~\cite{BCS06,BCLS20} and
resource categories is the subject of ongoing work.
The construction of a resource category from a differential category (more
precisely, from an additive monoidal storage category with
codereliction~\cite{BCLS20}) has been exposed by the first author
~\cite{B24}, although the proposed construction does not preserve
closure.
The reverse direction is less clear.
Indeed, let us stress the fact that the resource calculus is the finitary
fragment of the differential $\lambda$-calculus:
it does not contain the pure $\lambda$-calculus. 
Accordingly, models of the resource calculus are rather related to those of
promotion-free differential linear logic \cite{ER06}:
the exponential modality ($\oc$) need not be a comonad.
From such a model, one can recover an interpretation of the
full differential $\lambda$-calculus \emph{via} Taylor expansion, \emph{provided the
	necessary infinite sums are available}.
So we are convinced (see our concluding remarks in \autoref{sec:conclusion}) that
our category of games does induce a cartesian closed differential
category~\cite{BCS09,BEM10,M12};
more generally, we plan to study how this generalizes to any resource category --
provided the necessary sums of morphisms are available.

\paragraph{Outline of the paper.}
After introducing preliminary notations in \autoref{sec:tuples},
we review the resource calculus in \autoref{sec:syntax};
in particular, we consider a simply typed version whose normal forms are
$η$-long, meaning that each variable is fully applied.
In \autoref{sec:resource_as_aug} we introduce \emph{augmentations},
and show that normal resource terms are in bijective correspondence
with \emph{isogmentations} \emph{i.e.}\ isomorphism classes of augmentations.
We moreover introduce a category of \emph{strategies},
which are weighted sums of isogmentations,
in \autoref{sec:strategies}.
In \autoref{sec:resource_categories} we introduce resource categories, define
the interpretation of the resource calculus in a resource category, and prove
that it is invariant under reduction.
We then show that strategies form a resource category in
\autoref{sec:strategies_resource_cat};
we moreover show in \autoref{subsec:compat_nf} that, up to the bijection between
normal terms and isogmentations, the interpretation of resource terms as
strategies coincides with normalization.
In \autoref{sec:conclusion}, we conclude by discussing the role and side-effects
of $η$-expansion in the resource calculus, in the light of game semantics and
Taylor expansion.

\paragraph{Differences with the conference version.}
A short version of this article~\cite{BCVA23} was published in the
proceedings of the FSCD 2023 conference~\cite{FSCD2023}.
Compared with that former version, the present paper provides a more detailed
account of our constructions, and technical lemmas that were previously left
out are included in appendices.
Moreover, the resource calculus we consider is slightly different:
instead of fully applying each variable and each redex, as was also done by
Tsukada and Ong~\cite{TO16}, we only impose this constraint on variables.
It turns out that this simplifies the exposition, and still yields the same
$η$-long normal forms.

\section{Preliminaries on tuples and bags}
\label{sec:tuples}

If $X$ is a set, we write $\WordsOf{X}$ for the set of finite lists, or tuples,
of elements of $X$, ranged over by $\vec{a},\vec{b}$, \emph{etc.}
We write $\vec{a}=\tuple{a_1,\dotsc,a_n}$ to list the elements of $\vec a$, of
length $\size{\vec a}=n$.
The empty list is $\emptyseq$, and concatenation is simply juxtaposition,
\emph{e.g.}, $\vec{a}\vec{b}$.

We write $\Mf(X)$ for the set of finite multisets of elements of $X$, which we
call \definitive{bags}, ranged over by $\bag{a}, \bag{b}$, \emph{etc.}
We write $\bag{a}=\mset{a_1, \dots, a_n}$ for the bag induced by the list
$\vec{a}=\tuple{a_1,\dotsc,a_n}$ of elements: we then say $\vec{a}$ is an
\definitive{enumeration} of $\bag{a}$ in this case.
We write $\emptybag$ for the empty bag, and use $\bagcons$ for bag
concatenation.
We also write $\size{\bag{a}}$ for the length of $\bag{a}$,
\emph{i.e.}\ the length of any enumeration of $\bag{a}$.

We shall often need to \emph{partition} bags, which requires some care.
For $\bag{a}\in\MfOf{X}$ and $k\in\N$, a \definitive{$k$-partitioning} of $\bag{a}$
is a function $p:\{1,\dotsc,\size{\bag{a}}\}\to\{1,\dotsc,k\}$:
we write $p:\bag{a}\splitinto k$.
Given an enumeration $\tuple{a_1,\dotsc,a_n}$ of $\bag{a}$,
the associated \definitive{$k$-partition} is the tuple
$\tuple{\bag{a}\restrict_p 1,\dotsc,\bag{a}\restrict_p k}$,
where we set $\bag{a}\restrict_p i = [a_j \mid p(j) = i]$ for $1\le i \le k$,
so that $\bag{a}=\bag{a}\restrict_p 1\bagcons\cdots\bagcons\bag{a}\restrict_p k$.
The obtained $k$-partition does depend on the chosen enumeration of $\bag{a}$
but, for any function $f : \Mf(X)^k \to \mathcal{M}$ with values in a
commutative monoid $\mathcal{M}$ (noted additively), the finite sum
\begin{eqnarray}
\sum_{\bag{a} \splitinto \bag{a}_1\bagcons\cdots\bagcons\bag{a}_k}
f(\bag{a}_1,\dotsc,\bag{a}_k)
&\eqdef&
\sum_{p : \bag{a} \splitinto k} f(\bag{a}\restrict_p 1, \dotsc,
\bag{a}\restrict_p k)
\quad\in\quad\mathcal{M}
\label{eq:splitinto}
\end{eqnarray}
is independent from the enumeration.
When indexing a sum with
${\bag{a} \splitinto \bag{a}_1\bagcons\cdots\bagcons\bag{a}_k}$
we thus mean to sum over all partitionings $p : \bag{a} \splitinto k$,
using $\bag{a}_i$ as a shorthand for $\bag{a}\restrict_p i$ in each summand.

We will also consider tuples of bags:
we write $\SfOf{X}$ for $\WordsOf{\MfOf{X}}$.
We denote elements of $\SfOf{X}$ as $\vec{a},\vec{b}$, \emph{etc.} just like
for plain tuples,
but we reserve the name \definitive{sequence} for such tuples of bags.

\section{The Simply-Typed $\eta$-Expanded Resource Calculus}
\label{sec:syntax}

\paragraph{Resource calculus.}
The terms of the resource calculus \cite{ER08},
called \definitive{resource terms}, are just like ordinary $\lambda$-terms,
except that the argument in an application is a bag of terms instead of just
one term.
\begin{figure}
    \begin{align*}
      \ResTerms \ni \termA,\termB,\termC,\ldots
      &
      \recdef x\mid \labs{x}{\termA} \mid \rappl{\termA}{\bagB}
      \\
      \ResBags \ni \bagA,\bagB,\bagC,\ldots
      &
      \recdef \mset{\termA_1,\dotsc,\termA_n}
    \end{align*}
    \caption{Syntax of the resource calculus: resource terms and resource bags}
    \label{fig:resourcecalculus}
\end{figure}
We recall the syntax in \autoref{fig:resourcecalculus}:
we denote resource terms by $\termA,\termB,\termC$,
and bags of terms, which we call \definitive{resource bags},
by $\bagA,\bagB,\bagC$, possibly with sub- and superscripts, 
and write $\ResTerms$ for the set of terms.

The dynamics relies on a multilinear variant of substitution, that we will call
\definitive{resource substitution}:
a redex $\rappl*{\labs{\varA}{\termA}}{\bagB}$ reduces to a formal finite
sum $\rsubst{\termA}{\varA}{\bagB}$ of terms, each summand being obtained by
substituting each element of $\bagB$ for exactly one occurrence of $\varA$ in
$\termA$.
The inductive definition is in \autoref{fig:subst},
\begin{figure}[t]
\begin{align*}
  \rsubst{\varB}{\varA}{\bagB}
  & \eqdef \begin{cases}
    \termB&\text{if $\varB=\varA$ and $\bagB=\mset{\termB}$}\\
    \varB&\text{if $\varB\not=\varA$ and $\bagB=\emptybag$}\\
    0&\text{otherwise}\\
  \end{cases}\\
  \rsubst*{\labs{\varC}{\termA}}{\varA}{\bagB}
  &\eqdef
  \labs{\varC}{\rsubst{\termA}{\varA}{\bagB}}
  \\
  \rsubst*{\rappl{\termA}{\bagC}}{\varA}{\bagB}
  &\eqdef
  \sum_{\bagB\splitinto\bagB_1\bagcons\bagB_2}
  \rappl*{\rsubst{\termA}{\varA}{\bagB_1}}*{\rsubst{\bagC}{\varA}{\bagB_2}}
  \\
  \rsubst{\mset{\termA_1,\dotsc,\termA_n}}{\varA}{\bagB}
  &\eqdef
  \sum_{\bagB\splitinto\bagB_1\bagcons\cdots\bagcons\bagB_n}
  \mset{\rsubst{\termA_1}{\varA}{\bagB_1},\dotsc,\rsubst{\termA_n}{\varA}{\bagB_n}}
\end{align*}
\caption{Inductive definition of substitution ($z$ is chosen fresh in
the abstraction case)}
\label{fig:subst}
\end{figure}
relying on an extension of syntactic constructs to finite sums of expressions:
\[
\labs{\varA}{\termsA} \eqdef\sum_{i\in I}\labs{\varA}{\termA_i}
\qquad
\mset{\termsA}\bagcons\bagsB \eqdef\sum_{i\in I} \sum_{j\in J}
\mset{\termA_i}\bagcons{\bagB_j}
\qquad
\rappl{\termsA}{\bagsB} \eqdef\sum_{i\in I} \sum_{j\in J}
\rappl{\termA_i}{\bagB_j}\,,
\]
for $\termsA = \sum_{i\in I} \termA_i$ and $\bagsB= \sum_{j\in J} \bagB_j$.
Note that the application (resp.\ bag) case in \autoref{fig:subst}
uses the notation introduced in \eqref{eq:splitinto},
the target commutative monoid being the set of formal finite sums of
resource terms (resp.\ of resource bags).
Also note that the definition of \(
    \rsubst{\mset{\termA_1,\dotsc,\termA_n}}{\varA}{\bagB}
\) does not depend on the chosen enumeration
\(
    \tuple{\termA_1,\dotsc,\termA_n}
\).
Resource substitution is in turn extended by linearity, setting
$\rsubst{\termsA}{\varA}{\bagsB}
\eqdef\sum_{i\in I}\sum_{j\in J}\rsubst{\termA_i}{\varA}{\bagB_j}$
with the same notations as above.

The actual protagonists of the calculus are thus sums of terms rather than
single terms.
We will generally write $\fSumsOf{X}$ for the set of finite formal sums on set $X$
--
one may equivalently consider $\fSumsOf{X}$ as the free \(\N\)-module generated
by $X$ and define sums by taking (finite) linear combinations of elements
of $X$, with weights in \(\N\), but any such linear combination
can be written as a plain sum;
those may also be considered as finite multisets, but we adopt a distinct
additive notation to avoid confusion with bags.

\begin{figure}[t]
\[
  \begin{prooftree}[center=false]
    \infer0{\rappl*{\labs{\varA}{\termA}}{\bagB}\bred\rsubst{\termA}{\varA}{\bagB}}
  \end{prooftree}
  \qquad
  \begin{prooftree}[center=false]
    \hypo{\termA\bred\termsA'}
    \infer1{\labs{\varA}{\termA}\bred\labs{\varA}{\termsA'}}
  \end{prooftree}
  \qquad
  \begin{prooftree}[center=false]
    \hypo{\termA\bred\termsA'}
    \infer1{\rappl{\termA}{\bagB}\bred\rappl{\termsA'}{\bagB}}
  \end{prooftree}
  \qquad
  \begin{prooftree}[center=false]
    \hypo{\termA\bred\termsA'}
    \infer1{\mset{\termA}\bagcons{\bagB}\bred\mset{\termsA'}\bagcons{\bagB}}
  \end{prooftree}
  \qquad
  \begin{prooftree}[center=false]
    \hypo{\bagB\bred\bagsB'}
    \infer1{
      \rappl{\termA}{\bagB}
      \bred\rappl{\termA}{\bagsB'}
    }
  \end{prooftree}
\]

\caption{Rules of resource reduction}
\label{fig:reduction}
\end{figure}
The \definitive{reduction of resource terms}
$\mathord{\bred}\subseteq\ResTerms\times\fSumsOf{\ResTerms}$ is
defined inductively by the rules of \autoref{fig:reduction}
-- simultaneously with the reduction of resource bags
$\mathord{\bred}\subseteq\ResBags\times\fSumsOf{\ResBags}$.
It is extended to
$\mathord{\bred}\subseteq\fSumsOf{\ResTerms}\times\fSumsOf{\ResTerms}$
by setting $\termsA\bred\termsA'$ whenever $\termsA=\termB+\termsC$ and
$\termsA'=\termsB'+\termsC$ with $\termB\bred\termsB'$.

\begin{thmC}[\cite{ER08}]
    The reduction $\bred$ on $\fSumsOf{\ResTerms}$ is confluent and
    strongly normalizing.
\end{thmC}

\paragraph{Typing and $\eta$-long terms.}
In the remainder of the paper, we will consider a simply typed version of the
resource calculus, based on the usual grammar of types
\[
  \typeA,\typeB,\ldots \recdef \typeo\mid \typeA \to \typeB
\]
with a single base type $\typeo$.
If $\vtypeA=\tuple{\typeA_1,\dotsc,\typeA_n}$, we write
$\vtypeA\to\typeB\eqdef\typeA_1\to \cdots\to\typeA_n\to\typeB
=\typeA_1\to (\cdots\to(\typeA_n\to\typeB)\cdots)$.
Then any type $\typeC$ can be written uniquely as
$\typeC=\vtypeA\to\typeo$.

The above strong normalization result holds in the untyped setting.
We use typing only to enforce a syntactic constraint on terms,
ensuring that normal resource expressions are in $\eta$-long form:
normal terms of type $\vtypeA\to\typeo$ will necessarily have the shape
$\labs{\varA_1}[\ldots]{\labs{\varA_{\size{\vtypeA}}}{\termA}}$
with $s$ a normal term of type $\typeo$, which must then be
a fully applied variable.
This is required to obtain an exact correspondence between normal terms and
isogmentations, because game semantics is inherently extensional.

We fix a type for each variable, so that each type has infinitely many
variables, and write $\varA:\typeA$ when $\typeA$ is the type of $\varA$.
A typing context $\Gamma$ is a finite set of typed variables. As usual
we write it as any enumeration $\varA_1:\typeA_1,\dotsc,\varA_n:\typeA_n$,
abbreviated as $\vvarA:\vtypeA$; we may then also write
$\labs{\vvarA}{\termA}\eqdef\labs{\varA_1}[\dotsc]{\labs{\varA_n}{s}}$.

We call \definitive{resource sequence} any sequence
$\vbagA\in\SfOf{\ResTerms}=\WordsOf{\MfOf{\ResTerms}}$.
Given a term $\termA$ and a resource sequence
$\vbagB=\tuple{\bagB_1,\dotsc,\bagB_k}$,
we also define the application
$\rappl{\termA}{\vbagB}
\,\eqdef\rappl{\termA}{\bagB_1\cdots\bagB_k}
=\rappl*{\cdots(\rappl{\termA}{\bagB_1})\cdots}{\bagB_k}
$.
We extend resource substitution to sequences by setting
\[
  \rsubst{\tuple{\bagA_1,\dotsc,\bagA_n}}{\varA}{\bagB}
  \eqdef
  \sum_{\bagB\splitinto\bagB_1\bagcons\cdots\bagcons\bagB_n}
  \tuple{\rsubst{\bagA_1}{\varA}{\bagB_1},\dotsc,\rsubst{\bagA_n}{\varA}{\bagB_n}}
\]
so that $
  \rsubst*{\rappl{\termA}{\vbagC}}{\varA}{\bagB}
  =
  \sum_{\bagB\splitinto\bagB_1\bagcons\bagB_2}
  \rappl*{\rsubst{\termA}{\varA}{\bagB_1}}*{\rsubst{\vbagC}{\varA}{\bagB_2}}
$, which generalizes the application case of \autoref{fig:subst}.

\begin{figure}[t]
\begin{gather*}
\begin{prooftree}
  \hypo{\Gamma, \varA:\typeA \jugTrm \termA:\typeB}
  \infer1[abs]{\Gamma \jugTrm \labs{\varA}{\termA}:\typeA \to \typeB}
\end{prooftree}
\qquad
\begin{prooftree}
  \hypo{\Gamma \jugTrm \termA :\typeA\to\typeB}
  \hypo{\Gamma \jugBag \bagB  :\typeA}
  \infer2[app]{\Gamma \jugTrm \rappl{\termA}{\bagB}:\typeB}
\end{prooftree}
\\[1em]
\begin{prooftree}
  \hypo{\Gamma, \varA:\vtypeA\to\typeo \jugSeq \vbagB :\vtypeA}
  \infer1[var]{\Gamma, \varA:\vtypeA\to\typeo \jugTrm \rappl{\varA}{\vbagB}:\typeo}
\end{prooftree}
\\[1em]
\begin{prooftree}
  \hypo{\Gamma\jugTrm\termA_1:\typeA}
  \hypo{\cdots}
  \hypo{\Gamma\jugTrm\termA_n:\typeA}
  \infer3[bag]{\Gamma\jugBag\mset{\termA_1,\dotsc,\termA_n}:\typeA}       
\end{prooftree}
\qquad
\begin{prooftree}
  \hypo{\Gamma\jugBag\bagA_1:\typeA_1}
  \hypo{\cdots}
  \hypo{\Gamma\jugBag\bagA_n:\typeA_n}
  \infer3[seq]{\Gamma\jugSeq\tuple{\bagA_1,\dotsc,\bagA_n}:\tuple{\typeA_1,\dotsc,\typeA_n}}
\end{prooftree}
\end{gather*}
\caption{Typing rules for the simply-typed $η$-long resource calculus}
\label{fig:typing}
\end{figure}

The type system is given in \autoref{fig:typing}:
this is a straightforward adaptation of the usual
simple type system for the ordinary $λ$-calculus,
where bags are typed uniformly (all the elements of a bag 
share the same type), with the additional requirement
that variables are fully applied.
Note that there are three kinds of judgements:
$\Gamma\jugTrm\termA:\typeA$ for terms,
$\Gamma\jugBag\bagA:\typeA$ for bags, and
$\Gamma\jugSeq\vbagA:\vtypeA$ for sequences.
For $\X\in\{\Trm,\Bag,\Seq\}$, we write $\XOf{\Gamma}{\typeA}$
for the set of those expressions $\termA$ \emph{s.t.}
$\Gamma\jugX\termA:\typeA$.
We moreover write $\XOfNf{\Gamma}{\typeA}$ for the elements of
$\XOf{\Gamma}{\typeA}$ that are in normal form.

\begin{lem}\label{lem:typing_normal_forms}
    We have $\termA\in\XOfNf{\Gamma}{\typeA}$ iff
    $\Gamma\jugX\termA:\typeA$ is derivable 
    without using the application rule $\rulename{app}$.
\end{lem}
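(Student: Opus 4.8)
The plan is to prove both directions of the equivalence by a simultaneous structural induction on the three kinds of expressions (terms, bags, sequences), exploiting the fact that the grammar of normal forms is essentially generated by the application-free fragment of the type system.

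First I would spell out what ``$\termA$ is in normal form'' means concretely: since the reduction rules of \cref{fig:reduction} are closed under all syntactic constructors, an expression is normal iff it contains no redex $\rappl*{\labs{\varA}{\termA}}{\bagB}$ as a subexpression. So for the left-to-right direction, assuming $\termA\in\XOfNf{\Gamma}{\typeA}$, I would take any typing derivation of $\Gamma\jugX\termA:\typeA$ and argue that the $\rulename{app}$ rule is never needed. The key observation is that an occurrence of $\rulename{app}$ in the derivation types a subterm $\rappl{\termB}{\bagC}$ where $\Gamma'\jugTrm\termB:\typeA'\to\typeB'$; by inspecting which rules can produce a term of arrow type — only $\rulename{abs}$ (giving $\labs{\varA}{\termB'}$, hence a redex, contradicting normality) or possibly a chain going back up — one shows $\termB$ must itself be built without ever closing a $\lambda$ at the head, so it is of the form $\rappl{\varA}{\vbagB}$ or a larger application $\rappl{\varA}{\vbagB\bagcons\ldots}$, which however has base type, not arrow type, because the $\rulename{var}$ rule only delivers type $\typeo$. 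This forces a contradiction, so no $\rulename{app}$ occurs. I would phrase this cleanly by proving, as the induction invariant, that a normal term of arrow type must have the form $\labs{\varA}{\termA'}$ with $\termA'$ normal, and a normal term of base type $\typeo$ must have the form $\rappl{\varA}{\vbagB}$ with the $\bagB_i$ built from normal terms — exactly matching the shape of derivations using only $\rulename{abs}$, $\rulename{var}$, $\rulename{bag}$, $\rulename{seq}$.

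For the right-to-left direction, I would argue by induction on an application-free derivation that the expression it types is normal: the rules $\rulename{abs}$, $\rulename{var}$, $\rulename{bag}$, $\rulename{seq}$ each build a larger expression from normal sub-expressions in a way that creates no new redex (in particular $\rulename{var}$ produces $\rappl{\varA}{\vbagB}$ whose head is a variable, not an abstraction, so the applications it introduces are not redexes), and by the inductive hypothesis the sub-expressions are themselves redex-free; hence the whole expression is normal. One must also check it lies in $\XOf{\Gamma}{\typeA}$, which is immediate since the application-free derivation is in particular a derivation.

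The main obstacle I anticipate is purely bookkeeping: the statement bundles together three syntactic categories, so the induction has to be set up carefully so that the invariant for terms (distinguishing the arrow-type and base-type cases) feeds correctly into the bag and sequence cases and back. There is also a subtlety in the left-to-right direction in ruling out ``spurious'' uses of $\rulename{app}$ that type a subterm of base type — e.g.\ one might worry about $\rappl{(\rappl{\varA}{\vbagB})}{\bagC}$; but this is ill-typed unless $\rappl{\varA}{\vbagB}$ has arrow type, which it cannot, since $\rulename{var}$ only yields type $\typeo$ — so the type system itself does the work, and the argument goes through. No deep idea is needed beyond a disciplined case analysis on the last rule of a derivation together with the head-form analysis of normal terms.
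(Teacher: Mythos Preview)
Your proposal is correct and rests on the same key observation as the paper: the left premise of any use of $\rulename{app}$ has arrow type, and since $\rulename{var}$ only yields type $\typeo$, that premise must come from $\rulename{abs}$, exposing a redex. The paper packages this more directly by considering a \emph{minimal} subderivation that uses $\rulename{app}$; its premises are then $\rulename{app}$-free by minimality, so the left one is forced to be $\rulename{abs}$ in one step, and the redex appears immediately---this sidesteps the structural induction and shape invariants you set up, and the bookkeeping you anticipate. Your treatment of the converse (an $\rulename{app}$-free derivation types a normal expression) is fine; the paper leaves that direction implicit.
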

\begin{proof}
    Given a derivation tree for $\Gamma\jugX\termA:\typeA$ using rule
    $\rulename{app}$ at least once,
    consider a minimal subderivation with this property:
    it must have an instance of $\rulename{app}$ at its root,
    and its premises are derived without $\rulename{app}$.
    The left premise must thus be derived by $\rulename{abs}$:
    we have ruled out $\rulename{app}$, and the conclusion of
    $\rulename{var}$ is never an arrow type.
    We thus obtain a redex.
\end{proof}

\begin{cor}
    If $\termA\in\TrmOfNf{\Gamma}{\typeA\to\typeB}$, then
    we can write $\termA=\labs{\varA}{\termB}$ with
    $\varA:\typeA$ and $\termB\in\TrmOfNf{\Gamma,\varA:\typeA}{\typeB}$.
    And if $\termA\in\TrmOfNf{\Gamma}{\typeo}$, then
    we can write $\termA=\rappl{\varB}{\vbagC}$
    with $\varB:\vtypeC\to\typeo\in\Gamma$
    and $\vbagC\in\SeqOfNf{\Gamma}{\vtypeC}$.
\end{cor}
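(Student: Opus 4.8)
The plan is to derive this corollary directly from \cref{lem:typing_normal_forms} by case analysis on the last rule of a derivation of $\Gamma\jugTrm\termA:\typeA$ that avoids $\rulename{app}$.

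First I would observe that, by \cref{lem:typing_normal_forms}, since $\termA\in\TrmOfNf{\Gamma}{\typeA\to\typeB}$, there is a derivation of $\Gamma\jugTrm\termA:\typeA\to\typeB$ that does not use $\rulename{app}$. The judgement is a term judgement $\jugTrm$, so the last rule is among $\rulename{abs}$ and $\rulename{var}$. It cannot be $\rulename{var}$, since the conclusion of $\rulename{var}$ always has base type $\typeo$, whereas $\typeA\to\typeB$ is an arrow type. Hence the last rule is $\rulename{abs}$, which forces $\termA=\labs{\varA}{\termB}$ for some variable $\varA$ and term $\termB$, with a subderivation of $\Gamma,\varA:\typeA\jugTrm\termB:\typeB$ that again does not use $\rulename{app}$ (being a subderivation of an $\rulename{app}$-free derivation). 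Applying the ``if'' direction of \cref{lem:typing_normal_forms} to this subderivation, we get $\termB\in\TrmOfNf{\Gamma,\varA:\typeA}{\typeB}$, as required. (One may rename $\varA$ so that its declared type is exactly $\typeA$; the type system fixes one type per variable, but infinitely many variables have each type, so such a choice is always available.)

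For the second statement, $\termA\in\TrmOfNf{\Gamma}{\typeo}$ yields, again via \cref{lem:typing_normal_forms}, an $\rulename{app}$-free derivation of $\Gamma\jugTrm\termA:\typeo$. The last rule is $\rulename{abs}$ or $\rulename{var}$; it cannot be $\rulename{abs}$ since $\typeo$ is not an arrow type. So it is $\rulename{var}$, forcing $\Gamma=\Gamma',\varB:\vtypeC\to\typeo$ and $\termA=\rappl{\varB}{\vbagC}$ with a subderivation of $\Gamma\jugSeq\vbagC:\vtypeC$ avoiding $\rulename{app}$; by \cref{lem:typing_normal_forms} for the sequence judgement, $\vbagC\in\SeqOfNf{\Gamma}{\vtypeC}$. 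Rewriting $\Gamma$ as $\Gamma',\varB:\vtypeC\to\typeo$ — equivalently stating $\varB:\vtypeC\to\typeo\in\Gamma$ — gives the claimed form.

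I do not expect a genuine obstacle here: the argument is a routine inversion, and the only subtlety is the bookkeeping that subderivations of an $\rulename{app}$-free derivation are themselves $\rulename{app}$-free, so that \cref{lem:typing_normal_forms} can be invoked on them. If anything needs care, it is the implicit use of $\alpha$-renaming to ensure the bound variable in the $\rulename{abs}$ case has exactly the type $\typeA$ appearing in the statement, rather than merely a variable whose type is isomorphic in shape; this is harmless given that each type has infinitely many variables.
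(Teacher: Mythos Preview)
Your argument is correct and is exactly the inversion the paper intends: the corollary is stated without proof, as an immediate consequence of \cref{lem:typing_normal_forms}, and your case analysis on the last rule of an $\rulename{app}$-free derivation is the natural way to spell it out. The only remark is that your caveat about $\alpha$-renaming is unnecessary: since each variable has a fixed type and the $\rulename{abs}$ rule's premise is $\Gamma,\varA:\typeA\jugTrm\termB:\typeB$, the bound variable $\varA$ already has type $\typeA$ as soon as the conclusion has type $\typeA\to\typeB$.
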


Now we extend the type system to finite sums of terms,
by setting $\Gamma\jugX\sum_{i\in I}\termA_i:\typeA$
if $\Gamma\jugX\termA_i:\typeA$ for each $i\in I$;
and we write $\SXOf{\Gamma}{\typeA}$ for
$\fSumsOf{\XOf{\Gamma}{\typeA}}$. 
This type system enjoys subject reduction {w.r.t.}~$\bred$.
As is usual, the key result for subject reduction is a substitution lemma.
\begin{lem}
    If $\termA\in\XOf{\Gamma,\varA:\typeB}{\typeA}$ and
    $\bagB\in\BagOf{\Gamma}{\typeB}$ then
    $\rsubst{\termA}{\varA}{\bagB}\in\SXOf{\Gamma}{\typeA}$.
\end{lem}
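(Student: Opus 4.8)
The plan is a routine induction on the derivation of $\Gamma,\varA:\typeB\jugX\termA:\typeA$, carried out simultaneously for the three judgement forms $\X\in\{\Trm,\Bag,\Seq\}$; the simultaneity is unavoidable, since resource substitution on bags and on sequences is defined by mutual recursion with that on terms. Along the way I would freely use a handful of easy facts: the type system admits weakening; bag typing is uniform, so every sub-bag of a bag in $\BagOf{\Gamma}{\typeB}$ is again in $\BagOf{\Gamma}{\typeB}$, hence the induction hypothesis applies to every piece $\bagB_i$ of a partition of $\bagB$; $\SXOf{\Gamma}{\typeA}$ contains $0$ (the empty sum) and is closed under finite sums; and the extensions of the constructors to finite sums recalled after \cref{fig:subst} are (multi)linear and preserve typing, so that each ``extended'' rule $\rulename{abs}$, $\rulename{app}$, $\rulename{bag}$, $\rulename{var}$, $\rulename{seq}$ for sums of expressions is derivable from its base counterpart. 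I would then split cases on the last rule of the derivation.

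For $\rulename{abs}$, we have $\termA=\labs{\varC}{\termA'}$ and $\typeA=\typeA_1\to\typeA_2$; taking $\varC$ fresh (by $\alpha$-renaming we may assume $\varC\ne\varA$ and $\varC$ not free in $\bagB$, as in \cref{fig:subst}), $\rsubst*{\labs{\varC}{\termA'}}{\varA}{\bagB}=\labs{\varC}{\rsubst{\termA'}{\varA}{\bagB}}$, and weakening $\bagB$ into the context $\Gamma,\varC:\typeA_1$ lets the induction hypothesis on $\termA'$ and the extended $\rulename{abs}$ rule conclude. For $\rulename{app}$, $\termA=\rappl{\termA'}{\bagC'}$ with function part of type $\typeA'\to\typeA$ and argument part of type $\typeA'$, and $\rsubst*{\rappl{\termA'}{\bagC'}}{\varA}{\bagB}=\sum_{\bagB\splitinto\bagB_1\bagcons\bagB_2}\rappl*{\rsubst{\termA'}{\varA}{\bagB_1}}*{\rsubst{\bagC'}{\varA}{\bagB_2}}$; the induction hypotheses place the two substituted components in $\STrmOf{\Gamma}{\typeA'\to\typeA}$ and $\SBagOf{\Gamma}{\typeA'}$, so each summand is in $\STrmOf{\Gamma}{\typeA}$ by the extended $\rulename{app}$ rule, and hence so is the sum. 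The rules $\rulename{bag}$ and $\rulename{seq}$ are handled in exactly the same way, partitioning $\bagB$ into as many parts as the bag (resp.\ sequence) has components and reassembling each summand with the extended $\rulename{bag}$ (resp.\ $\rulename{seq}$) rule.

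The case requiring a little thought is $\rulename{var}$: here $\termA=\rappl{\varB}{\vbagC}$ with $\varB:\vtypeC\to\typeo$ in the context $\Gamma,\varA:\typeB$, with $\typeA=\typeo$, and with premise $\jugSeq\vbagC:\vtypeC$. By the generalized substitution formula recalled before \cref{fig:typing}, $\rsubst*{\rappl{\varB}{\vbagC}}{\varA}{\bagB}=\sum_{\bagB\splitinto\bagB_1\bagcons\bagB_2}\rappl*{\rsubst{\varB}{\varA}{\bagB_1}}*{\rsubst{\vbagC}{\varA}{\bagB_2}}$. By \cref{fig:subst}, in a given summand $\rsubst{\varB}{\varA}{\bagB_1}$ is $0$ unless either $\varB\ne\varA$ and $\bagB_1=\emptybag$, in which case it equals $\varB$ and has type $\vtypeC\to\typeo$ in $\Gamma$, or $\varB=\varA$ and $\bagB_1=\mset{\termB}$ for some term $\termB$ — and then $\typeB=\vtypeC\to\typeo$, so $\termB$ has type $\vtypeC\to\typeo$ in $\Gamma$ since $\bagB\in\BagOf{\Gamma}{\typeB}$; either way $\rsubst{\varB}{\varA}{\bagB_1}\in\STrmOf{\Gamma}{\vtypeC\to\typeo}$. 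The induction hypothesis on the premise gives $\rsubst{\vbagC}{\varA}{\bagB_2}\in\SSeqOf{\Gamma}{\vtypeC}$, whence $\rappl*{\rsubst{\varB}{\varA}{\bagB_1}}*{\rsubst{\vbagC}{\varA}{\bagB_2}}\in\STrmOf{\Gamma}{\typeo}$ by the typing rules and linearity, and so is the whole sum.

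The main (if modest) obstacle is precisely this $\rulename{var}$ case: it is the need to substitute for the head variable $\varA$ itself that forces one to check both that every element of the bag $\bagB$ carries exactly the type $\typeB=\vtypeC\to\typeo$ required for it to be re-applied to the substituted argument sequence, and that partitionings handing the head an ill-shaped bag — a non-singleton when $\varB=\varA$, or a nonempty bag when $\varB\ne\varA$ — simply contribute $0$. All the remaining work is bookkeeping around the partition-sum notation of~(\ref{eq:splitinto}).
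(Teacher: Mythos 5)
Your proof is correct and follows exactly the route the paper intends: the paper's own proof is just the one-line remark ``straightforward, by mutual induction on the three syntactic kinds,'' and your argument is a fleshed-out version of that same mutual induction on typing derivations. Your careful handling of the $\rulename{var}$ case — including the observation that when the head variable is the one being substituted, the resulting summands are typed via (iterated) $\rulename{app}$ rather than $\rulename{var}$, which is precisely why the system keeps that rule — is the only point needing thought, and you got it right.
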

\begin{proof}
    Straightforward, by mutual induction on the three syntactic kinds.
\end{proof}

\begin{lem}[Subject reduction]
  If $\termsA\in\SXOf{\Gamma}{\typeA}$
  and $\termsA\bred\termsA'$ then
  $\termsA'\in\SXOf{\Gamma}{\typeA}$.
\end{lem}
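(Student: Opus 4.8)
The plan is to prove subject reduction by induction on the derivation of the reduction $\termsA \bred \termsA'$, using the rules of \cref{fig:reduction}, and to reduce immediately to the case of a single term being reduced. Indeed, since $\bred$ on $\fSumsOf{\ResTerms}$ is defined by letting $\termB + \termsC \bred \termsB' + \termsC$ whenever $\termB \bred \termsB'$, and since by definition $\Gamma \jugX \termsA : \typeA$ means each summand is typable, it suffices to show: if $\termB \in \TrmOf{\Gamma}{\typeA}$ and $\termB \bred \termsB'$ (as a single-term reduction), then $\termsB' \in \STrmOf{\Gamma}{\typeA}$ — and similarly for the bag-level reduction judgement. So the real content is an induction on the derivation of the one-step reduction of a term or a bag.

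The base case is the root redex rule: $\rappl*{\labs{\varA}{\termA}}{\bagB} \bred \rsubst{\termA}{\varA}{\bagB}$. Here, from $\Gamma \jugTrm \rappl*{\labs{\varA}{\termA}}{\bagB} : \typeB$, inversion of the typing rules $\rulename{app}$ and $\rulename{abs}$ gives $\typeA$ with $\Gamma, \varA : \typeA \jugTrm \termA : \typeB$ and $\Gamma \jugBag \bagB : \typeA$; then the substitution lemma (the immediately preceding \cref{lem:typing_normal_forms}-adjacent lemma, i.e. ``If $\termA\in\XOf{\Gamma,\varA:\typeB}{\typeA}$ and $\bagB\in\BagOf{\Gamma}{\typeB}$ then $\rsubst{\termA}{\varA}{\bagB}\in\SXOf{\Gamma}{\typeA}$'') delivers $\rsubst{\termA}{\varA}{\bagB} \in \STrmOf{\Gamma}{\typeB}$ directly. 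For the inductive (congruence) cases — reduction under $\lambda$, under the function side of an application, under a bag in application position, and reduction of one element of a bag — each follows by inversion of the corresponding typing rule, applying the induction hypothesis to the reduced subexpression to get a sum of well-typed subexpressions, and then re-applying the same typing rule to each summand, using the linear extension conventions $\labs{\varA}{\termsA} = \sum_i \labs{\varA}{\termA_i}$, $\rappl{\termsA}{\bagB} = \sum_i \rappl{\termA_i}{\bagB}$, $\mset{\termsA}\bagcons\bagB = \sum_i \mset{\termA_i}\bagcons\bagB$ to see that the result is indeed typable in the sense of the extended type system. One subtlety worth spelling out: in the $\rulename{var}$ rule, applications appear bundled inside a resource sequence $\rappl{\varA}{\vbagB}$, so when reduction occurs inside one of the bags $\bagB_i$ of $\vbagB$, we use $\rulename{seq}$ and $\rulename{var}$ together, again distributing over the resulting sum.

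Since this is a mutual induction over the term-level and bag-level reduction relations (which are defined simultaneously), the statement should really be proved in the strengthened mutual form covering both $\termB \bred \termsB'$ for terms and $\bagB \bred \bagsB'$ for bags at once; the sequence level does not carry its own reduction relation but shows up inside $\rulename{var}$ as just noted. I do not expect any genuine obstacle here: all the cases are mechanical inversions followed by re-application of typing rules, and the only nontrivial ingredient — handling the substitution in the base case — has already been isolated as the preceding substitution lemma. If anything, the one point requiring minor care is making sure the ``extended'' typing judgement on sums behaves well under the syntactic sugar for sums (distributing constructors over sums), but this is immediate from the definition $\Gamma \jugX \sum_{i} \termA_i : \typeA \iff \forall i,\ \Gamma \jugX \termA_i : \typeA$.
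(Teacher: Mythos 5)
Your proposal is correct and follows essentially the same route as the paper's (much terser) proof: reduce to the single-term/bag case, induct on the derivation of the one-step reduction, dispatch the redex case with the substitution lemma, and handle the congruence cases by inversion and re-application of the typing rules, extending to sums by the pointwise definition of typing on sums. The extra care you take with the $\rulename{var}$ rule is a reasonable elaboration of what the paper calls ``contextuality'' and does not change the argument.
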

\begin{proof}
    We first treat the case of 
    $\termsA=\termA\in\XOf{\Gamma}{\typeA}$
    by induction on the definition of the reduction $\termA\bred\termsA'$:
    the case of a redex is by the substitution lemma, and the other cases
    follow by contextuality.
    The extension to sums is straightforward.
\end{proof}

Note that for the rest of this paper, all the resource terms we
consider will be typed in the system above.
By \emph{normal resource term}, we will thus always mean \emph{normal
$\eta$-long typed resource term}.

\section{Resource Terms as Augmentations}
\label{sec:resource_as_aug}

\begin{figure}[t]
    \[
    \mathtt{}
    \begin{tikzcd}[gs header,column sep=0pt,row sep=-3pt]
        ((& o &\tto& o &)\tto(& o &\tto& o &)\tto& o &)\tto& o
        \\[2pt]
        &&&&&&&&&&&\qu^- \\
        &&&&&&&&&\qu^+ \ar[urr,sdep] \\
        &&&\qu^- \ar[urrrrrr,sdep]\\
        &\qu^+   \ar[urr,sdep]\\
        &&&&&&&\qu^- \ar[uuurr,sdep]\\
        &&&&&&&&&\qu^+ \ar[uuuuurr,sdep]\\
        &&&\qu^- \ar[uuuuurrrrrr,sdep]\\
        &\qu^+   \ar[urr,sdep]\\
    \end{tikzcd}
    \qquad
    \begin{tikzcd}[gs header,column sep=0pt,row sep=-3pt]
        ((& o &\tto& o &)\tto(& o &\tto& o &)\tto& o &)\tto& o
        \\[2pt]
        &&&&&&&&&&&\qu^- \\
        &&&&&&&&&\qu^+ \ar[urr,sdep] \\
        &&&\qu^- \ar[urrrrrr,sdep]\\
        &\qu^+   \ar[urr,sdep]\\
        &&&\qu^- \ar[uuurrrrrr,sdep]\\
        &\qu^+   \ar[urr,sdep]\\
        &&&&&&&\qu^- \ar[uuuuurr,sdep]\\
        &&&&&&&&&\qu^+ \ar[uuuuuuurr,sdep]\\
    \end{tikzcd}
    \]
    \caption{Two homotopic plays in HO games}
    \label{fig:ho_plays_homotopy}
\end{figure}

Game semantics represents programs by recording all their
interactions with an execution environment. This is formalized in a
game-theoretic language: a program is represented as a strategy in a
two-player game. The first player (corresponding to the program under
scrutiny) is called \emph{Player}, associated with the \emph{positive}
polarity; while the second (corresponding to the execution environment)
is called Opponent, associated with the \emph{negative} polarity.

\paragraph{Plays in game semantics.}
In Hyland-Ong game semantics \cite{HO00}, this leads to a formalization of
executions as \emph{plays}, drawn in diagrams such as in
\autoref{fig:ho_plays_homotopy}, read temporally
from top to bottom.
Nodes are called \emph{moves} or \emph{events}, negative (from Opponent)
or positive (from Player)
-- each corresponds to a resource call,
and the dotted lines, called \emph{justification pointers}, carry
the hierarchical relationship between those calls.

Both diagrams in \autoref{fig:ho_plays_homotopy} represent plays for
a simply typed $λ$-term of the shape
\[
    \vdash \labs{f^{(o \to o) \to (o \to o) \to
o}}{\appl{f}{(\labs{x^o}{x})\,(\labs{y^o}{\appl{f}{M_1\,M_2}})}}
    : ((o \to o) \to (o \to o) \to o) \to o
\]
with $M_1,M_2$ unspecified terms of type $o\to o$
(in which \(y\) and \(f\) may occur free).
For instance, the first diagram of \autoref{fig:ho_plays_homotopy} reads as
follows:
Opponent starts computation with the initial $\qu^-$, to which Player reacts
with the first $\qu^+$, corresponding to calling $f$.
With $\qu^-$ on the third line, Opponent prompts $f$ to inspect its first
argument, to which Player responds with $\qu^+$ on the fourth line,
corresponding to a call to $x$.
Subsequently, Opponent evaluates the second argument of $f$, which also
has a head occurrence of $f$: Player thus answers by calling
$f$ again but, this time, Opponent does not inspect the arguments
$M_1$ and $M_2$ fed to $f$.
Instead, Opponent evaluates the first argument of the first call to $f$
again. This addresses the same subterm as the third move, and so Player
reacts as before (the fact that Player \emph{must} react in the same
way, if the program under scrutiny is pure, corresponds to the notion
of \emph{innocence} in Hyland-Ong games). The second diagram of
\autoref{fig:ho_plays_homotopy} is very similar, the only difference being
that Opponent starts with evaluating twice the first argument of $f$,
and then evaluates the second one. 

\paragraph{Plays and resource terms.}
In \cite{TO16}, seeking a syntactic counterpart to the \emph{plays} of HO
games, Tsukada and Ong state:
\emph{``plays in HO/N-games are terms of a well-known and important calculus,
the resource calculus''}.
This is natural as both game semantics and the resource calculus are
quantitative and represent resource usage explicitly.
In the first play of \autoref{fig:ho_plays_homotopy}, the first argument of 
the first call to $f$ is evaluated \emph{twice} while the second one is
evaluated \emph{once}; and none of the arguments to the second call to $f$ is
evaluated:
this play is accordingly identified with the resource term
\begin{equation}
    \jugTrm
\labs{f}{\appl{f}{\mset{\labs{x}{x},\labs{x}{x}}\,\mset{\labs{y}{\appl{f}{[]\,[]}}}}}
    : ((o \to o) \to (o \to o) \to o) \to o
    \,.
    \label{eq:res_term}
\end{equation}

However, the other play of \autoref{fig:ho_plays_homotopy} \emph{also}
corresponds to that resource term!

Tsukada and Ong actually establish a bijection between 
(simply typed, normal and $η$-long) resource terms
and plays \emph{up to} Melliès' \emph{homotopy relation}
\cite{M06}, relating plays which,
like those of \autoref{fig:ho_plays_homotopy}, only differ via Opponent's
scheduling.

\begin{figure}[t]
    \[\begin{tikzcd}[gs header]
        ((&&o&&\tto&&o&&)\tto(&o&\tto&o&)\tto&o&)\tto&o
        \\[2pt]
        &&&&&&&&&&&&&&&\qu^-
        \\
        &&&&&&&&&&&&&\qu^+ \ar[urr,sdep] \ar[urr,ddep]
        \\[.6cm]
        &&&&&\qu^- \ar[urrrrrrrr,sdep] \ar[urrrrrrrr,ddep]
        &&\qu^- \ar[urrrrrr,sdep] \ar[urrrrrr,ddep]
        &&&&\qu^- \ar[urr,sdep] \ar[urr,ddep]
        \\[.4cm]
        &\qu^+   \ar[urrrr,sdep] \ar[urrrr,ddep]
        &&\qu^+   \ar[urrrr,sdep] \ar[urrrr,ddep]
        &&&&&&&&&&\qu^+ \ar[uuurr,sdep] \ar[ull,ddep]
        \\
    \end{tikzcd}\]
    \caption{An augmentation}
    \label{fig:augmentations}
\end{figure}

The quotient of plays up to homotopy equivalence can be given
an explicit representation.
In \cite{BC21}, Blondeau-Patissier and
Clairambault introduced a \emph{causal} representation of innocent
strategies
(inspired from \emph{concurrent games}
\cite{CCRW17, CCW19}
-- see also \cite{ST17})
as a technical tool to prove a positional injectivity
theorem for innocent strategies.
There, a strategy is not a set of plays, but instead gathers diagrams such as
in \autoref{fig:augmentations}, in which the trained eye can read
exactly the same data as in the resource term \eqref{eq:res_term}:
the model replaces the chronological \emph{plays} of game semantics with
causal structures called \emph{augmentations}, of which the plays are
just particular linearizations. Thus as the first contribution of this
paper, we recast Tsukada and Ong's correspondence as a bijection between
normal resource terms and augmentations.

\subsection{Arenas, Configurations, Augmentations}
\label{subsec:def_aug}
Before defining \emph{augmentations} (and their isomorphism classes,
\emph{isogmentations}), we need to introduce \emph{arenas} and
\emph{configurations} (and their isomorphism classes, \emph{positions}).
As with HO games, arenas correspond to types: 
they represent sets of available computational events for Opponent and Player 
given the type, and they feature a static hierarchy between events, playing a
similar role to the enabling relation of HO arenas.
Positions, \emph{i.e.}\ configurations \emph{up to symmetry},
are the collapse of plays in the relational model:
configurations feature only events along with a \emph{static} causal
order representing the hierarchical dependencies between function
calls, without any temporal information.  Finally, augmentations are
configurations ``augmented'' with another, \emph{dynamic} causal order,
corresponding to the causal links appearing in normal resource terms --
akin to finite Böhm-like trees~\cite[Section 10.1]{B85}.

\paragraph{Arenas.}
First we introduce the representation of types: \emph{arenas}.
Given a poset \(A=\tuple{\ev{A},\leq_A}\),
we write \(\min(A)\) (resp.\ \(\max(A)\)) for the set of its 
minimal (resp.\ maximal) elements; and
we write $a_1 \imc_A a_2$ when $a_1 <_A a_2$ with no element strictly in
between ($a_1$ is \definitive{immediately} below $a_2$).
We call \definitive{forest} any poset \(A\) such that
the principal ideal \(\{a' \in \ev{A} \mid a' \leq_A a\}\)
of each \(a\in\ev{A}\) is finite and linearly ordered by $\leq_A$.

\begin{defi}\label{def:arena}
An \definitive{arena} is a triple $A = \tuple{\ev{A}, \leq_A, \pol_A}$ where
$\tuple{\ev{A}, \leq_A}$ is a (countable) forest, and $\pol_A :
\ev{A} \to \{-, +\}$ is a \definitive{polarity function},
which is
\begin{center}\begin{tabular}{rl}
    \emph{alternating:}
    & for all $a_1 \imc_A a_2$, $\pol_A(a_1) \neq \pol_A(a_2)$.
\end{tabular}\end{center}
In a \definitive{negative arena}, we moreover require $\pol_A(a) = -$ for all
$a \in \min(A)$.
\end{defi}

Elements of $\ev{A}$ are called \definitive{events} or \definitive{moves} interchangeably.
An \definitive{isomorphism} $\varphi : A \iso B$ of arenas is a bijection
between events preserving and reflecting all the structure.

\begin{figure}[t]
    \[\begin{tikzcd}[gs header,row sep=.5ex,]
        ((& o &\tto& o &)\tto(& o &\tto& o &)\tto& o &)\tto& o
        \\
          &&&&&&&&&&&\qu^-
        \\
        &&&&&&&&&\qu^+ \ar[urr,sdep]
        \\
        &&&\qu^- \ar[urrrrrr,sdep]
        &&&&\qu^- \ar[urr,sdep]
        \\
        &\qu^+   \ar[urr,sdep]
        &&&&\qu^+ \ar[urr,sdep]
        \\
    \end{tikzcd}\]
    \caption{The arena $\intr{(( o \to o )\to( o \to o )\to o )\to o}$}
    \label{fig:ex_arena}
\end{figure}

Arenas present computational events with their causal dependencies:
positive moves for Player, and negative moves for Opponent.
We often annotate moves with their polarity.
When depicting arenas we draw the immediate causality $\imc$ as dotted lines,
read from top to bottom.
The (negative) arena corresponding to the type
$(( o \to o )\to( o \to o )\to o)\to o$ is depicted in \autoref{fig:ex_arena}:
we present the construction of the arena $\intr{\typeA}$ associated 
with a type $\typeA$ in the next paragraph.

\paragraph{Constructions.}
To interpret elaborate datatypes, we define some constructors on
arenas. 
We write $X+Y$ for the disjoint union $(\{1\} \times X) \cup (\{2\}
\times Y)$ of sets.

\begin{defi}\label{def:arena_tensor}
    The \definitive{tensor} of arenas $A_1$ and $A_2$ is defined as follows:
    \[
        \begin{array}{rcl}
                \ev{A_1 \tensor A_2} &~~\eqdef~~& \ev{A_1} + \ev{A_2}\\
                (i, a) \leq_{A_1 \tensor A_2} (j, b) &\iffdef&
                i = j ~~ \wedge ~~ a \leq_{A_i} b\\
                \pol_{A_1 \tensor A_2}(i, a) &\eqdef& \pol_{A_i}(a)\,.
        \end{array}
    \]
\end{defi}

An arena $A$ is \textbf{pointed} if it has exactly one minimal event,
written $\init(A)$. The tensor directly extends to countable arity, and each 
arena decomposes as $A \iso \tensor_{i \in \min(A)} A_i$ with each
$A_i$ pointed with $\init(A_i)=i$.
We set $A^{\bot}$ as $A$ with polarities reversed, and write $A\vdash
B$ for $A^{\bot} \tensor B$.
The arena $A \vdash B$, called the \textbf{hom-game}, is the
ambient game for strategies \emph{from $A$ to $B$}. However, it is not
negative and hence will fail to be an object of our category. To
reinstate negativity we must ensure an additional dependency, as done
in the arrow construction below:

\begin{defi}
    If $A_1$ and $A_2$ are negative arenas with $A_2$ pointed, then the
    \definitive{arrow} $A_1 \tto A_2$ is the pointed negative arena defined as follows:
    \[
        \begin{array}{rcl}
                \ev{A_1\tto A_2} &~~\eqdef~~& \ev{A_1} + \ev{A_2}\\
                (i, a) \leq_{A_1 \tto A_2} (j, b) &\iffdef&
                    (i = j \wedge a \leq_{A_i} b) \vee (i,a)=(2,\init(A_2))\\
                \pol_{A_1 \tto A_2}(1, a) &\eqdef& -\pol_{A_1}(a)
                \\
                \pol_{A_1 \tto A_2}(2, a) &\eqdef& \pol_{A_2}(a)\,.
        \end{array}
    \]

We shall also use the generalization of this construction to $A \tto B$
for $B$ not necessarily pointed. In that case, writing $B \iso
\otimes_{i\in \min(B)} B_i$, we set $A \tto B = \otimes_{i\in \min(B)} A \tto B_i$.
\end{defi}

Finally, $1$ is the empty arena and $o$ has exactly one (negative) move
$\qu$. We interpret types as arenas via $\intr{o} = o$ and $\intr{A\to
B} = \intr{A} \tto \intr{B}$, and contexts via $\intr{\Gamma} =
\otimes_{(x : A) \in \Gamma} \intr{A}$. 

As announced, \autoref{fig:ex_arena} shows the interpretation of
$(( o \to o )\to( o \to o )\to o )\to o$.
Note that several nodes in the figure are drawn with the same label.
We could distinguish them via indices to keep them apart.
But as each move corresponds to an occurrence of an atom in the type, we find
it more convenient to follow the long-standing game semantics convention and
keep moves distinct by attempting to always place them under the corresponding
atom.

\paragraph{Configurations.}
Next we define the \emph{states} reached when playing on arena $A$.
Intuitively, a state is a sub-tree of $A$ but where each branch may be explored
multiple times --
such structures were first introduced by Boudes \cite{B09}
under the name \emph{thick subtrees}.
Here, by analogy with concurrent games \cite{CCRW17}, we call them
\emph{configurations}.

\begin{defi}\label{def:conf}
    A \definitive{configuration} $x \in \conf(A)$ of arena $A$ is $x =
    \tuple{\ev{x}, \leq_x, \display_x}$ such that $\tuple{\ev{x}, \leq_x}$ is a
    finite forest, and where $\display_x$ is a
function $\display_x : \ev{x} \to \ev{A}$ called the
\definitive{display map}, subject to the conditions:
    \[
    \begin{array}{rl}
            \text{\emph{minimality-respecting:}} & \text{for any $a \in \ev{x}$, $a\in\min(x)$
                    iff $\display_x(a)\in\min(A)$,}\\
            \text{\emph{causality-preserving:}} & \text{for all $a_1, a_2 \in
                    \ev{x}$, if $a_1 \imc_x a_2$ then $\display_x(a_1) \imc_A \display_x(a_2)$,}
    \end{array}
    \]
    and $x$ is \definitive{pointed} (and we write $x\in\conf_\bullet(A)$) if it
    has exactly one minimal event.
\end{defi}

A polarity on $x$ is deduced by $\pol(a) =
\pol_A(\display_x(a))$.  We write $a^-$ (resp.\ $a^+$) for $a$ s.t.
$\pol(a) = -$ (resp.\ $\pol(a)=+$).
Ignoring the arrows with triangle heads and keeping only the dotted
lines oriented from top to bottom, the diagram of \autoref{fig:augmentations}
defines a configuration on the arena of \autoref{fig:ex_arena} -- notice that
the branch on the left hand side is explored twice, and so is the first
$\qu^+$.

For $x, y \in \conf(A)$, the sets $\ev{x}$ and $\ev{y}$ are arbitrary
and only related to $A$ via $\display_x$ and $\display_y$ -- their
specific identity is irrelevant. So configurations should be compared
up to \emph{symmetry}: a \definitive{symmetry} $\varphi : x \sym_A y$ is an
order-isomorphism s.t. $\display_y \circ \varphi = \display_x$. Symmetry
classes of configurations are called \definitive{positions}: the set of
positions on $A$ is written $\pos(A)$, and they are ranged over by $\posX,
\posY$, \emph{etc.} (note the change of font).  A position $\posX$ is
\definitive{pointed}, written $\posX \in \pos_\bullet(A)$, if any of its
representatives is. If $x \in \conf(A)$, we write $\ic{x} \in \pos(A)$
for the corresponding position. Reciprocally, if $\posX \in \pos(A)$, we
fix a representative $\rep{\posX} \in \conf(A)$.
In \cite{BC21}, positions were shown to correspond to points in the relational
model (if $o$ is interpreted as a singleton).

If $x \in \conf(A)$ and $y \in \conf(B)$, then $x \tensor y \in \conf(A
\tensor B)$ has events the disjoint union $\ev{x} + \ev{y}$, and
display map inherited.
We define $x \vdash y \in \conf(A \vdash B)$ similarly.
If moreover $A=B$, we define $x*y\in\conf(A)$ with, again,
$\ev{x*y}=\ev{x} + \ev{y}$, and with display map the co-pairing
$[\display_x, \display_y] : \ev{x*y} \to \ev{A}$.

\paragraph{Augmentations.}
We finally define our representation of plays up to homotopy:

\begin{defi}\label{def:augmentation}
An \definitive{augmentation} on arena $A$ is a quadruple $\augQ = \tuple{\ev{\augQ},
        \leq_{\deseq{\augQ}}, \leq_\augQ, \display_\augQ}$, where $\deseq{\augQ} =
\tuple{\ev{\augQ}, \leq_{\deseq{\augQ}},\display_\augQ} \in \conf(A)$, and
$\tuple{\ev{\augQ}, \leq_\augQ}$ is a forest satisfying:
\[
\begin{array}{rcl}
\text{\emph{rule-abiding:}} && \text{for all $a_1, a_2 \in \ev{\augQ}$, if
        $a_1 \leq_{\deseq \augQ} a_2$, then $a_1 \leq_\augQ a_2$,}\\ 
\text{\emph{courteous:}} && \text{for all $a_1 \imc_\augQ a_2$, if
        $\pol(a_1) = +$ or $\pol(a_2) = -$, then $a_1 \imc_{\deseq \augQ}
a_2$,}\\
\text{\emph{deterministic:}} && \text{for all $a^- \imc_\augQ a^+_1$ and
        $a^- \imc_\augQ a^+_2$, we have $a_1 = a_2$,}\\
\text{\emph{$+$-covered:}}
&& \text{for all $a \in\max(\augQ)$, we have
$\pol(a)=+$,}\\
\text{\emph{negative:}}
&\quad& \text{for all $a \in \min(\augQ)$, we have $\pol(a) = -$.}
\end{array}
\]

We then write $\augQ \in \Aug(A)$, and call
$\deseq{\augQ} \in \conf(A)$ the \definitive{desequentialization} of $\augQ$.
Finally, $\augQ$ is \definitive{pointed} if $\min(\augQ)$ is an singleton,
and we then write $\augQ \in \Aug_\bullet(A)$.
\end{defi}

Figure~\ref{fig:augmentations} represents an augmentation $\augQ$ by
showing both relations $\imc_{\deseq \augQ}$ (as dotted lines) and
$\imc_\augQ$ (as arrows with black triangle heads).
An augmentation $\augQ \in \Aug(A)$ \emph{augments} a configuration $x \in
\conf(A)$ by specifying causal constraints imposed by the term: for
each event, the augmentation gives the necessary conditions before it
can be played. Augmentations are analogous to plays: 
\emph{plays} in the Hyland-Ong sense can be recovered via the
alternating linearizations of augmentations
\cite{BC21}.
However, augmentations factor out Opponent's scheduling,
hence give an explicit representation of plays up to homotopy.

\paragraph{Isogmentations.}
Just like for configurations, augmentations should be considered up to isomorphism.
An \definitive{isomorphism} of augmentations $\varphi : \augQ \iso \augP$ is a bijection
preserving and reflecting all the structure.
An \definitive{isogmentation} is an isomorphism class of augmentations,
ranged over by $\isogQ, \isogP$, \emph{etc.}:
we write $\IAug(A)$ (resp.\ $\IAug_\bullet(A)$) for
isogmentations (resp.\ \emph{pointed} isogmentations).
If $\augQ \in \Aug(A)$, we write $\ic{\augQ} \in \IAug(A)$
for its isomorphism class; reciprocally, if $\isogQ \in \IAug(A)$, we
fix a representative $\rep{\isogQ} \in \isogQ$.

\subsection{Isogmentations are Normal Resource Terms}
\label{subsec:isog_resourceterms}

Now we spell out the link between isogmentations and normal resource
terms.
We first show how the structure of each syntactic kind is reflected by
augmentations of the appropriate type: in particular,
terms will be mapped to pointed isogmentations, 
and bags to general isogmentations.
The main result (\autoref{th:bij_aug_term}) follows directly.

\paragraph{Tensors and sequences.} 
To reflect the syntactic formation rule for sequences,
we first show that isogmentations on $A_1 \tensor \dots \tensor A_n$ \emph{are}
tuples of isogmentations on the $A_i$'s.
Consider negative arenas $\Gamma, A_1, \dots, A_n$,
and $\augQ_i \in \Aug(\Gamma \vdash A_i)$ for $1\le i\le n$.
We set $\vec{\augQ}=\tuple{\augQ_i \mid 1 \leq i \leq n} \in
\Aug(\Gamma \vdash \tensor_{1\leq i \leq n} A_i)$ with
\[
\ev{\vec{\augQ}\,} \eqdef \sum_{i=1}^n \ev{\augQ_i}\,,
\qquad
\left\{
\begin{array}{rclcl}
\display_{\vec{\augQ}}(i, m) &\eqdef& (1, g) 
&\quad& \text{if $\display_{\augQ_i}(m) = (1, g)$,}\\
\display_{\vec{\augQ}}(i, m) &\eqdef& (2, (i, a))
&& \text{if $\display_{\augQ_i}(m) = (2, a)$,}
\end{array}
\right.
\]
where $m\in\ev{\augQ_i}$, $g\in\ev{\Gamma}$, $a\in\ev{A_i}$, and
with the two orders $\leq_{\vec{\augQ}}$ and
$\leq_{\deseq{\vec{\augQ}}}$ inherited. It is immediate that
this construction preserves isomorphisms, so that it extends to
isogmentations.

\begin{prop}\label{prop:def_seq}
The previous construction on augmentations induces a bijection
\[
\tuple{-, \dots, -} : \prod_{i=1}^n \IAug(\Gamma \vdash A_i)
\bij
\IAug(\Gamma \vdash \tensor_{1 \leq i \leq n} A_i)
\,.
\]
\end{prop}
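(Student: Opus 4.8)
The plan is to exhibit an explicit inverse to the map $\tuple{-,\dots,-}$ and check that the two composites are identities, all at the level of augmentations (the passage to isogmentations being automatic since every construction involved preserves and reflects isomorphisms). First I would define the inverse on augmentations: given $\augQ \in \Aug(\Gamma \vdash \tensor_{1\le i\le n} A_i)$, I partition its events according to which component $A_i$ they "point into". Formally, since $\Gamma \vdash \tensor_i A_i = \Gamma^\bot \tensor (\tensor_i A_i)$, the display map $\display_\augQ$ sends each event either into $\ev\Gamma$ or into some $\ev{A_i}$; but by the \emph{minimality-respecting} and \emph{causality-preserving} conditions on configurations, together with \emph{rule-abiding} and \emph{negative} for $\augQ$, each event of $\augQ$ lies above a unique minimal event, and the minimal events are negative — in the hom-arena $\Gamma \vdash \tensor_i A_i$ these minimal events are exactly the minimal events of $\Gamma$ and the $\init(A_i)$'s. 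The key structural fact is that the $\leq_\augQ$-tree decomposes as a forest whose trees are each "anchored" in one component, so I get a function assigning to each event a component index $i \in \{1,\dots,n\}$, and restricting $\augQ$ to the events of index $i$ (keeping both orders and the display map, suitably retyped) yields $\augQ_i \in \Aug(\Gamma \vdash A_i)$.

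The main steps, in order, are: (1) verify that the component-assignment is well-defined, i.e.\ that $\leq_\augQ$ never relates events anchored in different components of $\tensor_i A_i$ — this uses \emph{courteous} and the fact that in the tensor there is no causal link across components, combined with \emph{rule-abiding} to propagate the constraint from $\leq_{\deseq\augQ}$; (2) check that each restriction $\augQ_i$ is a genuine augmentation on $\Gamma \vdash A_i$, i.e.\ that all six conditions (configuration: minimality-respecting, causality-preserving; augmentation: rule-abiding, courteous, deterministic, $+$-covered, negative) are inherited by restriction to a union of $\leq_\augQ$-trees — each is routine since the conditions are "local" and restriction to a down-closed-and-up-closed set of a forest preserves them; (3) check the round trip: starting from $\tuple{\augQ_1,\dots,\augQ_n}$, reassembling and re-splitting returns the $\augQ_i$ on the nose, because the display map of $\vec\augQ$ was defined precisely to record the index $i$ in its second component; and conversely, starting from $\augQ$, splitting and reassembling returns $\augQ$ up to the canonical bijection $\ev\augQ \cong \sum_i \ev{\augQ_i}$, which is an isomorphism of augmentations; (4) conclude that the induced maps on isogmentations are mutually inverse, and that both directions respect isomorphism (already noted in the excerpt for the forward direction; for the backward direction it follows since the component-assignment is defined purely from the arena structure and $\display_\augQ$, hence is preserved by any isomorphism $\augQ \iso \augQ'$).

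The step I expect to be the main obstacle is (1): making precise and rigorous the claim that the dynamic order $\leq_\augQ$ respects the tensor decomposition of the target. One has to be careful because $\leq_\augQ$ is \emph{larger} than $\leq_{\deseq\augQ}$, so a priori it could relate an event displayed in $A_i$ to one displayed in $A_j$ even though no such link exists statically. The resolution is that $\leq_\augQ$ is still a forest and, by \emph{negative} together with the shape of $\min(\Gamma \vdash \tensor_i A_i)$, every $\leq_\augQ$-tree has a negative root which is displayed either in $\min(\Gamma)$ or at some $\init(A_i)$; and by \emph{courteous}, moving down one immediate-$\leq_\augQ$ step from a negative event lands on an immediate-$\leq_{\deseq\augQ}$ predecessor, which (by \emph{causality-preserving} and the tensor's order) stays within the same component — an induction up the tree then pins every event to the component of its root. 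I would also need the small observation that an event displayed in $\ev\Gamma$ belongs to whichever component's tree it sits in, which is consistent because $\Gamma \vdash A_i$ contains a full copy of $\Gamma$; so the partition of events is genuinely a partition indexed by $\{1,\dots,n\}$ and not by $\{0,1,\dots,n\}$ with a spurious "$\Gamma$-only" class. Once this structural lemma is in hand, everything else is bookkeeping.
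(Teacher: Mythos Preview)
Your proposal is correct and follows essentially the same approach as the paper: use the forest structure of $\leq_\augQ$ together with the \emph{negative} condition to see that every $\leq_\augQ$-tree has its root displayed to a minimal event of some $A_i$, then partition $\augQ$ accordingly. The paper phrases this as ``injective + surjective'' rather than ``construct an explicit inverse and check round trips'', but that is cosmetic.

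One remark: your step~(1), where you invoke \emph{courteous} and \emph{rule-abiding} to argue that $\leq_\augQ$ cannot relate events ``anchored in different components'', is more work than necessary. Since $\leq_\augQ$ is already required to be a forest, distinct trees are automatically $\leq_\augQ$-incomparable; the only thing to check is that each root is displayed into some $A_i$ (not into $\Gamma$), and this follows immediately from \emph{negative} (roots are negative) plus \emph{rule-abiding} and \emph{minimality-respecting} (so roots are displayed to minimal arena events) plus the polarity of $\Gamma^\bot$ (minimal events there are positive). No induction up the tree using \emph{courteous} is needed. Also, you write ``$\init(A_i)$'' as if each $A_i$ were pointed, but the statement does not assume this; the argument goes through with ``some minimal event of $A_i$'' instead.
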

\begin{proof}
\emph{Injective.} As an isomorphism must preserve $\imc$
and display maps, any isomorphism
\[
\varphi : \tuple{\augQ_i \mid i \in I} \iso
\tuple{\augP_i \mid i \in I}
\]
decomposes uniquely into a sequence of
$\varphi_i : \augQ_i \iso \augP_i$, as required.

\emph{Surjective.}
Consider $\augQ \in \Aug(\Gamma \vdash \tensor_{1\leq i \leq n} A_i)$.
Since $\augQ$ is a forest, $m \in \ev{\augQ}$ has a unique minimal antecedent,
sent by the display map (via condition \emph{negative}) 
to one of the $A_i$'s -- we
say that $m$ is \emph{above $A_i$}. Defining accordingly $\augQ_i$ as
$\augQ$
restricted to the events above $A_i$, we easily construct an isomorphism
$\augQ \iso \tuple{\augQ_i\mid 1 \leq i \leq n}$ as required.
\end{proof}

\paragraph{Bags and pointedness}
The next step is to reflect the typing rule for bags, 
by showing that isogmentations are \emph{bags of pointed isogmentations}.

We start by describing the corresponding construction.
Consider negative arenas $\Gamma$ and $A$, and $\augQ_1, \augQ_2 \in \Aug(\Gamma
\vdash A)$. We set $\augQ_1 * \augQ_2 \in \Aug(\Gamma \vdash A)$ with events
$\ev{\augQ_1 * \augQ_2} = \ev{\augQ_1} + \ev{\augQ_2}$, 
and display
$\display_{\augQ_1 * \augQ_2}(i, m) = \display_{\augQ_i}(m)$,
and the two orders $\leq_{\augQ_1 * \augQ_2}$ and $\leq_{\deseq{\augQ_1 *
\augQ_2}}$ inherited. This generalizes to an $n$-ary operation in the
obvious way, which preserves isomorphisms. The operation induced on
isogmentations is associative and commutative,
and it admits as neutral element the empty isogmentation
$\augOne \in \IAug(\Gamma \vdash A)$ with (a unique
representative with) no event.

\begin{prop}\label{prop:def_bag}
The previous construction on augmentations induces a bijection
\[
    \mset{ - , \dots ,  - }: \Mf(\IAug_\bullet(\Gamma \vdash A))
\bij
\IAug(\Gamma \vdash A)
\,.
\]
\end{prop}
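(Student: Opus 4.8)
The plan is to prove that $\mset{-,\dots,-} : \Mf(\IAug_\bullet(\Gamma \vdash A)) \to \IAug(\Gamma \vdash A)$ is a bijection by exhibiting both an injectivity and a surjectivity argument, much as in \cref{prop:def_seq}, with the key new ingredient being the decomposition of an arbitrary augmentation on $\Gamma \vdash A$ into its \emph{pointed components}. Recall that $A$ is a negative arena, hence $A \iso \tensor_{i \in \min(A)} A_i$ with each $A_i$ pointed; but here we do not split along $\min(A)$ — rather, since $\Gamma \vdash A = \Gamma^\bot \tensor A$ and $\augQ$ is \emph{negative}, every minimal event of $\augQ$ is sent by $\display_{\deseq \augQ}$ to a minimal event of $A$ (the minimal events of $\Gamma^\bot$ are positive, so cannot be minimal in a negative augmentation). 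The forest structure of $\augQ$ then partitions $\ev{\augQ}$ into the principal filters $\{m' \mid m \leq_\augQ m'\}$ of its various minimal events $m \in \min(\augQ)$; restricting $\augQ$ to each such filter yields a pointed augmentation, and collecting these (as a \emph{bag}, since distinct minimal events may well carry isomorphic components) gives the candidate preimage.

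Concretely, I would proceed as follows. First I would check that the $n$-ary operation $\augQ_1 * \dots * \augQ_n$ is well-defined, i.e. that the quadruple built with disjoint-union carrier, co-paired display map, and inherited orders genuinely satisfies all five conditions of \cref{def:augmentation} — this is routine since each condition is ``local'' to a connected component, and $*$ places the $\augQ_i$ in separate components (no new $\imc_\augQ$ or $\imc_{\deseq\augQ}$ edges cross between them, each $\augQ_i$ is already rule-abiding, courteous, deterministic, $+$-covered and negative). The same locality observation shows $*$ preserves isomorphisms and that on isogmentations it is associative and commutative with neutral element $\augOne$, so it does descend to a well-defined map out of the multiset $\Mf(\IAug_\bullet(\Gamma \vdash A))$. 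For \textbf{injectivity}, suppose $\varphi : \augQ_1 * \dots * \augQ_n \iso \augP_1 * \dots * \augP_m$. Since an isomorphism preserves $\leq_\augQ$ and hence $\min(-)$ and the connected components of the forest $\tuple{\ev{-}, \leq_{-}}$, it restricts to a bijection between the components, i.e. a bijection $\pi$ between $\{1,\dots,n\}$ and $\{1,\dots,m\}$ together with isomorphisms $\varphi_i : \augQ_i \iso \augP_{\pi(i)}$; this exhibits the two bags of isogmentations as equal. For \textbf{surjectivity}, given $\augQ \in \Aug(\Gamma \vdash A)$, I would use negativity plus the forest structure as sketched above to write $\ev{\augQ} = \bigsqcup_{m \in \min(\augQ)} F_m$ where $F_m = \{m' \in \ev\augQ \mid m \leq_\augQ m'\}$, define $\augQ_m$ as $\augQ$ restricted to $F_m$, verify $\augQ_m \in \Aug_\bullet(\Gamma \vdash A)$ (pointedness is immediate; the other conditions are inherited by restriction to a filter that is also down-closed \emph{within its component}, using that $\leq_{\deseq\augQ} \,\subseteq\, \leq_\augQ$ so $\leq_{\deseq\augQ}$ also respects the partition), and then construct the evident isomorphism $\augQ \iso \augQ_{m_1} * \dots * \augQ_{m_k}$ for any enumeration $m_1, \dots, m_k$ of $\min(\augQ)$.

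The step I expect to be the main obstacle — or at least the one requiring genuine care rather than bookkeeping — is verifying that the restriction $\augQ_m$ of $\augQ$ to the filter $F_m$ is a legitimate \emph{pointed augmentation}, and in particular that $\deseq{\augQ_m} \in \conf(\Gamma \vdash A)$. One must check that $F_m$ is closed under $\leq_{\deseq\augQ}$-predecessors \emph{inside} $F_m$ so that $\deseq{\augQ_m}$ is still a forest with a display map satisfying \emph{minimality-respecting} and \emph{causality-preserving}; this follows because $a_1 \leq_{\deseq\augQ} a_2$ implies $a_1 \leq_\augQ a_2$ (rule-abiding), so $a_1$ and $a_2$ lie in the same $\leq_\augQ$-component, hence above the same minimal event $m$. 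The subtlety is that ``above $A_i$'' in \cref{prop:def_seq} partitioned along $\min(A)$ whereas here we partition along $\min(\augQ)$ directly: we crucially need that distinct minimal events of $\augQ$ never get identified when we later quotient by isomorphism but \emph{may} produce isomorphic components, which is exactly why the codomain of the decomposition is a multiset rather than a tuple — and this is the conceptual point the proof must make explicit. Everything else is a matter of transporting the five conditions of \cref{def:augmentation} across the evident bijection $\ev\augQ \iso \sum_m \ev{\augQ_m}$, component by component.
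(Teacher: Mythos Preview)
Your proposal is correct and follows essentially the same approach as the paper: injectivity by decomposing an isomorphism along the forest's connected components (yielding a permutation and componentwise isos), and surjectivity by restricting $\augQ$ to the principal filters above each minimal event. You are more explicit than the paper about why each restriction $\augQ_m$ is a valid pointed augmentation (using rule-abiding to see that $F_m$ is $\leq_{\deseq\augQ}$-closed), but the underlying argument is the same.
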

\begin{proof}
\emph{Injective.}
Pick $\augQ_1, \dots, \augQ_n, \augP_1, \dots, \augP_m\in \Aug_\bullet(\Gamma \vdash A)$.
Because the $\augQ_i$'s and $\augP_i$'s are pointed and
isomorphisms preserve the forest structure, an isomorphism
$\varphi : \augQ_1 * \dots * \augQ_n \iso \augP_1 * \dots * \augP_m$ forces $m = n$
and induces a permutation $\pi$ on $n$ and a family of isos $(\varphi_i
: \augQ_i \bij \augP_{\pi(i)})_{1\leq i \leq n}$, which implies $[\ic{\augQ_i}
\mid 1\leq i \leq n] = [\ic{\augP_i} \mid 1 \leq i \leq n]$ as bags.

\emph{Surjective.} As any $\augQ \in \Aug(\Gamma \vdash A)$ is finite, it
has a finite set $I$ of initial moves.
Since $\augQ$ is a forest, any
$m \in \ev{\augQ}$ is above exactly one initial move. For $i \in I$, write
$\augQ_i \in \Aug_\bullet(\Gamma \vdash A)$ the restriction of $\augQ$ above
$i$; then $\augQ \iso \augQ_1 * \dots * \augQ_n$ as required.
\end{proof}

\paragraph{Currying.}
For negative arenas $\Gamma$, $A$ and $B$, we have a bijection
\[
\Lambda_{\Gamma, A, B} : \Aug(\Gamma \tensor A \vdash B) \bij \Aug(\Gamma \vdash A \tto B)
\]
leaving the augmentation unchanged except for the display map,
which is reassigned following
\[
\begin{array}{rclcl}
\display_{\Lambda(\augQ)}(m) &=& (1, \gamma) &\qquad&
\text{if $\display_{\augQ}(m) = (1, (1, \gamma))$}\\
\display_{\Lambda(\augQ)}(m) &=& (2, (i, (2, b))) &&
\text{if $\display_\augQ(m) = (2, b)$ and $\display_\augQ(\init(m)) =
(2, i)$}\\
\display_{\Lambda(\augQ)}(m) &=& (2, (i,(1, a))) &&
\text{if $\display_\augQ(m) = (1, (2, a))$ and
$\display_\augQ(\init(m)) = (2, i)$}
\end{array}
\]
where $\init(m)$ denotes the unique $m' \in \min(\augQ)$ such that $m'
\leq_\augQ m$ -- note that \(\display_q(m')\) must be of the shape
\((2,i)\) with \(i\in\min(B)\) because \(q\) is negative and rule-abiding.

Since isomorphisms of augmentations are order-isomorphisms and preserve
display maps,  the definition of this bijection is obviously compatible
with isomorphisms, and we obtain:

\begin{prop}\label{prop:def_val}
The previous bijection defined on augmentations induces a bijection
\[\Lambda_{\Gamma, A, B} : \IAug_\bullet(\Gamma \tensor A \vdash B) \bij \IAug_\bullet(\Gamma \vdash A \tto B).\]
\end{prop}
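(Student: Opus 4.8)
The plan is to observe that almost all the work has already been done: the map $\Lambda_{\Gamma,A,B}$ on augmentations described just above the statement is a bijection $\Aug(\Gamma\tensor A\vdash B)\bij\Aug(\Gamma\vdash A\tto B)$, and it only changes the display map, leaving $\ev{\augQ}$, $\leq_{\deseq\augQ}$ and $\leq_\augQ$ untouched. So the first step is to confirm that this display-map reassignment is indeed well-defined and bijective at the level of augmentations, i.e.\ that the target data $\Lambda(\augQ)$ satisfies \cref{def:conf} (minimality-respecting, causality-preserving) and \cref{def:augmentation} (rule-abiding, courteous, deterministic, $+$-covered, negative) whenever the source does, and conversely. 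The key point here is the polarity bookkeeping: under $\Gamma\tensor A\vdash B=(\Gamma\tensor A)^\bot\tensor B$ the events above $A$ are positive-for-the-hom-game, and under $\Gamma\vdash A\tto B$ they become the negative-premise part of an arrow, so polarities are preserved by the reassignment; and the inverse map is given by the evident case analysis reading $\display_\augQ(\init(m))=(2,i)$ off the arena structure of $A\tto B=\otimes_{i\in\min(B)}A\tto B_i$. Since $\leq_\augQ$ and $\leq_{\deseq\augQ}$ are carried over verbatim, all conditions that only mention these orders and the polarity (which is preserved) transfer immediately, and the inverse is built in exactly the same style.

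The second step is to pass to pointedness. Since $\Lambda$ preserves $\ev{\augQ}$ and $\leq_\augQ$ on the nose, it preserves $\min(\augQ)$ as a set, hence preserves the property of having a singleton set of minimal events; the same holds for $\Lambda^{-1}$. Therefore $\Lambda$ restricts to a bijection $\Aug_\bullet(\Gamma\tensor A\vdash B)\bij\Aug_\bullet(\Gamma\vdash A\tto B)$.

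The third step is to descend to isogmentations. An isomorphism of augmentations is an order-isomorphism for both $\leq_{\deseq{}}$ and $\leq$ that commutes with the display maps. Given $\varphi:\augQ\iso\augP$ in $\Aug_\bullet(\Gamma\tensor A\vdash B)$, the same underlying bijection $\varphi$ is an iso $\Lambda(\augQ)\iso\Lambda(\augP)$: it is still an order-isomorphism for the two orders (which are unchanged), and it still commutes with the new display maps because the reassignment recipe for $\Lambda$ is defined purely from the old display map together with the intrinsic forest structure ($\init(m)$, minimal antecedents), all of which $\varphi$ preserves; explicitly, $\display_{\Lambda(\augP)}\circ\varphi$ and $\display_{\Lambda(\augQ)}$ agree on each of the three cases of the definition because $\display_\augP\circ\varphi=\display_\augQ$ and $\varphi(\init(m))=\init(\varphi(m))$. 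The same argument applied to $\Lambda^{-1}$ shows that $\varphi$ is an iso $\augQ\iso\augP$ whenever it is an iso $\Lambda(\augQ)\iso\Lambda(\augP)$. Hence $\Lambda$ and $\Lambda^{-1}$ send isomorphic augmentations to isomorphic augmentations and preserve non-isomorphism, so they factor through isomorphism classes and yield mutually inverse maps $\IAug_\bullet(\Gamma\tensor A\vdash B)\bij\IAug_\bullet(\Gamma\vdash A\tto B)$, as claimed.

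The only genuinely delicate point — and the one I would write out carefully — is the polarity check in the first step, namely verifying that the three-case display reassignment really does commute with the polarity functions of the two arenas, so that the augmentation conditions involving $\pol$ (courteous, deterministic, $+$-covered, negative) are preserved in both directions; everything else is a routine transfer of structure that is literally untouched by $\Lambda$, exactly as the paragraph preceding the statement already indicates.
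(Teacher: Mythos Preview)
Your proposal is correct and follows essentially the same approach as the paper, just spelled out in considerably more detail: the paper asserts the bijection on augmentations before the proposition and then dispatches the descent to isogmentations in a single sentence (``since isomorphisms of augmentations are order-isomorphisms and preserve display maps, the definition of this bijection is obviously compatible with isomorphisms''), whereas you unpack the well-definedness check, the pointedness preservation, and the compatibility-with-isomorphisms argument separately. One minor phrasing issue: your remark that ``the events above $A$ are positive-for-the-hom-game'' is not literally true (only the minimal ones are; the rest alternate), but your actual point---that the polarity reassignment is the same on both sides, hence polarities are preserved---is correct.
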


\paragraph{Head occurrence.}
The previous constructions handle the rules $\rulename{seq}$, $\rulename{bag}$
and $\rulename{abs}$ of the type system of \autoref{fig:typing}.
By \autoref{lem:typing_normal_forms}, the only remaining case to treat all typed
normal forms is $\rulename{var}$.

As above, we start with the corresponding construction on augmentations.
We write $\vec{B}\tto o \eqdef B_{1} \tto \dots \tto B_{p} \tto o$
for $\vec{B} = \tuple{B_1, \dots, B_n}$ a tuple of objects,
and $\vec{B}^\otimes \eqdef B_1 \otimes \dots \otimes B_n$.
Consider $\Gamma = A_1 \tensor \dots \tensor A_n$ where each $A_i$ is a
negative arena of the shape $A_i = \vec{B_i} \tto o \iso \vec{B}_i^\otimes \tto o$.
Given $\augQ \in \Aug(\Gamma \vdash \vec{B}_{i}^\otimes)$,
the \definitive{$i$-lifting of $\augQ$},
written $\lift_i(\augQ) \in \Aug_\bullet(\Gamma \vdash o)$,
is the augmentation that after the initial Opponent move, starts by playing
the initial move $\qu_i$ of $A_i$ (which is negative in $A_i$ hence positive in
$\Gamma\vdash o$), then proceeds as $\augQ$.
Writing $\qu_{i,1},\dots,\qu_{i,l}$ for the initial moves of $\augQ$
(which must be mapped to initial moves of $\Gamma \vdash \vec{B}_{i}^\otimes$,
hence of $\vec{B}_{i}^\otimes$ because $\augQ$ is \emph{negative} by definition),
$\lift_i(\augQ)$ may be depicted as in \autoref{fig:lifting}.

\begin{figure}[t]
\[
    \begin{tikzpicture}
        \node[
            draw,fill=black!5!white,
            shape=trapezium,trapezium stretches body,
            minimum width=4cm,text height=1cm] (q) {$\augQ$};
        \node[move,below=0pt of q.top left corner,xshift=1ex] (qi1) {$q_{i,1}^-$};
        \node[move,below=0pt of q.top right corner,xshift=-1ex] (qil) {$q_{i,l}^-$};
        \node[below=0pt of q.north] (qij) {$\cdots$};
        \node[move,above=2ex of qij] (qi) {$q_i^+$};
        \node[move,above=2ex of qi] (q0) {$q^-$};
        \draw (qi)  edge[ddep] (q0)
              (qi1) edge[sdep] (qi)
              (qi1) edge[ddep] (qi)
              (qil) edge[sdep,bend right] (qi)
              (qil) edge[ddep] (qi)
        ;
    \end{tikzpicture}
\]
\caption{Illustration of $\lift_i(\augQ)$}
\label{fig:lifting}
\end{figure}

More formally:
\begin{defi}
The \definitive{$i$-lifting of $\augQ$}, written $\lift_i(\augQ) \in
\Aug_\bullet(\Gamma \vdash o)$, has partial order $\leq_\augQ$ prefixed with
two additional moves, namely $\ominus$ and $\oplus$ with $\ominus \imc \oplus$
and $\oplus \imc m$ for each $m \in \init(\augQ)$.
Its desequentialization is the least partial order containing dependencies
\[
\begin{array}{rclcl}
m &\leq_{\deseq{\lift_i(\augQ)}}& n &\qquad& 
\text{for $m, n \in \ev{\augQ}$ with $m \leq_{\deseq{\augQ}} n$,}\\
\oplus &\leq_{\deseq{\lift_i(\augQ)}}& m&&
\text{for all $m \in \ev{\augQ}$ with $\partial_{\augQ}(m) = (2, a)$ for some
$a\in\ev{\vec{B}_i^\otimes}$,}
\end{array}
\]
and with display map given by the following clauses:\footnote{
  Recall that we write $\qu$ for the only move of the arena $o$.
  Also, we keep the isomorphism $A_i \iso \vec{B}_i^\otimes \tto o$ implicit
  whenever we write moves in $A_i$, so that we can consider
  $(i,(2,\qu))\in\ev{\Gamma}$, and $(i,(1,a))\in\ev{\Gamma}$ whenever
  $a\in\ev{\vec{B}_i^\otimes}$.
}
\[
\begin{array}{rclcl}
\partial_{\lift_i(\augQ)}(\ominus) &\eqdef& (2, \qu)\\
\partial_{\lift_i(\augQ)}(\oplus) &\eqdef& (1, (i, (2, \qu)))\\
\partial_{\lift_i(\augQ)}(m) &\eqdef& (1, a)
&\qquad&
\text{if $\partial_{\augQ}(m) = (1, a)$,}\\
\partial_{\lift_i(\augQ)}(m) &\eqdef& (1, (i, (1, a)))
&&\text{if $\partial_\augQ(m) = (2, a)$,}
\end{array}
\]
altogether defining $\lift_i(\augQ) \in \Aug_\bullet(\Gamma \vdash o)$ as
required.
\end{defi}

This construction again preserves isomorphisms and thus extends to
isogmentations:
for any $\isogQ \in \IAug(\Gamma \vdash \vec{B}_{i}^\otimes)$,
we obtain its \definitive{$i$-lifting}
$\lift_i(\isogQ) \in \IAug_\bullet(\Gamma \vdash o)$.
\begin{prop}\label{prop:def_base}
The previous construction on augmentations induces a bijection
\[
\lift : \sum_{1\leq i \leq n} \IAug(\Gamma \vdash \vec{B}_{i}^\otimes)
\bij
\IAug_\bullet(\Gamma \vdash o) \,.
\]
\end{prop}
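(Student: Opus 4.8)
The plan is to establish that $\lift$ is a well-defined bijection by exhibiting an explicit inverse, following the same pattern as the previous propositions (\cref{prop:def_seq}, \cref{prop:def_bag}, \cref{prop:def_val}). The key observation is that a pointed isogmentation on $\Gamma \vdash o$ has a very constrained shape: its unique minimal event is mapped by the display map to the unique (negative) move $\qu$ of $o$, and since the isogmentation is \emph{courteous} and \emph{+-covered}, this minimal event cannot be maximal, so it has a unique immediate successor $\oplus$ in $\leq_\augQ$; since this successor is positive, and the only positive events available directly below $\qu$ in $\Gamma \vdash o$ are the (images of the) initial moves $\qu_i$ of the components $A_i$ of $\Gamma$, the event $\oplus$ determines an index $i$ with $\display_\augQ(\oplus) = (1,(i,(2,\qu)))$. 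This is exactly the data that $\lift_i$ adds.

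First I would check that $\lift_i$ indeed lands in $\Aug_\bullet(\Gamma \vdash o)$: this is the content of the displayed \textbf{Definition}, so I would only verify that the five augmentation axioms (\emph{rule-abiding}, \emph{courteous}, \emph{deterministic}, \emph{+-covered}, \emph{negative}) are preserved when prefixing $\ominus \imc \oplus$ and extending the desequentialization as prescribed — the new minimal event $\ominus$ is negative (display $(2,\qu)$), the new event $\oplus$ is positive and has $\ominus$ as its unique immediate $\leq_{\deseq{}}$-predecessor, determinism at $\ominus$ holds since $\oplus$ is its only successor, and $+$-coveredness is unchanged since we only added non-maximal events. I would also note, as already remarked after the definition, that $\lift_i$ preserves isomorphisms, hence descends to isogmentations, giving a well-defined map $\lift$ on the disjoint union $\sum_i \IAug(\Gamma \vdash \vec{B}_i^\otimes)$.

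Next I would define the inverse. Given $\augR \in \Aug_\bullet(\Gamma \vdash o)$, let $\ominus = \init(\augR)$; it is negative with $\display_\augR(\ominus) = (2,\qu)$. Since $\augR$ is $+$-covered, $\ominus \notin \max(\augR)$, and since $\augR$ is deterministic it has a unique immediate $\leq_\augR$-successor $\oplus$, which is positive. By rule-abiding and courteous, and because $o$ has no move above $\qu$, the display of $\oplus$ must be of the form $(1,(i,(2,\qu)))$ for a unique $i \in \{1,\dots,n\}$ — it is an initial move of some component $A_i = \vec B_i^\otimes \tto o$, namely its root $\qu_i$. Then all remaining events of $\augR$ lie $\leq_\augR$-above $\oplus$; I would define $\augQ$ by discarding $\ominus$ and $\oplus$, keeping the restrictions of both orders, and setting $\display_\augQ(m) = (1,a)$ if $\display_\augR(m)=(1,a)$ with $a \in \ev{\Gamma}\setminus\{(i,(2,\qu))\}$, noting that such $a$ must in fact live in $\vec B_i^\otimes$, and $\display_\augQ(m) = (2,a)$ if $\display_\augR(m) = (1,(i,(1,a)))$. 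One checks $\augQ \in \Aug(\Gamma \vdash \vec B_i^\otimes)$ (negativity of $\augQ$ follows because the minimal moves of $\augR$ other than $\ominus$, or more precisely the immediate successors of $\oplus$ that were minimal in the relativized sense, land in $\vec B_i^\otimes$ which is negative). That $m \mapsto \ic{(i, \ic{\augQ})}$ is inverse to $\lift$ is then a routine check that both round-trips are the identity, up to isomorphism.

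The main obstacle, as in the analogous earlier propositions, is bookkeeping rather than conceptual difficulty: one must carefully track the nested disjoint-union structure of the display map through the isomorphisms $A_i \iso \vec B_i^\otimes \tto o$ and $\Gamma \vdash o = \Gamma^\bot \tensor o$, and verify that the combinatorial axioms survive the surgery of adding or removing the two moves $\ominus, \oplus$. The one genuinely load-bearing point is that the index $i$ is \emph{uniquely} determined — this is where determinism of augmentations at the initial Opponent move is essential, and it is precisely what makes the codomain a \emph{disjoint} union rather than some quotient; I would make sure to highlight that step.
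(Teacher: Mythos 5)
Your overall plan — strip off the two initial moves $\ominus \imc \oplus$, read $i$ off the display of $\oplus$ (unique by determinism), and reconstruct an augmentation on $\Gamma \vdash \vec{B}_i^\otimes$ — is the same as the paper's proof (phrased there as injectivity plus surjectivity rather than as a two-sided inverse). But your explicit definition of the inverse display map is wrong, and this is where the actual content of the proposition sits. You decide which residual events go to the right-hand side $\vec{B}_i^\otimes$ purely from their display values: those displayed at $(1,(i,(1,a)))$ are sent to $(2,a)$, the rest keep their display. After lifting, however, the display map no longer distinguishes the two origins: an event of $\augQ$ lying on the \emph{left} of $\Gamma \vdash \vec{B}_i^\otimes$ may perfectly well display into the $\vec{B}_i^\otimes$-part of the component $A_i$ of $\Gamma$, and in $\lift_i(\augQ)$ it then carries exactly the same display value $(1,(i,(1,a)))$ as a right-hand event. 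Concretely, take $\Gamma = \intr{x : o \to o}$ and $\augR = \sintr{x[x[x[]]]}$: besides $\ominus,\oplus$ it has four events, two of which display to $(1,(1,(1,\qu)))$ — the argument position of the outermost call (right-origin) and the argument position of the middle call (left-origin, $\deseq$-justified by the second event displayed at the root of $A_1$). Your recipe sends both to $(2,\qu)$; the result is not even a configuration (the left-origin one keeps a $\deseq$-predecessor displayed at $(1,(1,(2,\qu)))$, contradicting causality-preservation and minimality-respecting), so the round-trip does not recover $\sintr{\tuple{[x[x[]]]}}$. As written your two clauses also overlap (a display $(1,(i,(1,a)))$ matches both), and the parenthetical ``such $a$ must in fact live in $\vec{B}_i^\otimes$'' is false: residual events may display to other components $A_j$, or into $A_i$ while belonging to the left side.

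The correct inverse must use the causal data, not the display map: by construction of $\deseq{\lift_i(\augQ)}$, the right-hand events of $\augQ$ are exactly those lying strictly $\deseq$-above $\oplus$ (hereditarily justified by $\oplus$), and it is only for these that the display $(1,(i,(1,a)))$ is turned back into $(2,a)$; all other residual events keep their display in $\Gamma$. With that fix your argument goes through and coincides with the paper's. Two smaller points: determinism only gives uniqueness of the \emph{positive} immediate successor of $\ominus$ — to exclude a negative one you need courtesy together with the fact that nothing lies above $\qu$ in $\Gamma \vdash o$ (the paper is equally terse here); and the genuinely load-bearing step is the recovery of the left/right split just described, not the uniqueness of $i$, which is comparatively cheap.
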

\begin{proof}
    We set $\lift(i,\isogQ)=\ic{\lift_i(\rep{\isogQ})}$.

\emph{Surjective.}
Any $\augQ \in \Aug_\bullet(\Gamma \vdash o)$ has a unique initial
move, which is negative hence cannot be maximal by \emph{$+$-covered}.
By \emph{determinism}, there is a unique subsequent Player move,
displayed to the initial move of some $A_i$. The subsequent moves
define $\augQ' \in \Aug(\Gamma \vdash \vec{B}_{i}^\otimes)$
\emph{s.t.} $\augQ \iso \lift_i(\augQ')$.

\emph{Injective.} Given isomorphic $\augQ \iso \lift_i(\augQ')$ and
$\augP \iso \lift_j(\augP')$, we obviously have $i = j$ since
isomorphisms of augmentations preserve display maps; and the
isomorphism decomposes into $\augQ' \iso \augP'$ as required.
\end{proof}

Putting together the above results, we may now deduce:
\begin{thm}\label{th:bij_aug_term}
For $\Gamma$ a context and $\typeA$ a type, there are bijections:
\[
\begin{array}{lcrcl}
\sintr{-}_\Trm &:& \TrmOfNf{\Gamma}{\typeA} &\bij& \IAug_\bullet(\intr{\Gamma} \vdash \intr{\typeA})\\
\sintr{-}_\Bag &:& \BagOfNf{\Gamma}{\typeA} &\bij& \IAug(\intr{\Gamma} \vdash \intr{\typeA})\\
\sintr{-}_\Seq &:& \SeqOfNf{\Gamma}{\vtypeA} &\bij& \IAug(\intr{\Gamma} \vdash \intr{\vtypeA})\,.
\end{array}
\]
\end{thm}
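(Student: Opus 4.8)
The plan is to construct the three maps $\sintr{-}_\Trm$, $\sintr{-}_\Bag$, $\sintr{-}_\Seq$ by a single mutual induction on the size of normal expressions, verifying bijectivity as we go by displaying each map as a composite of bijections. The only structural input is the corollary to \cref{lem:typing_normal_forms}: a normal term of arrow type $\typeB\to\typeC$ is an abstraction $\labs{\varA}{\termB}$ with $\termB$ normal in the context extended by $\varA{:}\typeB$; a normal term of base type $\typeo$ in a context $\varA_1{:}\typeA_1,\dots,\varA_n{:}\typeA_n$ is $\rappl{\varA_i}{\vbagC}$ for a unique index $i$ and a unique normal sequence $\vbagC$ of type $\vtypeC$, where $\typeA_i=\vtypeC\to\typeo$; a normal bag of type $\typeA$ is a finite multiset of normal terms of type $\typeA$; and a normal sequence of type $\tuple{\typeA_1,\dots,\typeA_n}$ is a tuple of normal bags $\bagA_j$ of type $\typeA_j$. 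Each of these four decompositions is itself a bijection onto a set of tuples or multisets of \emph{strictly smaller} normal expressions — in the abstraction case after fixing once and for all a bound variable of type $\typeB$ fresh for $\Gamma$ — so it only remains to post-compose with the matching structural bijection on isogmentations. The induction is well founded since every constructor strictly decreases size; in particular, in the variable case $\vbagC$ is a strict subexpression of $\rappl{\varA_i}{\vbagC}$ even though $\vtypeC$ need not be a subtype.

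The four clauses then run as follows, each matching one typing rule of \cref{fig:typing}. For $\sintr{-}_\Bag$ (rule $\rulename{bag}$): apply $\sintr{-}_\Trm$ to each element of the bag — a bijection by the induction hypothesis — then \cref{prop:def_bag}, landing in $\IAug(\intr{\Gamma}\vdash\intr{\typeA})$. For $\sintr{-}_\Seq$ (rule $\rulename{seq}$): apply $\sintr{-}_\Bag$ to each component, then \cref{prop:def_seq}, landing in $\IAug(\intr{\Gamma}\vdash\tensor_j\intr{\typeA_j})$, which we read as $\IAug(\intr{\Gamma}\vdash\intr{\vtypeA})$. For $\sintr{-}_\Trm$ at an arrow type (rule $\rulename{abs}$): apply $\sintr{-}_\Trm$ (induction hypothesis), using $\intr{\Gamma,\varA{:}\typeB}=\intr{\Gamma}\tensor\intr{\typeB}$, then the currying bijection $\Lambda_{\intr{\Gamma},\intr{\typeB},\intr{\typeC}}$ of \cref{prop:def_val}, landing in $\IAug_\bullet(\intr{\Gamma}\vdash\intr{\typeB}\tto\intr{\typeC})=\IAug_\bullet(\intr{\Gamma}\vdash\intr{\typeB\to\typeC})$. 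For $\sintr{-}_\Trm$ at the base type (rule $\rulename{var}$): use the index $i$ together with $\sintr{-}_\Seq$ applied to $\vbagC$ (induction hypothesis) to land in the $i$-th summand of the source of the lifting bijection $\lift$ of \cref{prop:def_base}, then apply $\lift$ to reach $\IAug_\bullet(\intr{\Gamma}\vdash\typeo)=\IAug_\bullet(\intr{\Gamma}\vdash\intr{\typeo})$. In each case the map produced is a composite of bijections, hence a bijection, which closes the induction.

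The one step genuinely requiring attention is the bookkeeping needed to apply \cref{prop:def_base}, which presupposes the context arena presented as $\tensor_i\intr{\typeA_i}$ with each $\intr{\typeA_i}$ of the shape $\vec{B_i}\tto\typeo$. This is available because $\intr{\Gamma}=\tensor_i\intr{\typeA_i}$ by definition, every type is uniquely of the form $\vtypeC\to\typeo$, and there is a routine uncurrying coherence isomorphism $\intr{\vtypeC\to\typeo}\iso\intr{\vtypeC}^\otimes\tto\typeo$, which reduces to the associativity isomorphism $A\tto(B\tto\typeo)\iso(A\tensor B)\tto\typeo$ between negative arenas — immediate from \cref{def:arena_tensor} and the arrow construction. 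The analogous identification of $\IAug(\intr{\Gamma}\vdash\intr{\vtypeA})$ with $\prod_j\IAug(\intr{\Gamma}\vdash\intr{\typeA_j})$ is exactly \cref{prop:def_seq}. Beyond these routine identifications I expect no real obstacle: with the four structural propositions in hand, the theorem is a direct assembly, the only work being to set up the simultaneous induction on the three syntactic kinds and to check throughout that the composites are formed over matching arenas.
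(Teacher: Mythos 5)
Your construction of the three maps, and the way each typing rule is matched with \cref{prop:def_seq}, \cref{prop:def_bag}, \cref{prop:def_val} and \cref{prop:def_base} (including the uncurrying identification $\intr{\vtypeC\to\typeo}\iso\intr{\vtypeC}^\otimes\tto\typeo$), is exactly right and coincides with the paper's definition of $\sintr{-}$. The gap is in the claim that bijectivity follows because ``in each case the map produced is a composite of bijections''. At, say, the bag clause you invoke ``$\sintr{-}_\Trm:\TrmOfNf{\Gamma}{\typeA}\bij\IAug_\bullet(\intr{\Gamma}\vdash\intr{\typeA})$ is a bijection by the induction hypothesis'', but your induction is on the size of the expression currently being translated, and the statement you need is about the \emph{entire} set $\TrmOfNf{\Gamma}{\typeA}$, which contains terms of unbounded size: the size-bounded restriction of $\sintr{-}_\Trm$ that the IH actually controls is not surjective onto $\IAug_\bullet(\intr{\Gamma}\vdash\intr{\typeA})$, so the composite-of-bijections argument only identifies the image of small bags with a circularly described subset of the target, not with $\IAug(\intr{\Gamma}\vdash\intr{\typeA})$. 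Nor can you repair this by inducting on the full-set statements indexed by (kind, context, type): the variable clause recurses at a type $\vtypeC$ read off from the context (unrelated in size to $\typeo$), and the abstraction clause enlarges the context, so these statements recur forever (with $f:(\typeo\to\typeo)\to\typeo$ in $\Gamma$ one cycles through $\Trm$ at $\typeo$, $\Seq$ at $\tuple{\typeo\to\typeo}$, $\Bag$ and $\Trm$ at $\typeo\to\typeo$, then $\Trm$ at $\typeo$ in a larger context, and so on). In short, injectivity can indeed be run by induction on the syntax exactly as you intend, but surjectivity is not addressed by your scheme.

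What is missing is an induction on the \emph{semantic} side. The paper proves surjectivity by induction on the number of events of the isogmentation, refined lexicographically by the syntactic kind (ordering $\Trm<\Bag<\Seq$) and, in the $\Trm$ case, by the type: given a pointed isogmentation on $\intr{\Gamma}\vdash\intr{\typeA}$ one decomposes it using the same four propositions read backwards, and the measure works out because the lifting of \cref{prop:def_base} removes the two initial moves (strictly fewer events), the currying of \cref{prop:def_val} preserves the events but strictly decreases the output type, and the decompositions of \cref{prop:def_bag} and \cref{prop:def_seq} do not increase the events while strictly decreasing the kind. If you keep your construction and your injectivity argument, and add this separate surjectivity induction (or equivalently construct the inverse maps by induction on isogmentations along that measure), the proof is complete and agrees with the paper's.
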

\begin{proof}
The three functions are defined by mutual induction using 
Propositions~\ref{prop:def_seq}, \ref{prop:def_bag}, \ref{prop:def_val}
and \ref{prop:def_base}.
For injectivity, we reason directly by induction on the syntax,
using the injectivity of each construction.
For surjectivity, we reason by induction on the size
(\emph{i.e.}\ the number of events) of augmentations, the syntactic kind
(considering $\Trm<\Bag<\Seq$), and also the type $\typeA$ in the $\Trm$ case:
the decomposition provided by \autoref{prop:def_base} yields augmentations of
strictly smaller size (we remove the two initial moves);
the bijection of \autoref{prop:def_val} preserves the size of augmentations
and stays in the kind $\Trm$, but yields a smaller output type;
the remaining two decompositions do not increase the size and yield a lower kind.
\end{proof}

\section{A Category of Strategies}
\label{sec:strategies}

In the remaining of the paper, we will extend the previous static
correspondence between normal resource terms and isogmentations into a
\emph{dynamic} correspondence.
In the present section, we introduce \emph{strategies} as weighted sums of
isogmentations, and define a \emph{composition} of strategies so that 
strategies form a category \(\Strat\).
We will show in later sections that one can interpret (not necessarily normal)
resource terms as strategies, in a compositional way:
the interpretation is defined inductively on terms, each syntactic constructor
being reflected by a construction on strategies;
and it will follow from our results that composition is the image of resource
substitution through this interpretation, which is moreover invariant under
resource reduction.
We will expose the relevant structure for the interpretation by introducing 
the notion of \emph{resource category} in \autoref{sec:resource_categories}:
showing that \(\Strat\) is indeed a resource category, in
\autoref{sec:strategies_resource_cat}, will yield the correspondence announced as
the commutation of square $(c)$ in our introduction.

\subsection{Strategies and Composition}

Consider $A, B$ and $C$ three negative arenas, and fix two augmentations $\augQ
\in \Aug(A \vdash B)$, $\augP \in \Aug(B \vdash C)$. We shall compose them
via \emph{interaction}, followed by \emph{hiding}.

\paragraph{Interaction of augmentations.}
Intuitively, we can only compose $\augQ$ and $\augP$ provided they reach
the same state on $B$, so we first extract the ``state they reach'' via
their desequentializations: let us write $x^\augQ_A$ for the events of
$\augQ$ that display to $A$ and $x^\augQ_B$ for those that display to
$B$ -- these define\footnote{Up to isomorphism
of augmentations we may consider that $\deseq{\augQ} = x^\augQ_A \vdash
x^\augQ_B$, but we need not assume that.}
 $x^\augQ_A \in \conf(A)$ and $x^\augQ_B \in
\conf(B)$ and likewise for $\augP$.

But what does it mean to ``reach the same state''? In general requiring
$x^\augQ_B = x^\augP_B$ is meaningless, since this data should really
be considered up to isomorphism.
States in $B$ are not configurations, but \emph{positions}:
symmetry classes of configurations.
Thus $\augQ$ and $\augP$ are \definitive{compatible} if $x^\augQ_B$ and
$x^\augP_B$ are \definitive{symmetric}, \emph{i.e.}\ if there is
a symmetry $\varphi : x^\augQ_B \sym_B x^\augP_B$ --
we write $x^\augQ_B \sym_B x^\augP_B$ for the equivalence.
Accordingly, we must define the composition of two compatible
augmentations \emph{along with} a mediating symmetry. We
first form \emph{interactions}:

\begin{defi}\label{def:inter}
For $\augQ, \augP$ as above and $\varphi : x^\augQ_B \sym_B x^\augP_B$,
the \definitive{interaction} $\augP \inter_\varphi \augQ$ is the pair
$\tuple{
    \ev{\augP \inter_\varphi \augQ},
    {\leq}_{\augP \inter_\varphi \augQ}
}$ of the set $\ev{\augP \inter_\varphi \augQ}\eqdef\ev{\augQ}+\ev{\augP}$
and the binary relation ${\leq}_{\augP \inter_\varphi \augQ}$ on 
$\ev{\augP \inter_\varphi \augQ}$ defined as the transitive closure of
${\causes}\eqdef {\causes}_\augQ \cup {\causes}_\augP \cup {\causes}_\varphi$
with
\[
\begin{array}{rcl}
\causes_\augQ &=& \{((1, m), (1, m')) \mid m <_\augQ m'\}\,,\\
\causes_\augP &=& \{((2, m), (2, m')) \mid m <_\augP m'\}\,,\\
\causes_\varphi &=& 
\{((1, m), (2, \varphi(m)))
    \mid m \in x^\augQ_B \land \pol_{A\vdash B}(\display_{\augQ}(m)) = +\}\\ 
&\cup& \{((2, m), (1, \varphi^{-1}(m)))
    \mid m \in x^\augP_B \land \pol_{B \vdash C}(\display_\augP(m)) = +\}\,.
\end{array}
\]
\end{defi}

\begin{prop}\label{prop:inter_acyclic}
    The interaction $\augP \inter_\varphi \augQ$ is a partially ordered set.
\end{prop}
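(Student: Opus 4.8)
The plan is to show that the transitive closure of the relation $\causes = \causes_\augQ \cup \causes_\augP \cup \causes_\varphi$ is antisymmetric; since it is by construction reflexive (well, a preorder — actually the transitive closure of an irreflexive-looking relation, but once we add identities or note that $\leq_\augQ,\leq_\augP$ are already orders) and transitive, antisymmetry is all that remains. So the real content is: \emph{there is no cycle} $a_0 \causes a_1 \causes \dots \causes a_k \causes a_0$ with the $a_i$ not all equal.

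The key device I would use is a \emph{rank function} that strictly decreases (or strictly increases) along any $\causes$-step that is not "internal to one side at fixed depth". More precisely, I would assign to each event of $\ev{\augP \inter_\varphi \augQ}$ a natural number measuring its depth in the relevant forest: for $(1,m)$ the $\leq_\augQ$-depth of $m$ in $\augQ$, for $(2,m)$ the $\leq_\augP$-depth of $m$ in $\augP$. The steps in $\causes_\augQ$ and $\causes_\augP$ strictly increase this rank (they go strictly up in a forest order). The subtle steps are the $\causes_\varphi$ steps, which jump between the two components via the symmetry $\varphi$ on the interface $B$. Here I would use that $\varphi$ is an \emph{order-isomorphism} $x^\augQ_B \sym_B x^\augP_B$ together with the polarity constraint in the definition of $\causes_\varphi$: a cross step from side $\augQ$ to side $\augP$ departs from a move $m$ that is \emph{positive} in $A \vdash B$ — hence, reading $B$ with its own polarity, Opponent-moving-into-$B$ — while a cross step in the other direction departs from a move positive in $B \vdash C$. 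The point is that these two kinds of cross-steps have opposite "orientation" relative to the polarity of $B$, so one cannot immediately alternate them to form a short cycle; combined with the rule-abiding and courteous conditions (which tell us how $\leq_\augQ$ relates to $\leq_{\deseq\augQ}$, and the latter is transported faithfully by $\varphi$), any closed walk must make net zero progress on the interface, forcing it to be trivial.

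Concretely I would proceed as follows. First, reduce to showing no nontrivial cycle in the \emph{generating} relation's transitive closure, i.e. no walk $a_0 \causes^{(1)} a_1 \causes^{(1)} \dots \causes^{(1)} a_k = a_0$ using single generating steps, with $k \ge 1$ and not all $a_i$ equal. Second, observe that any maximal run of consecutive steps staying on the same side is handled by the fact that $\leq_\augQ$ (resp. $\leq_\augP$) is a partial order, so such a run cannot itself be a cycle and strictly increases $\augQ$- (resp. $\augP$-) depth; thus in a putative cycle the cross steps via $\causes_\varphi$ must occur, and they must alternate in direction $\augQ\!\to\!\augP$, $\augP\!\to\!\augQ$, $\dots$ around the loop (an even number of them). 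Third — the heart of the argument — I would track what happens on the interface: each cross step $(1,m)\causes(2,\varphi(m))$ or $(2,m')\causes(1,\varphi^{-1}(m'))$ lands on a move of $\deseq{\augQ}$ (resp. $\deseq\augP$) sitting in $x^\augQ_B$ (resp. $x^\augP_B$), and the within-side steps between two consecutive cross steps, via rule-abiding plus courtesy, only move us \emph{downward-then-upward} in a controlled way inside the $B$-part; using that $\varphi$ preserves $\leq_{\deseq{(-)}\restriction B}$ and reverses nothing, I would show the induced self-map of the finite poset $x^\augQ_B$ obtained by going once around the loop is strictly progressing unless the loop is trivial, contradiction. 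I expect to need a small case analysis on the polarities of the interface moves visited, leveraging alternation of $\pol$ along $\imc$ and the negative/+-covered conditions to rule out the degenerate polarity configurations.

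The main obstacle, I anticipate, is precisely this interface bookkeeping: making rigorous the claim that a closed walk projects to a non-regressive walk in the finite $B$-configuration, because the within-side portions of the walk are governed by the \emph{dynamic} orders $\leq_\augQ,\leq_\augP$ rather than directly by the static $\leq_{\deseq(-)}$, and one only controls their interaction through \emph{courteous} (for $\imc$-steps out of positive / into negative moves) and \emph{rule-abiding}. I would likely isolate this as a lemma: "along any $\causes$-path, the sequence of interface moves visited, read in $B$, is strictly increasing for $\leq_B$ between a cross-step's target and the source of the next cross-step of the \emph{same} direction", from which acyclicity follows by finiteness of $x^\augQ_B$. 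If the clean statement resists, the fallback is a brute lexicographic measure $(\text{sum of depths}, \text{number of remaining cross steps})$ and a direct check that every generating step lexicographically decreases it — tedious but mechanical, and sufficient since the interaction is a finite structure.
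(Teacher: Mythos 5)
There is a genuine gap, and it sits exactly where you locate the ``heart of the argument''. Your plan hinges on the isolated lemma that along a $\causes$-path the interface moves visited, read in $B$, progress monotonically for $\leq_B$ (or that going once around the loop induces a ``strictly progressing'' self-map of $x^\augQ_B$). This is false in general: the within-side segments are governed by the \emph{dynamic} orders $\leq_\augQ,\leq_\augP$, and an augmentation may causally link two $B$-moves whose displays are $\leq_B$-incomparable or even go from a deep move to a shallower one. For instance, a strategy on $B \vdash C$ behaving like $f : o\to o,\ g : o \to o \vdash \appl{f}{[\appl{g}{[]}]}$ has a dynamic link from the Opponent move under the $f$-branch of $B$ to the root of the $g$-branch, so the interface trace of a path can jump between statically unrelated parts of $B$ and makes no $\leq_B$-progress. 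Since you never pin down how \emph{courteous} and \emph{rule-abiding} are actually used (you only gesture at them), the claimed contradiction never materializes. Your fallback is also not viable: the dynamic depths on the two sides are unrelated across a $\causes_\varphi$ step, ``number of remaining cross steps'' is a property of the walk rather than of an event, and no simple per-step decreasing measure exists -- if it did, the statement would be routine, whereas deadlock-freedom of innocent interactions is precisely the subtle point.

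For comparison, the paper's proof (Appendix~\ref{subsec:deadlockfree}, adapted from concurrent games) proceeds differently: it first shows any cycle can be assumed to lie entirely in $B$ and to contain no $B$-minimal event (using that a segment in $A$ or $C$ can be contracted by transitivity, and a lemma relating dynamic and static order at $B$-minimal events); it then takes a cycle minimal for the pair (length, total \emph{static} depth), shows it must be $4$-periodic, alternating $\causes_\augQ, \causes_\varphi, \causes_\augP, \causes_\varphi$, and -- the key trick -- proves via courtesy and rule-abiding that the \emph{justifiers} $\just(m_i)$ of the cycle's events again form a cycle, traversed in the reverse direction, of the same length but strictly smaller depth, contradicting minimality. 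So the descent is on static depth through justification pointers, with a reversal of orientation, not a monotone progression of the interface trace; that mechanism (or something equally sharp) is what your proposal is missing.
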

\begin{proof}
    We show that $\causes$ is \emph{acyclic}, \emph{i.e.}\ its transitive
    closure is a strict partial order.
    This corresponds to a subtle and fundamental property of innocent
    strategies: they always have a deadlock-free interaction.
    Our proof, which appears in \autoref{subsec:deadlockfree}, is a direct
    adaptation of a similar fact in concurrent games on event structures
    \cite{CC21}.
\end{proof}

\begin{figure}[t]
\[
    \begin{tikzcd}[gs header,column sep=0pt,row sep=.5cm]
        o&\tensor&o&\vdash&&o&&[1cm]&o&&\vdash(&o&\tto&o&\tto&o&)\tto&o
        \\[-.5cm+1ex]
        &&&&|[red]|\qu^-
        &&|[red]|\qu^-
        &&&&&&&&&&&\qu^-
        \\
        \qu^+\ar[urrrr,ddep,"\augQ" description]
        &&\qu^+\ar[urrrr,ddep,"\augQ" description]
        &&&&&&&&&&&&&\qu^+\ar[urr,ddep,"\augP" description]\ar[urr,sdep]
        \\
        &&&&&&&&&&&\qu^-\ar[urrrr,ddep,"\augP" description]\ar[urrrr,sdep]
        &&\qu^-\ar[urr,ddep,"\augP" description]\ar[urr,sdep]
        \\
        &&&&&&&|[red]|\qu^+\ar[urrrr,ddep,"\augP" description]
        \ar[uuulll,-,red,double,"\varphi" description]
        &&|[red]|\qu^+\ar[urrrr,ddep,"\augP" description]
        \ar[uuulll,-,red,double,"\varphi" description]
        \\
    \end{tikzcd}
\]
\caption{Construction of an interaction $\augP \inter_\varphi \augQ$.}
\label{fig:ex_interaction}
\end{figure}

Figure~\ref{fig:ex_interaction} illustrates the construction of an interaction.
The two augmentations $\augQ \in \Aug(o \tensor o \vdash o)$ -- on the left
hand side -- and $\augP \in \Aug(o \vdash (o \tto o \tto o) \tto o)$ -- on the
right hand side -- are shown with their common interface in red,
bridged by a symmetry $\varphi$.

In order to show that the forthcoming composition of augmentation is
well-behaved, we need to characterize immediate causal dependency in
the interaction. It turns out that this is very constrained -- this is
detailed in two lemmas: the first, for forward causality, follows.

\begin{lem}\label{lem:imc_int_forward}
If $m = (1, m') \imc_{\augP \inter_\varphi \augQ} n$ for $m' \in \ev{\augQ}$,
then we have:
\[
\begin{array}{rl}
\text{\emph{(1)}} &
\text{If $m'$ is negative in $\augQ$, then $n = (1, n')$ and $m'
\imc_\augQ n'$;}\\
\text{\emph{(2)}} &
\text{If $m'$ is positive in $\augQ$ and displaying to $A$, then $n =
(1, n')$ and $m' \imc_\augQ n'$;}\\ 
\text{\emph{(3)}} &
\text{If $m'$ is positive in $\augQ$ and displaying to $B$, then $n =
(2, \varphi(m'))$,} 
\end{array}
\]
and symmetrically for $m = (2, m') \imc_{\augP \inter_\varphi \augQ} n$
for $m' \in \ev{\augP}$.
\end{lem}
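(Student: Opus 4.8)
The plan is to analyze the immediate causal dependency $m = (1, m') \imc_{\augP \inter_\varphi \augQ} n$ by unfolding the definition of ${\leq}_{\augP \inter_\varphi \augQ}$ as the transitive closure of ${\causes} = {\causes}_\augQ \cup {\causes}_\augP \cup {\causes}_\varphi$. Since $m$ has first component $1$, the only relations starting from $m$ in ${\causes}$ are those in ${\causes}_\augQ$ (landing in the $(1,-)$ component) and those in ${\causes}_\varphi$ of the first kind (landing in the $(2,-)$ component, and only when $m'$ is a positive $B$-move of $\augQ$). So a first bookkeeping step is: list the possible ``first hops'' out of $m$, and observe that when $m'$ is negative, or positive displaying to $A$, the only hops are inside the $(1,-)$ copy of $\augQ$.

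First I would handle cases (1) and (2) together. Here every path from $m$ must begin with a $\causes_\augQ$-step $m' <_\augQ m'_1$. If $n = (1, n')$ with $m' <_\augQ n'$ but $n' \ne m'_1$, then $m' <_\augQ m'_1$ is an intermediate element contradicting immediacy — so we would need to argue that any interaction-path from $m$ to $n$ projects to a $\augQ$-path, hence $m' \leq_\augQ n'$, and by immediacy $m' \imc_\augQ n'$. The key sub-claim is that once one is in the $(1,-)$ component having arrived via a positive-displaying-to-$A$ or negative move, one cannot escape to the $(2,-)$ component without passing through a positive-displaying-to-$B$ move of $\augQ$, which would itself lie strictly between $m$ and $n$; combined with the fact that $\causes_\varphi$-steps back from $(2,-)$ to $(1,-)$ reenter at a point that is $\geq_\augQ$-above, one concludes $n$ must in fact be in the $(1,-)$ copy. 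This uses courtesy/rule-abidingness of $\augQ$ to control which moves can be immediately below which.

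For case (3), $m'$ is positive in $\augQ$ displaying to $B$, so $m' \in x^\augQ_B$ and $\pol_{A \vdash B}(\display_\augQ(m')) = +$; thus $((1,m'),(2,\varphi(m'))) \in {\causes}_\varphi$, giving $m \imc$-candidate $(2,\varphi(m'))$. One must show this is the \emph{only} immediate successor: any other successor $n$ with $m <_{\augP \inter_\varphi \augQ} n$ factors through a path, and I would argue the first hop is necessarily the $\causes_\varphi$-hop to $(2,\varphi(m'))$ (since by courtesy of $\augQ$ a positive move $m'$ has no $\imc_\augQ$-successor other than via $\imc_{\deseq\augQ}$, and the $B$-displaying part of $\deseq\augQ$ is governed by $\varphi$ matching $\deseq\augP$), so $(2,\varphi(m'))$ sits below any other successor, forcing $n = (2,\varphi(m'))$. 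The symmetric statement for $m = (2,m')$ follows by the evident symmetry of Definition~\ref{def:inter} swapping the roles of $\augQ/\augP$ and $\varphi/\varphi^{-1}$, exchanging the roles of $A$ and $C$.

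The main obstacle I expect is the case analysis ruling out ``long'' zigzag paths between the two components: one must show that any alternation between the $\causes_\varphi$-bridges and internal $\augQ$- or $\augP$-steps cannot produce a \emph{new} immediate dependency, i.e.\ that a zigzag always passes through an intermediate vertex already related to both endpoints. This is exactly where courtesy, determinism and negativity of the augmentations are needed — the analogous fact in concurrent games (cf.\ the acyclicity argument of \cref{prop:inter_acyclic} and \cite{CC21}) suggests the same bookkeeping on polarities works here; the safest route is to prove the auxiliary statement that in $\augP \inter_\varphi \augQ$, whenever $p \imc q$ lies across components it must be one of the $\causes_\varphi$-edges, and whenever it lies within a component it must be an $\imc_\augQ$- or $\imc_\augP$-edge, then read off (1)–(3) as immediate corollaries.
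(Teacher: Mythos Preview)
Your plan has the right shape, but you are missing the key simplification that makes the proof short, and your argument for case~(3) has a genuine gap.

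\medskip

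\textbf{The simplification you miss.} Since $\leq_{\augP\inter_\varphi\augQ}$ is the transitive closure of the irreflexive relation $\causes$, any $m \imc_{\augP\inter_\varphi\augQ} n$ must in fact be a single $\causes$-step: a chain $m=p_0\causes p_1\causes\cdots\causes p_k=n$ with $k\geq 2$ provides $p_1$ strictly between $m$ and $n$. So you never have to track ``long zigzag paths'' at all. For (1) and (2), the only $\causes$-edges out of $(1,m')$ are $\causes_\augQ$-edges (the $\causes_\varphi$-clause requires $m'$ positive and displaying to $B$), hence $n=(1,n')$ with $m'<_\augQ n'$; and $m'\imc_\augQ n'$ follows because any strict $\augQ$-intermediate would be an intermediate in the interaction. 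Your talk of projecting interaction-paths to $\augQ$-paths, and of escaping to and reentering from the $(2,-)$ component, is unnecessary.

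\medskip

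\textbf{The gap in case (3).} Here both $\causes_\augQ$ and $\causes_\varphi$ may apply, so you must \emph{rule out} $m\causes_\augQ(1,n')$. Your justification --- ``by courtesy a positive $m'$ has $\imc_\augQ$-successors only via $\imc_{\deseq\augQ}$, and $\varphi$ matches $\deseq\augQ$ with $\deseq\augP$'' --- does not finish the job: you need to produce a path from $m$ to $(1,n')$ that passes through $(2,\varphi(m'))$. The missing step is \emph{rule-abiding of~$\augP$}. Concretely: from $m'\imc_\augQ n'$ and courtesy of $\augQ$ you get $m'\imc_{\deseq\augQ}n'$; since $m'$ displays to $B$ and there are no $\imc$-links across the tensor $A\vdash B$, $n'$ also displays to $B$; the order-isomorphism $\varphi$ yields $\varphi(m')\imc_{\deseq\augP}\varphi(n')$; and now rule-abiding of $\augP$ gives $\varphi(m')<_\augP\varphi(n')$. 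A polarity check (alternation makes $n'$ negative in $A\vdash B$, hence $\varphi(n')$ positive in $B\vdash C$) then yields the chain
\[
m \;\causes_\varphi\; (2,\varphi(m')) \;\causes_\augP\; (2,\varphi(n')) \;\causes_\varphi\; (1,n')=n,
\]
contradicting $m\imc n$. Without invoking rule-abiding of $\augP$ you have no $\causes_\augP$-edge, hence no detour, and your claim that $(2,\varphi(m'))$ ``sits below any other successor'' is unsupported.
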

\begin{proof}
Necessarily, an immediate causal link must originate from one of the
clauses of the relation $\causes$ above. For \emph{(1)}, for polarity
reasons it can only be $(1, m') \causes_\augQ n$ so that $n = (1, n')$
with $m' <_\augQ n'$, and furthermore we must have $m' \imc_\augQ n'$ or
that would immediately contradict $m \imc_{\augP \inter_\varphi \augQ}
n$. For \emph{(2)}, similarly only the clause $\causes_\augQ$ may
apply. 

For \emph{(3)}, we must show that $m \causes_\augQ n$ is impossible. If
that was the case, then $n = (1, n')$ with $m' <_\augQ n'$, where we
must have $m' \imc_\augQ n'$ (or contradict $m \imc n$). But as $m'$ is
positive, by \emph{courtesy} we have $m' \imc_{\deseq{\augQ}} n'$, and
thus $\varphi(m') \imc_{\deseq{\augP}} \varphi(n')$ as $\varphi$ is an
order-isomorphism. And by \emph{rule-abiding}, that entails
$\varphi(m') <_\augP \varphi(n')$, so that altogether we have
\[
m \causes_\varphi (2, \varphi(m')) \causes_\augP (2, \varphi(n'))
\causes_\varphi n
\]
contradicting the fact that $m \imc_{\augP \inter_\varphi \augQ} n$ is
an immediate causal link.
\end{proof}

Symmetrically, in the ``backwards'' direction, we have:

\begin{lem}\label{lem:imc_int_backward}
If $m \imc_{\augP \inter_\varphi \augQ} (1, n') = n$ for $n' \in
\ev{\augQ}$,
then we have:
\[
\begin{array}{rl}
\text{\emph{(1)}} &
\text{If $n'$ is positive in $\augQ$, then $m = (1, m')$ and $m'
\imc_\augQ n'$;}\\
\text{\emph{(2)}} &
\text{If $n'$ is negative in $\augQ$ and displaying to $A$, then $m =
(1, m')$ and $m' \imc_\augQ n'$;}\\ 
\text{\emph{(3)}} &
\text{If $n'$ is negative in $\augQ$ and displaying to $B$, then $m =
(1, \varphi^{-1}(m'))$,} 
\end{array}
\]
and symmetrically for $m \imc_{\augP \inter_\varphi \augQ} (2, n')$
for $n' \in \ev{\augP}$.
\end{lem}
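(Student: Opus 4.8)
The plan is to mirror the proof of \cref{lem:imc_int_forward} almost verbatim, exploiting the evident symmetry between the ``forward'' and ``backward'' readings of the interaction order. As in that lemma, an immediate causal link $m \imc_{\augP \inter_\varphi \augQ} (1,n')=n$ must come from one of the three generating clauses $\causes_\augQ$, $\causes_\augP$, $\causes_\varphi$ of \cref{def:inter}. Since the target is of the shape $(1,n')$, the only clauses that can produce it are $\causes_\augQ$ (giving $m=(1,m')$ with $m' <_\augQ n'$) and the second half of $\causes_\varphi$ (giving $m=(2,m'')$ with $n'=\varphi^{-1}(m'')$, and $m''$ positive in $\augP$ displaying to $B$, i.e.\ $n'$ negative in $\augQ$ displaying to $B$). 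In the first case, minimality of the immediate link forces $m' \imc_\augQ n'$, exactly as in the proof of \cref{lem:imc_int_forward}(1)--(2).

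For case \emph{(1)}, where $n'$ is positive in $\augQ$: the $\causes_\varphi$-clause cannot apply since it requires $n'$ negative (it only bridges from $\augP$ to $\augQ$ at negative moves of $\augQ$, matching the positive moves of $\augP$ under $\varphi^{-1}$). So only $\causes_\augQ$ applies, giving $m=(1,m')$ with $m' \imc_\augQ n'$. For case \emph{(2)}, where $n'$ is negative in $\augQ$ and displays to $A$: the $\causes_\varphi$-clause again cannot apply, because it would require $n'$ to display to $B$ (the mediating symmetry lives on $x^\augQ_B$). Hence only $\causes_\augQ$ applies and we conclude $m=(1,m')$ with $m' \imc_\augQ n'$. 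For case \emph{(3)}, where $n'$ is negative and displays to $B$: here we must rule out $m = (1,m')$ with $m' \imc_\augQ n'$. If this held, then $n'$ negative and $m' \imc_\augQ n'$ gives $m' \imc_{\deseq \augQ} n'$ by \emph{courtesy} (the case $\pol(m')=+$ or $\pol(n')=-$); pushing through the order-isomorphism $\varphi$ and using \emph{rule-abiding} as in the proof of \cref{lem:imc_int_forward}(3), we get $\varphi(m') <_\augP \varphi(n')$, whence a factorization $m \causes_\varphi (2,\varphi(m')) \causes_\augP (2,\varphi(n')) \causes_\varphi n$ contradicting immediacy. So the only surviving clause is $\causes_\varphi$, giving $m = (1, \varphi^{-1}(n'))$ — wait, more precisely $m = (2, m'')$ with $n' = \varphi^{-1}(m'')$; note the statement as written has ``$m = (1, \varphi^{-1}(m'))$'' which I read, matching case (3) of \cref{lem:imc_int_forward}, as: $m$ comes from $\augP$ via the bridge, i.e.\ $m = (2, \varphi(n'))$, equivalently $n' = \varphi^{-1}(m)$ on the $\augQ$-side.

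The symmetric statement for $m \imc_{\augP \inter_\varphi \augQ} (2,n')$ with $n' \in \ev{\augP}$ follows by the same argument with the roles of $\augQ$/$\augP$, $A$/$C$, and $\varphi$/$\varphi^{-1}$ interchanged, relying on the symmetry built into the definition of $\causes$ and on the fact that both augmentations are courteous and rule-abiding. I expect no real obstacle here: the proof is a routine dualization of \cref{lem:imc_int_forward}, and the only point requiring minor care is getting the polarity bookkeeping right — specifically, checking in each case which of the two halves of $\causes_\varphi$ can fire, which is entirely determined by whether $n'$ is positive/negative and whether it displays to $A$ or to $B$. In the write-up I would simply say ``by the same reasoning as \cref{lem:imc_int_forward}, exchanging forward for backward dependencies'' and spell out only case \emph{(3)}, which is the one using \emph{courtesy} and \emph{rule-abiding} nontrivially.
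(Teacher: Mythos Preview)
Your proposal is correct and follows exactly the approach the paper takes: its proof of \cref{lem:imc_int_backward} is simply ``Analogous to the proof of Lemma~\ref{lem:imc_int_forward}'', and your case analysis is the natural dualization thereof. Your observation about case~\emph{(3)} is also right --- the statement as printed has a typo (it should read $m = (2, \varphi(n'))$, mirroring $n = (2, \varphi(m'))$ in \cref{lem:imc_int_forward}), and you have interpreted and proved the intended claim.
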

\begin{proof}
Analogous to the proof of \autoref{lem:imc_int_forward}.
\end{proof}

Now, we can go on and define the composition of augmentations.

\paragraph{Composing augmentations.} We compose $\augQ$ and $\augP$
\emph{via $\varphi$}, by \emph{hiding} the interaction.
Recall that we write
$\deseq{\augQ} \iso x^\augQ_A \vdash x^\augQ_B \in \conf(A \vdash B)$ and
$\deseq{\augP} \iso x^\augP_B \vdash x^\augP_C \in \conf(B \vdash C)$.

\begin{prop}
    The structure $\augP \odot_\varphi \augQ$ obtained by restricting
    $\augP \inter_\varphi \augQ$ to $x^\augQ_A + x^\augP_C$, with
\[
 \deseq{\augP \odot_\varphi \augQ}\eqdef x^\augQ_A \vdash x^\augP_C\,,
\qquad
\display_{\augP \odot_\varphi \augQ}((1,
m))\eqdef\display_{\augQ}((1,m))\,,
\qquad
\display_{\augP \odot_\varphi \augQ}((2, m))\eqdef\display_\augP((2,m))
\]
is an augmentation on $A \vdash C$.
\end{prop}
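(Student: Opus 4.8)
The plan is to verify, one by one, the conditions defining an augmentation (Definition~\ref{def:augmentation}) for $\augP \odot_\varphi \augQ$, drawing on the two combinatorial lemmas \ref{lem:imc_int_forward} and \ref{lem:imc_int_backward} that pin down immediate causality in the interaction, together with the deadlock-freeness result Proposition~\ref{prop:inter_acyclic}. First I would record the easy structural facts: $\ev{\augP \odot_\varphi \augQ} = x^\augQ_A + x^\augP_C$ as a set, the order $\leq_{\augP \odot_\varphi \augQ}$ is the restriction of the partial order $\leq_{\augP \inter_\varphi \augQ}$ (which is a genuine partial order by Proposition~\ref{prop:inter_acyclic}), and the display map is well-defined by the given clauses. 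The desequentialization $\deseq{\augP \odot_\varphi \augQ} = x^\augQ_A \vdash x^\augP_C$ is a configuration on $A \vdash C$ because $x^\augQ_A \in \conf(A)$, $x^\augP_C \in \conf(C)$ (both already established as part of $\deseq{\augQ}$, $\deseq{\augP}$), and $\conf$ is closed under the $\vdash$ construction.

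The substantive work is then to check the five properties of Definition~\ref{def:augmentation} for the pair $(\deseq{\augP \odot_\varphi \augQ}, \leq_{\augP \odot_\varphi \augQ})$. For \emph{rule-abiding}, I would note that the causal order of $\deseq{\augP \odot_\varphi \augQ}$ on $x^\augQ_A$ is inherited from $\deseq{\augQ}$, which is rule-abiding, and on $x^\augP_C$ from $\deseq{\augP}$; since $\deseq{\augP \odot_\varphi \augQ}$ has no immediate causal link crossing between the $A$-side and the $C$-side (the arena $A \vdash C$ being a disjoint juxtaposition at the level of roots), and $\leq_{\augP \inter_\varphi \augQ}$ restricted to $x^\augQ_A + x^\augP_C$ certainly contains the restrictions of $<_\augQ$ and $<_\augP$, rule-abiding follows. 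For \emph{negative} and \emph{$+$-covered}, I would argue that the minimal events of $\augP \odot_\varphi \augQ$ are among the minimal events of $\augQ$ restricted to $A$ and of $\augP$ restricted to $C$ — here one uses that the roots on the $B$-side of both augmentations become non-minimal once interaction links are added, or are hidden — and conclude polarity from negativity of $\augQ$, $\augP$; dually for maximality and $+$-coverage, using that any maximal event of the interaction that survives hiding was maximal (hence positive) in $\augQ$ or $\augP$, or else was a positive $B$-event whose successors all lie on the hidden side. The crux for \emph{courteous} and \emph{deterministic} is to compute $\imc_{\augP \odot_\varphi \augQ}$ in terms of $\imc_{\augP \inter_\varphi \augQ}$: an immediate link in the hidden structure corresponds to a \emph{chain} in the interaction whose interior passes only through hidden ($B$-side) events. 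Using Lemmas~\ref{lem:imc_int_forward} and~\ref{lem:imc_int_backward} to trace the polarities along such a chain, I would show that any such interior chain through $B$-events is "neutral" — it alternates through synchronizations $\causes_\varphi$ — so the endpoints inherit the courtesy/determinism constraints from whichever of $\augQ$, $\augP$ they belong to. Concretely: for \emph{courteous}, if $a_1 \imc a_2$ in the composition with $\pol(a_1)=+$ or $\pol(a_2)=-$, I trace back through the interaction to find that either $a_1 \imc a_2$ already in $\augQ$ (resp. $\augP$), where courtesy applies directly, or the link factors through a hidden positive-Opponent/negative-Player handshake on $B$, which by courtesy of $\augQ$ and $\augP$ plus the order-isomorphism property of $\varphi$ collapses to an immediate link in $\deseq{\augQ}$ or $\deseq{\augP}$; for \emph{deterministic}, two immediate positive successors $a^+_1, a^+_2$ of a hidden-traced negative event would, by the same tracing, give two immediate successors of a single event in $\augQ$ or $\augP$, contradicting their determinism.

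The main obstacle I expect is precisely this \emph{hiding argument for courtesy and determinism}: controlling what immediate causality in the restricted structure looks like, since hiding can create new immediate links by contracting paths through the (now invisible) $B$-events. The key technical fact to extract — which the preceding two lemmas are evidently designed to supply — is that any "hidden path" of $B$-events between two visible events has a rigidly determined polarity pattern (each hidden $B$-event is entered from one side and exited to the other via $\varphi$ or $\varphi^{-1}$, and between consecutive synchronizations the path follows $<_\augQ$ or $<_\augP$ monotonically), so that the visible endpoints behave exactly as they would under a direct immediate link in the appropriate component. Once this "polarity rigidity of hidden paths" is in hand, each of the five conditions is a short case analysis; the remaining conditions (rule-abiding, negative, $+$-covered) are essentially immediate from the definitions and the corresponding properties of $\augQ$ and $\augP$.
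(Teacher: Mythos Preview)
Your plan matches the paper's approach: verify each augmentation axiom, leaning on Lemmas~\ref{lem:imc_int_forward} and~\ref{lem:imc_int_backward} for the delicate parts (forest, courtesy, determinism). Two small refinements worth noting: you omit the \emph{forest} condition from your checklist (the paper handles it in one line via Lemma~\ref{lem:imc_int_backward}: each interaction event has at most one $\imc$-predecessor), and the paper's courtesy argument is crisper than your hidden-path tracing---when $m$ is positive and visible, Lemma~\ref{lem:imc_int_forward} forces its immediate interaction-successor to be visible as well, so there are \emph{no} hidden intermediates and courtesy of $\augQ$ or $\augP$ applies directly (dually for $n$ negative via Lemma~\ref{lem:imc_int_backward}).
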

\begin{proof}
Let us call an event $m \in \ev{\augP \inter_\varphi \augQ}$
\textbf{visible} if it appears in $\augP \odot_\varphi \augQ$,
\textbf{hidden} otherwise.

\emph{Rule-abiding} is immediate from the definition and the fact that
$\augQ$ and $\augP$ are rule-abiding.
For \emph{negative}, by case analysis,
an event minimal in $\augP \odot_\varphi \augQ$ must also be minimal in
$\augP \inter_\varphi \augQ$,
it must thus come from $\augP$ and occur in $C$, and be
minimal in $\augP$, hence negative; \emph{$+$-covered} is symmetric.  

It remains to show that $\augP \odot_\varphi \augQ$ is a forest, and
that it satisfies conditions \emph{courteous} and \emph{deterministic}
-- for those, we show analogous properties on the interaction $\augP
\inter_\varphi \augQ$. First, it is a forest: any move can have at most
one causal predecessor by \autoref{lem:imc_int_backward}. Since
$\augQ$ and $\augP$ are deterministic, and $\causes_\varphi$ does not
branch, \autoref{lem:imc_int_forward} entails that $\augP
\inter_\varphi \augQ$ can only branch at negative visible events, from which it
follows that $\augP \odot_\varphi \augQ$ is deterministic. Finally if
$m \imc_{\augP \odot_\varphi \augQ} n$, this means we must have a
sequence in the interaction
\[
m \imc_{\augP \inter_\varphi \augQ} m_1 \imc_{\augP \inter_\varphi
\augQ} \dots \imc_{\augP \inter_\varphi \augQ} m_k \imc_{\augP
\inter_\varphi \augQ} n
\]
in $\augP \inter_\varphi \augQ$ where $m_1, \dots, m_k$ are hidden. But
then if $m$ is positive, $m_1$ cannot be hidden by
\autoref{lem:imc_int_forward}, so that $k=0$ and $m \imc_{\augP
\inter_\varphi \augQ} n$. But as both $m$ and $n$ are visible, this is
only possible if they both come from $\augQ$ or both come from $\augP$,
contradicting the courtesy of $\augQ$ or $\augP$. Likewise, if $n$ is
negative, we reason likewise relying on
\autoref{lem:imc_int_backward} instead.
\end{proof}

\begin{figure}[t]
\[
    \begin{tikzcd}[gs header,column sep=0pt,row sep=.5cm]
        o&\tensor&o&\vdash(&o&\tto&o&\tto&o&)\tto&o
        \\[-.5cm+1ex]
        &&&&&&&&&&\qu^-
        \\
        &&&&&&&&\qu^+\ar[urr,ddep]\ar[urr,sdep]
        \\
        &&&&\qu^-\ar[urrrr,ddep,]\ar[urrrr,sdep]
        &&\qu^-\ar[urr,ddep,]\ar[urr,sdep]
        \\
        \qu^+\ar[urrrr,ddep,]
        &&\qu^+\ar[urrrr,ddep,]
        \\
    \end{tikzcd}
\]
\caption{
    The composition $\augP \odot_\varphi \augQ$, with $\augQ$ and $\augP$ 
    from \autoref{fig:ex_interaction}.
}
\label{fig:ex_composition}
\end{figure}

The interaction in \autoref{fig:ex_interaction} yields
the augmentation in \autoref{fig:ex_composition}, the
\emph{composition of $\augQ$ and $\augP$ via $\varphi$}.
Composition is then extended to isogmentations:
for $\isogQ \in \IAug(A \vdash B)$ and $\isogP \in \IAug(B \vdash C)$,
we write $x^{\isogQ}_B\eqdef x^{\,\rep{\isogQ}}_B$
and $x^{\isogP}_B\eqdef x^{\,\rep{\isogP}}_B$, and then for
$\varphi : x^{\isogQ}_B \sym_B x^{\isogP}_B$, we define
$\isogP \odot_\varphi \isogQ \eqdef \ic{\rep{\isogP} \odot_\varphi \rep{\isogQ}}$.
We will soon show (in \autoref{subsec:catlaws}) that the composition of
augmentations is compatible with isomorphisms, so the previous lifting to
isogmentations does not depend on the choice of representatives.

One fact is puzzling: the composition of $\augQ$ and $\augP$ is only defined
once we have fixed a mediating $\varphi : x^\augQ_B \sym_B
x^\augP_B$, which is not unique -- for instance, writing $\qu\qu\in\Aug(o)$ for
the augmentation with two copies of the only move $\qu$ of $o$,
there are exactly two symmetries $\qu\qu \sym_o \qu\qu$.
Worse, the result of composition depends on the choice of $\varphi$:
if \autoref{fig:ex_interaction} was constructed with the symmetry
$\psi : \qu\qu \sym_o \qu\qu$ swapping the two moves, we would get the variant
$\augP \odot_\psi \augQ$ of \autoref{fig:ex_composition} with the two final
causal links crossed: $\isogP \odot_\varphi \isogQ$ and $\isogP \odot_\psi \isogQ$ 
are distinct, even up to iso.

This is reminiscent of the behaviour of resource substitution: e.g.,
\[
    \rsubst*{\labs{f}{\rappl{f}{\mset{x}\,\mset{x}}}}{x}{\mset{y,z}}
    = \labs{f}{\rappl{f}{\mset{y}\,\mset{z}}}
    + \labs{f}{\rappl{f}{\mset{z}\,\mset{y}}}\,.
\]
As substitution of resource terms yields \emph{sums} of resource terms, this
suggests that composition of isogmentations should produce \emph{sums} of
isogmentations, which we will call \emph{strategies}.

\paragraph{Strategies.} 
A \emph{strategy} is simply a weighted sum of isogmentations.

\begin{defi}\label{def:strategy}
A \definitive{strategy} on arena $A$ is a function
$\strS : \IAug(A) \to \rp$, 
where $\rp$ is the completed half-line of non-negative real numbers.
We then write $\strS : A$.
\end{defi}

We regard $\strS : A$ as a weighted sum
$\strS = \sum_{\isogQ \in \IAug(A)} \strS(\isogQ) \cdot \isogQ$,
and we write $\supp(\strS)$ for its support set:
$\supp(\strS)\eqdef\{\isogQ\in\IAug(A)\mid \strS(\isogQ)\not=0\}$.
We lift the composition of isogmentations to strategies via the formula
\begin{eqnarray}
\strT \odot \strS &\quad\eqdef\quad& \sum_{\isogQ \in \IAug(A \vdash B)}
\sum_{\isogP \in
\IAug(B \vdash C)} \sum_{\varphi : x^\isogQ_B \sym_B x^\isogP_B} \strS(\isogQ)
\strT(\isogP) \cdot (\isogP \odot_\varphi \isogQ)\label{eq:def_comp}
\end{eqnarray}
for $\strS : A \vdash B$
and $\strT : B \vdash C$, \emph{i.e.}\ $(\strT \odot \strS)(\isogR)$ is
the sum of $\strS(\isogQ) \strT(\isogP)$ over all triples $\isogQ, \isogP,
\varphi$ \emph{s.t.} $\isogR = \isogP \odot_\varphi \isogQ$ -- there are
no convergence issues, as we consider positive coefficients and we have been
careful to include $+\infty\in\rp$ in \autoref{def:strategy}.

\paragraph{Identities.}
We also introduce identities: \emph{copycat strategies}, formal sums of
specific isogmentations presenting typical copycat behaviour;
we start by defining their concrete representatives. 

Consider $x \in \conf(A)$ on negative arena $A$.
The augmentation $\augCC_x \in \Aug(A\vdash A)$,
called the \definitive{copycat} augmentation on $x$, has
desequentialization $\deseq{\augCC_x} \eqdef x \vdash x$
and causal order $x \vdash x$, augmented with
\[  
\begin{array}{rclcl}
(1, m) &\leq_{\augCC_x}& (2, n) &\qquad& \text{if $m \leq_x n$ and
$\pol_A(\display_x(m)) = +$,}\\ 
(2, m) &\leq_{\augCC_x}& (1, n) &\qquad& \text{if $m \leq_x n$ and
$\pol_A(\display_x(m)) = -$,}
\end{array}
\]
so $\augCC_x$ adds to $x \vdash x$ all
immediate causal links of the form $(2, m) \imc (1, m)$ for negative
$m$, and $(1, m) \imc (2, m)$ for positive $m$. Again, this lifts to
isogmentations by setting, for $\posX \in \pos(A)$, the \definitive{copycat
isogmentation} $\isogCC_\posX \in \IAug(A \vdash A)$ as the isomorphism
class of $\augCC_{\rep{\posX}}$.

The strategy $\id_A : A \vdash A$ should have the isogmentation
$\isogCC_\posX$ for all position $\posX \in \pos(A)$. But with which
coefficient?
If $\id_A$ is to be an identity, then there is no choice:
this coefficient must exactly cancel the sum over all symmetries in
\eqref{eq:def_comp}. Thus:
\begin{eqnarray}
    \id_A &\eqdef& \sum_{\posX \in \pos(A)} \frac{1}{\nSymOf{\posX}} \cdot \isogCC_\posX
\label{eq:def_cc}
\end{eqnarray}
where $\SymOf{\posX}$ is the group of \emph{endosymmetries} of $\posX$,
\emph{i.e.}\ of all $\varphi : \rep{\posX} \sym_A \rep{\posX}$ --
the cardinality of $\SymOf{\posX}$ does not depend on the choice of
$\rep{\posX}$.
This use of such a coefficient to compensate for future sums over sets of
permutations is reminiscent of the Taylor expansion of $\lambda$-terms
\cite{ER08}.

\subsection{Proof of the Categorical Laws}
\label{subsec:catlaws}

In this section, we show the main arguments behind the following
result:

\begin{thm}\label{th:strat_cat}
The negative arenas and strategies between them form a category, $\Strat$.
\end{thm}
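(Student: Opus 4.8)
The proof requires checking four things: that composition of strategies defined by \eqref{eq:def_comp} is well-posed; that the result is again a strategy $A \vdash C$ (immediate, since all coefficients lie in the complete half-line $\rp$ and are non-negative, so the possibly infinite sums converge in $\rp$); that $\id_A$ of \eqref{eq:def_cc} is a two-sided unit; and that composition is associative. I would begin with well-posedness: given isomorphisms of augmentations $\alpha : \augQ \iso \augQ'$, $\beta : \augP \iso \augP'$ and a mediating symmetry $\varphi : x^\augQ_B \sym_B x^\augP_B$, transport $\varphi$ along $\alpha$ and $\beta$ to a symmetry $\varphi' : x^{\augQ'}_B \sym_B x^{\augP'}_B$; the event bijection $\alpha + \beta$ then carries the relation $\causes$ of Definition~\ref{def:inter} for $\augP \inter_\varphi \augQ$ onto that for $\augP' \inter_{\varphi'} \augQ'$ and restricts to the visible events, so $\augP' \odot_{\varphi'} \augQ' \iso \augP \odot_\varphi \augQ$. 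Since changing the canonical representatives of $\isogQ$ and $\isogP$ similarly only transports $\varphi$ along an iso, the isogmentation $\isogP \odot_\varphi \isogQ$ is insensitive to those choices, and \eqref{eq:def_comp} is well-defined.

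For the unit laws, fix $\strS : A \vdash B$ and unfold $\id_B \odot \strS$ with \eqref{eq:def_comp} and \eqref{eq:def_cc}: the summands are indexed by $\isogQ \in \IAug(A \vdash B)$, a position $\posY \in \pos(B)$, and a symmetry $\varphi : x^\isogQ_B \sym_B \rep{\posY}$ (the left-hand $B$-component of $\deseq{\augCC_{\rep{\posY}}} = \rep{\posY} \vdash \rep{\posY}$), with weight $\strS(\isogQ)/\nSymOf{\posY}$ and contribution $\isogCC_\posY \odot_\varphi \isogQ$. As symmetric configurations determine the same position, only $\posY = \ic{x^\isogQ_B}$ contributes, and for that $\posY$ the symmetries $x^\isogQ_B \sym_B \rep{\posY}$ form a torsor under $\SymOf{\posY}$, hence number exactly $\nSymOf{\posY}$. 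The lemma to establish here is that $\isogCC_{\ic{x^\isogQ_B}} \odot_\varphi \isogQ = \isogQ$ for each such $\varphi$: forming the interaction $\augCC_{\rep{\posY}} \inter_\varphi \rep{\isogQ}$ and hiding the copy of $\rep{\posY}$, the copycat forwarding links exhibit an isomorphism of the visible sub-augmentation with $\rep{\isogQ}$, the choice of $\varphi$ amounting only to an invisible recoordinatization of the $B$-boundary. Summing over the $\nSymOf{\posY}$ choices of $\varphi$ cancels the $1/\nSymOf{\posY}$, so the coefficient of $\isogQ$ in $\id_B \odot \strS$ is $\strS(\isogQ)$; the law $\strS \odot \id_A = \strS$ is symmetric. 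This is precisely what forces the normalizing coefficient in \eqref{eq:def_cc}.

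Associativity is the main obstacle. For composable $\strS_1 : A \vdash B$, $\strS_2 : B \vdash C$, $\strS_3 : C \vdash D$, I would expand $(\strS_3 \odot \strS_2) \odot \strS_1$ and $\strS_3 \odot (\strS_2 \odot \strS_1)$ with \eqref{eq:def_comp} and observe that, after identifying the domain-side boundary of each composite with that of its left component and the codomain-side boundary with that of its right component (legitimate by the well-posedness above), both become sums over one and the same index set: triples $\isogP_1 \in \IAug(A \vdash B)$, $\isogP_2 \in \IAug(B \vdash C)$, $\isogP_3 \in \IAug(C \vdash D)$ together with symmetries $\varphi : x^{\isogP_1}_B \sym_B x^{\isogP_2}_B$ and $\psi : x^{\isogP_2}_C \sym_C x^{\isogP_3}_C$; crucially, iterating $\odot$ introduces no further symmetries and no spurious endosymmetry factors. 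The statement then reduces to the single identity of isogmentations $(\isogP_3 \odot_\psi \isogP_2) \odot_\varphi \isogP_1 = \isogP_3 \odot_\psi (\isogP_2 \odot_\varphi \isogP_1)$, which I would prove on representatives by forming the three-party interaction on $\ev{\rep{\isogP_1}} + \ev{\rep{\isogP_2}} + \ev{\rep{\isogP_3}}$, ordered by the transitive closure of the three internal orders together with the two symmetry-forwarding relations from Definition~\ref{def:inter}; showing it is acyclic exactly as in Proposition~\ref{prop:inter_acyclic}, by the same adaptation of \cite{CC21}; and then checking that hiding the whole of $\ev{\rep{\isogP_2}}$ yields the same augmentation no matter in which order the two outer hidings are performed. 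This last zipping step is the delicate point: it demands a careful account of which events are visible at each stage and of how the two resulting causal orders are generated, for which the three-party analogues of the immediate-causality characterizations of Lemmas~\ref{lem:imc_int_forward} and~\ref{lem:imc_int_backward} are the essential ingredient. Everything outside this lemma is routine bookkeeping.
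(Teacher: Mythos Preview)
Your proposal is correct and follows essentially the same route as the paper: you establish congruence of composition with isomorphisms (the paper's Lemma~\ref{lem:iso_comp}), copycat neutrality at the level of augmentations (Lemma~\ref{lem:cc_neutral_aug}) together with the torsor counting argument that cancels the $1/\nSymOf{\posY}$ coefficient (Proposition~\ref{prop:cc_neutral}), and associativity via a ternary interaction whose acyclicity is proved as in Proposition~\ref{prop:inter_acyclic} (Lemma~\ref{lem:comp_associative_aug}), then lift to strategies by the reindexing/bilinearity argument you sketch. The only cosmetic difference is that the paper records the tag adjustments $\varphi', \psi'$ explicitly in the associativity lemma, whereas you absorb them into the boundary identification justified by congruence.
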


This is proved in several stages. Firstly, we establish isomorphisms
corresponding to categorical laws, working concretely on augmentations
-- this means that these laws will refer to certain isomorphisms
explicitly. Then, we show that composition of augmentations is
compatible with isomorphisms, so that it carries out to isogmentations.
From all that, we are in position to conclude and prove that $\Strat$
is indeed a category.

\paragraph{Laws on the composition of augmentations.}
The following lemma specifies in what sense the copycat augmentation
is neutral for composition:

\begin{lem}[Neutrality]\label{lem:cc_neutral_aug}
    Consider $\augQ\in\Aug(A\vdash B)$, $x\in\conf(B)$ and
    $\varphi:x^\augQ_B \iso_B x$.
    Then, $\augCC_x \odot_\varphi \augQ \iso \augQ$.
    Likewise, for any $y \in \conf(A)$ and $\psi:y \iso_A x^\augQ_A$,
    we have $\augQ \odot_\psi \augCC_y \iso \augQ$.
\end{lem}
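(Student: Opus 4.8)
The plan is to prove the first isomorphism $\augCC_x \odot_\varphi \augQ \iso \augQ$ concretely, by exhibiting an explicit bijection between the event sets and checking it respects all the structure of augmentations; the second isomorphism $\augQ \odot_\psi \augCC_y \iso \augQ$ is then symmetric and I would only sketch it. Recall that the composite $\augCC_x \odot_\varphi \augQ$ is obtained by restricting the interaction $\augCC_x \inter_\varphi \augQ$ to $x^\augQ_A + x^{\augCC_x}_C$; but $\augCC_x$ lives in $\Aug(B \vdash B)$ with desequentialization $x \vdash x$, so $x^{\augCC_x}_B = x^{\augCC_x}_C = x$ (its two copies of $x$), and the events surviving the hiding are exactly those of $\augQ$ displaying to $A$, together with the ``right'' copy of $x$ inside $\augCC_x$. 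Since $\varphi : x^\augQ_B \iso_B x$ identifies the middle state of $\augQ$ with the left copy of $x$ in $\augCC_x$, and copycat just mirrors that onto the right copy, the visible events of the composite are in natural bijection with $\ev{\augQ} = x^\augQ_A + x^\augQ_B$: an $A$-event of $\augQ$ maps to itself, and a $B$-event $m$ of $\augQ$ maps to the copycat image of $\varphi(m)$ on the right-hand $B$.

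**Key steps.** First I would spell out this bijection $\theta : \ev{\augQ} \to \ev{\augCC_x \odot_\varphi \augQ}$ precisely, using the tags from Definition~\ref{def:inter}: $\theta((1,m)\text{-side}) $ sends an $A$-event $m' \in \ev{\augQ}$ to $(1, m')$ and a $B$-event $m'$ to $(2, \varphi(m'))$. Second, I would check $\theta$ preserves and reflects the display maps: on $A$-events this is immediate, and on $B$-events it follows because $\varphi$ is a symmetry (hence commutes with display into $B$) and because copycat's right copy has display map $\display_x$, matching $A \vdash C$ on the $C = B$ side. Third, and this is the crux, I would show $\theta$ is an order-isomorphism for both the desequentialized order $\leq_{\deseq{-}}$ and the ``dynamic'' order $\leq$. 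For $\leq_{\deseq{-}}$ this is essentially transport of the forest structure of $\deseq\augQ$ along $\varphi$ together with the definition of $\deseq{\augCC_x \odot_\varphi \augQ} = x^\augQ_A \vdash x^\augQ_C$. For the dynamic order I would use Lemmas~\ref{lem:imc_int_forward} and~\ref{lem:imc_int_backward} to characterise immediate causality in the interaction $\augCC_x \inter_\varphi \augQ$, then argue that hiding the intermediate (left-$B$) copycat events collapses each length-two zigzag through copycat back to a single edge of $\augQ$: an edge $m' \imc_\augQ n'$ with $m'$ positive on $B$ travels $(1,m') \causes_\varphi (\text{left-}B) \imc_{\augCC_x} (\text{right-}B) \causes$ \dots, which after hiding yields exactly $\theta(m') \imc \theta(n')$, and conversely every visible immediate edge of the composite arises this way. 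Finally, the conditions \emph{rule-abiding}, \emph{courteous}, \emph{deterministic}, \emph{$+$-covered}, \emph{negative} are then inherited from $\augQ$ via $\theta$, or can simply be cited from the fact — already proved — that the composite is an augmentation, so that only the order-isomorphism property genuinely needs checking.

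**Main obstacle.** The delicate part is the analysis of immediate causality through the copycat component and showing nothing is lost or spuriously added by hiding — in particular that a single copycat link $(1,m)\imc(2,m)$ (or $(2,m)\imc(1,m)$) is exactly what bridges consecutive $\augQ$-edges, with polarities lining up so that the zigzag has length two and no shorter path exists. This is where the courtesy and determinism of $\augQ$, together with the fact that $\varphi$ and the copycat order are built from the \emph{same} configuration $x$ (up to $\varphi$), are all used simultaneously; the polarity bookkeeping across the three arenas $A \vdash B$, $B \vdash B$, $B \vdash C$ (with $B = C$) is easy to get wrong. I would therefore organise this step as a case distinction on the polarity in $\augQ$ of the source of the edge, mirroring the case split in Lemma~\ref{lem:imc_int_forward}, and treat the backward direction dually via Lemma~\ref{lem:imc_int_backward}.
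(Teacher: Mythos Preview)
Your approach is correct and essentially the same as the paper's: both construct the explicit bijection between $\ev{\augCC_x \odot_\varphi \augQ} = x^\augQ_A + (\emptyset + x)$ and $\ev{\augQ} = x^\augQ_A + x^\augQ_B$ using $\varphi$ (or $\varphi^{-1}$) on the $B$-component, then verify it is an isomorphism of augmentations. The paper compresses your order-isomorphism verification into the single phrase ``a direct analysis of the causal order of $\augCC_x \odot_\varphi \augQ$''; your plan to carry this out via the case analysis of Lemmas~\ref{lem:imc_int_forward} and~\ref{lem:imc_int_backward} is exactly how one would flesh that phrase out.
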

\begin{proof}
Recall that $\augCC_x \odot_\varphi \augQ$ is obtained by
considering $\augCC_x \inter_\varphi \augQ$ with events
$\ev{\augQ} + \ev{\augCC_x}$,
\emph{i.e.}\ $\ev{\augQ} + (x + x)$ with causal order as
described in \autoref{prop:inter_acyclic}. The composition
$\augCC_x \odot_\varphi \augQ$ is then the restriction to its visible events,
\emph{i.e.}\ $x^\augQ_A + (\emptyset + x)$. Then
\[ 
\ev{\augCC_x \odot_\varphi \augQ} = x^\augQ_A + (\emptyset + x)
\bij
x^\augQ_A + x
\stackrel{x^\augQ_A + \varphi^{-1}}{\bij}
x^\augQ_A + x^\augQ_B
\bij
\ev{\augQ}
\]
forms a bijection between the sets of events, which is checked to be an
isomorphism of augmentations by a direct analysis of the causal order
of $\augCC_x \odot_\varphi \augQ$.
\end{proof}

It may be surprising that $\augCC_x \odot_\varphi \augQ \iso \augQ$ regardless
of $\varphi$:
the choice of the symmetry is reflected in the induced isomorphism
$\phi_{\varphi}\colon \augCC_x \odot_\varphi \augQ \iso \augQ$,
which the statement of this lemma ignores. 
Similarly, we have:

\begin{lem}[Associativity]\label{lem:comp_associative_aug}
        Consider $\augQ \in \Aug(A \vdash B)$, $\augP \in \Aug(B \vdash C)$,
        $\augR \in \Aug(C \vdash D)$, and two symmetries
        $\varphi : x^\augQ_B \iso_B x^\augP_B$ and $\psi : x^\augP_C \iso_C x^\augR_C$.
        Then
        \[
        \augR \odot_{\psi'} \left( \augP \odot_\varphi \augQ \right)
        \iso
        \left( \augR \odot_{\psi} \augP  \right) \odot_{\varphi'} \augQ \,.
        \]
with $\varphi', \psi'$ obtained from $\varphi$ and $\psi$,
adjusting tags for disjoint unions in the obvious way.
\end{lem}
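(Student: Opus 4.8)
Associativity of composition is really a statement about the triple interaction, so the proof strategy is the classic one from game semantics: form a single ``ternary interaction'' $\augR \inter_{\psi} \augP \inter_\varphi \augQ$ on the event set $\ev{\augQ} + \ev{\augP} + \ev{\augR}$, establish that it is well-defined (a poset), and then recognise both sides of the claimed equation as the restriction of this common object to the visible events $x^\augQ_A + x^\augR_D$. The key lemma to isolate first is an \emph{associativity of hiding}: if one hides $B$-events then $C$-events, one gets the same poset (up to the canonical retagging bijection) as hiding $C$-events then $B$-events, and both equal hiding everything in $B \cup C$ at once. This is where Lemmas~\ref{lem:imc_int_forward} and~\ref{lem:imc_int_backward} do the heavy lifting: they guarantee that immediate causal links through hidden events compose cleanly, so that the transitive closure computed after partial hiding agrees with the one computed after full hiding.

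\textbf{Concretely I would proceed in the following steps.} First I define the ternary interaction: its carrier is the disjoint union of the three event sets, and its order is the transitive closure of ${\causes}_\augQ \cup {\causes}_\augP \cup {\causes}_\augR \cup {\causes}_\varphi \cup {\causes}_\psi$ (with tags adjusted for the ternary disjoint union), where ${\causes}_\varphi$ synchronises $\augQ$ and $\augP$ over $B$ and ${\causes}_\psi$ synchronises $\augP$ and $\augR$ over $C$. Second, I show this relation is acyclic: this follows by exactly the deadlock-freeness argument invoked for Proposition~\ref{prop:inter_acyclic} (the proof in Appendix~\ref{subsec:deadlockfree}), adapted to a chain of three components rather than two — the secure-order / deadlock-free argument for innocent strategies is insensitive to the length of the chain. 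Third, I identify $\augP \inter_\varphi \augQ$ (with the $C$-part of the interface left ``dangling'') as a sub-poset of the ternary interaction, and likewise $\augR \inter_\psi \augP$; then I check that composing on the left with $\augR$ via $\psi'$, resp. on the right with $\augQ$ via $\varphi'$, yields in each case the restriction of the ternary interaction to $x^\augQ_A + x^\augR_D$. The crucial point here is that $\psi'$ (resp. $\varphi'$) is literally $\psi$ (resp. $\varphi$) after retagging — the mediating symmetry is unchanged by whether the other composition has been performed — so the synchronisation clauses match up on the nose. Fourth, I invoke the associativity-of-hiding lemma to conclude that both iterated restrictions coincide with the single restriction of the ternary object, hence with each other, and I read off the retagging bijection $\ev{\augQ} + \ev{\augP} + \ev{\augR} \restriction (x^\augQ_A + x^\augR_D)$ as the desired isomorphism of augmentations, checking that it preserves $\leq$, $\leq_{\deseq{-}}$ and the display maps (the display maps are inherited component-wise on both sides, so this is immediate).

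\textbf{The main obstacle I anticipate} is the associativity-of-hiding lemma — showing that partial hiding commutes with further hiding. The subtlety is that when we hide the $B$-events first, we take a transitive closure and replace maximal chains of hidden events by direct links; we must be sure that when we subsequently hide $C$-events, no ``spurious'' order is created or destroyed relative to hiding everything at once. The tool for this is precisely Lemmas~\ref{lem:imc_int_forward} and~\ref{lem:imc_int_backward}: an immediate causal link in the (binary) interaction out of a positive event, or into a negative event, never passes through a synchronisation clause, so it survives hiding unchanged; and any chain $m \imc m_1 \imc \cdots \imc m_k \imc n$ with all $m_j$ hidden must, by the polarity analysis in those lemmas, have $m$ negative-visible or $n$ positive-visible, which constrains the shape of such chains enough to see that the closure is computed consistently. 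I would state this hiding-associativity as a standalone lemma (it is of independent interest and reused implicitly in the interchange laws), prove it by the above case analysis on the polarity of the endpoints of an immediate link, and then the rest of the associativity proof is bookkeeping with disjoint-union tags.
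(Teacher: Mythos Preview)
Your approach is essentially the paper's: it too defines a ternary composition $\augR \odot^3_\psi \augP \odot^3_\varphi \augQ$ and relates both iterated compositions to it, calling the verification routine.

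One remark on your anticipated obstacle. You frame ``associativity of hiding'' as needing Lemmas~\ref{lem:imc_int_forward} and~\ref{lem:imc_int_backward}, but those lemmas characterise \emph{immediate} causality and are used in the paper to show that a binary composition is again an augmentation (courteous, deterministic, forest). For associativity of the \emph{order} itself, the argument is more elementary: the order on $\augP \odot_\varphi \augQ$ is simply the restriction of the interaction order, not a fresh transitive closure, so a chain in the ternary interaction between two $A$- or $D$-events projects to a chain in either iterated interaction just by observing that the hidden $B$-events are touched only by $\causes_\augQ$, $\causes_\augP$, $\causes_\varphi$ (never by $\causes_\augR$ or $\causes_\psi$), and symmetrically for the hidden $C$-events. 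No polarity case analysis is required there. The genuinely non-trivial ingredient is, as you note, acyclicity of the ternary interaction, which indeed follows by the same deadlock-freeness argument as Proposition~\ref{prop:inter_acyclic}.
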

\begin{proof}
A routine proof, relating the two compositions to a ternary composition
$\augR \odot^3_\psi \augP \odot^3_\varphi \augQ : \Aug(A \vdash D)$, defined in
a way similar to binary composition.
\end{proof}

\paragraph{Congruence.}
To lift the previous results to isogmentations and strategies, composition
must preserve isomorphisms.

Consider augmentations $\augQ, \augQ' \in \Aug(A\vdash B)$ with
$\varphi : \augQ \iso \augQ'$, we know that $\varphi$ is an isomorphism of
configurations $\varphi : \deseq{\augQ} \iso_{A \vdash B} \deseq{\augQ'}$ --
a symmetry -- therefore it projects to symmetries
$\varphi_A : x^\augQ_A \iso_A x^{\augQ'}_A$ and
$\varphi_B : x^\augQ_B \iso_B x^{\augQ'}_B$.

\begin{lem}\label{lem:iso_comp}
    Consider $\augQ, \augQ' \in \Aug(A\vdash B)$,
    $\augP, \augP' \in \Aug(B \vdash C)$,
    and isomorphisms $\theta : x^\augQ_B \iso_B x^\augP_B$,
    $\theta' : x^{\augQ'}_B \iso_B x^{\augP'}_B$,
    $\varphi : \augQ \iso \augQ'$ and $\psi : \augP \iso \augP'$
    such that $\theta' \circ \varphi_B = \psi_B \circ \theta$.

Then, we have an isomorphism $\psi \odot_{\theta, \theta'}
\varphi : \augP \odot_\theta \augQ \iso \augP' \odot_{\theta'} \augQ'$. 
\end{lem}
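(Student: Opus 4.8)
The plan is to construct the isomorphism $\psi \odot_{\theta,\theta'} \varphi$ explicitly on events and then verify it preserves all the structure, proceeding in two stages: first at the level of interactions, then descending to the composition by hiding. Since $\augP \inter_\theta \augQ$ has event set $\ev{\augQ} + \ev{\augP}$ and $\augP' \inter_{\theta'} \augQ'$ has event set $\ev{\augQ'} + \ev{\augP'}$, the obvious candidate is $\varphi + \psi$, i.e. the map sending $(1,m) \mapsto (1,\varphi(m))$ and $(2,m) \mapsto (2,\psi(m))$. First I would check that $\varphi + \psi$ is an order-isomorphism from $\augP \inter_\theta \augQ$ to $\augP' \inter_{\theta'} \augQ'$: since the causal order on an interaction is the transitive closure of $\causes_\augQ \cup \causes_\augP \cup \causes_\theta$, it suffices to check that $\varphi + \psi$ maps each of these three relations onto the corresponding one for the primed data. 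For $\causes_\augQ$ and $\causes_\augP$ this is immediate since $\varphi$ and $\psi$ are isomorphisms of augmentations, hence preserve $<_\augQ$ and $<_\augP$. For $\causes_\theta$, the point is exactly the compatibility hypothesis $\theta' \circ \varphi_B = \psi_B \circ \theta$: a link $((1,m),(2,\theta(m)))$ with $m \in x^\augQ_B$ positive is sent to $((1,\varphi(m)),(2,\psi(\theta(m)))) = ((1,\varphi_B(m)),(2,\theta'(\varphi_B(m))))$, which is a link of $\causes_{\theta'}$ precisely because $\varphi_B(m) \in x^{\augQ'}_B$ and is still positive (polarities are preserved, as $\varphi$ is a symmetry over $A \vdash B$, so $\display$ is preserved). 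The symmetric half of $\causes_\theta$ is handled dually using $\theta'^{-1} \circ \psi_B = \varphi_B \circ \theta^{-1}$, which follows from the hypothesis.

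Next I would descend to the composition. Recall $\augP \odot_\theta \augQ$ is the restriction of $\augP \inter_\theta \augQ$ to its visible events $x^\augQ_A + x^\augP_C$, with desequentialization $x^\augQ_A \vdash x^\augP_C$ and display maps inherited from $\augQ$ and $\augP$ componentwise. Since $\varphi$ restricts to $\varphi_A : x^\augQ_A \iso_A x^{\augQ'}_A$ and $\psi$ restricts to $\psi_C : x^\augP_C \iso_C x^{\augP'}_C$, the map $\varphi + \psi$ restricts to a bijection from the visible events of $\augP \inter_\theta \augQ$ to those of $\augP' \inter_{\theta'} \augQ'$. I would then define $\psi \odot_{\theta,\theta'} \varphi$ to be this restriction, $\varphi_A + \psi_C$, and check it is an isomorphism of augmentations: it preserves $\leq_{\augP \odot_\theta \augQ}$ because that is the restriction of the interaction order, which $\varphi + \psi$ preserves; it preserves $\leq_{\deseq{\augP \odot_\theta \augQ}} = x^\augQ_A \vdash x^\augP_C$ because $\varphi_A$ and $\psi_C$ are order-isomorphisms of configurations; and it commutes with display maps since $\display_{\augP \odot_\theta \augQ}((1,m)) = \display_\augQ((1,m))$ is respected by $\varphi_A$ (as $\varphi$ preserves $\display_\augQ$) and similarly on the $\augP$-side via $\psi_C$. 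One subtlety worth addressing is that the event sets of $\augP \odot_\theta \augQ$ are literally $x^\augQ_A + x^\augP_C$ inside $\ev{\augQ}+\ev{\augP}$ (with the empty middle component dropped as in the proof of the composition proposition), so I should be careful in writing the map to use the honest tags; but this is bookkeeping, not a real difficulty.

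The main obstacle — though a mild one — is making precise the claim that the decomposition of $\varphi$ as $\varphi_A \vdash \varphi_B$ interacts correctly with the two \emph{distinct} roles played by the middle arena: $\varphi_B$ governs the interface with $\augP$ via $\theta$, while $\varphi_A$ governs the visible $A$-part, and one must confirm that these two restrictions of the single map $\varphi$ are exactly the pieces that appear in the compatibility condition and in the visible-event bijection respectively. This is where the hypothesis $\theta' \circ \varphi_B = \psi_B \circ \theta$ is doing all the work, and writing it out carefully on the two sub-clauses of $\causes_\theta$ is the crux of the argument. I would also note, for later use (it is hinted at in the surrounding text), that the construction is functorial in the sense that it respects composition of isomorphisms and sends identity isomorphisms to identities, and that — once applied to endosymmetries — it is what makes the sum over symmetries in \eqref{eq:def_comp} well-defined on isogmentations; but strictly for this lemma only the single isomorphism needs to be produced.
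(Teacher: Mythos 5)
Your proposal is correct and matches the paper's approach: the paper's proof simply states that the obvious morphism (here $\varphi+\psi$ on the interaction, restricted to $\varphi_A+\psi_C$ on visible events) is checked directly to be an isomorphism, and your writeup is exactly that verification, with the compatibility hypothesis $\theta'\circ\varphi_B=\psi_B\circ\theta$ used precisely where it must be, on the two clauses of $\causes_\theta$.
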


The proof is a direct verification that the obvious morphism between
$\augP \odot_\theta \augQ$ and $\augP' \odot_{\theta'} \augQ'$ is indeed an
isomorphism.
The main consequence of this lemma is the following. Consider
$\augQ, \augQ' \in \Aug(A\vdash B)$, $\augP, \augP' \in \Aug(B \vdash C)$,
isomorphisms $\varphi : \augQ \iso \augQ'$ and $\psi : \augP \iso \augP'$,
not requiring any commutation property as above. Still, $\varphi$ and
$\psi$ project to symmetries
\[
\varphi_B : x^\augQ_B \sym_B x^{\augQ'}_B\,,
\qquad
\qquad
\psi_B : x^\augP_B \sym_B x^{\augP'}_B\,.
\]
inducing a bijection
\[
\begin{array}{rcrcl}
\chi &:& x^\augQ_B \sym_B x^\augP_B &\bij& x^{\augQ'}_B \sym_B x^{\augP'}_B\\
&& \theta &\mapsto& \psi_B \circ \theta \circ \varphi_B^{-1}\,,
\end{array}
\]
so that for any $\theta : x^\augQ_B \sym_B x^\augP_B$, we have $\augP
\odot_\theta \augQ \iso \augP' \odot_{\chi(\theta)} \augQ'$ by
\autoref{lem:iso_comp}.  It ensues that we can substitute one
representative for another when summing over all mediating symmetries: 
\[
\sum_{\theta : x^\augQ_B \sym_B x^\augP_B} \ic{\augP \odot_\theta \augQ} 
= \sum_{\theta : x^\augQ_B \sym_B x^\augP_B}
\ic{\augP' \odot_{\chi(\theta)} \augQ'}
= \sum_{\theta : x^{\augQ'}_B \sym_B x^{\augP'}_B}
\ic{\augP' \odot_{\theta} \augQ'}
\]
using the observation above, and re-indexing the sum following $\chi$ --
or in other words, the composition of strategies does not depend on the
choice of representative used for isogmentations. This is often
used silently throughout the development.

\paragraph{Categorical laws.} We are now equipped to show
\autoref{th:strat_cat}. First, the identity laws:

\begin{prop}\label{prop:cc_neutral}
        Consider $\strS : A \vdash B$. Then, $\id_B \odot \strS 
        = \strS \odot \id_A = \strS$.
\end{prop}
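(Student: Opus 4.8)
The plan is to reduce the identity laws for strategies to the neutrality law for the composition of augmentations, namely \cref{lem:cc_neutral_aug}, while carefully bookkeeping the coefficients coming from the sum over symmetries in \eqref{eq:def_comp} and the normalizing factors $1/\nSymOf{\posX}$ in the definition \eqref{eq:def_cc} of $\id_A$. I will treat $\id_B \odot \strS = \strS$ in detail; the case $\strS \odot \id_A = \strS$ is symmetric.

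First I would unfold the definitions. By \eqref{eq:def_cc} and bilinearity of $\odot$ over the positive reals, $(\id_B \odot \strS)(\isogR)$ is the sum, over all positions $\posY \in \pos(B)$, all $\isogQ \in \IAug(A \vdash B)$, and all symmetries $\varphi : x^\isogQ_B \sym_B x^{\isogCC_\posY}_B = \rep{\posY}$, of $\frac{1}{\nSymOf{\posY}}\,\strS(\isogQ) \cdot [\,\isogR = \isogCC_\posY \odot_\varphi \isogQ\,]$. By \cref{lem:cc_neutral_aug}, whenever such a $\varphi$ exists we have $\isogCC_\posY \odot_\varphi \isogQ = \isogQ$, so the term contributes exactly when $\isogR = \isogQ$; moreover a symmetry $\varphi : x^\isogQ_B \sym_B \rep{\posY}$ exists precisely when $\posY = \ic{x^\isogQ_B}$, the position reached by $\isogQ$ on $B$, and in that case the set of such $\varphi$ is a torsor under $\SymOf{\posY}$, hence has cardinality exactly $\nSymOf{\posY}$. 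Therefore the inner double sum collapses: $(\id_B \odot \strS)(\isogR) = \strS(\isogR) \cdot \nSymOf{\posY} \cdot \frac{1}{\nSymOf{\posY}} = \strS(\isogR)$, where $\posY = \ic{x^\isogR_B}$.

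The two points that need a line of justification are: (i) that the number of symmetries $\varphi : x^\isogQ_B \sym_B \rep{\posY}$ equals $\nSymOf{\posY}$ — this follows because, fixing one such $\varphi_0$, postcomposition with endosymmetries of $\rep{\posY}$ gives a bijection between $\SymOf{\posY}$ and this set, and the count is independent of the chosen representative $\rep{\posY}$ as already noted after \eqref{eq:def_cc}; and (ii) that the lifting $\isogCC_\posY \odot_\varphi \isogQ$ to isogmentations is well-defined and indeed equals $\isogQ$, which is exactly the content of \cref{lem:cc_neutral_aug} together with the congruence discussion (the paragraph on \cref{lem:iso_comp}) ensuring independence from representatives. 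I expect the main obstacle to be purely a matter of care rather than depth: making sure the torsor count is correct (in particular that there are no convergence subtleties — there are none, since for each fixed $\isogR$ only finitely many $\isogQ$ and finitely many $\varphi$ contribute, as $x^\isogR_B$ is finite), and that the symmetric argument for $\strS \odot \id_A$ genuinely mirrors this one, using the second half of \cref{lem:cc_neutral_aug} with $\psi : y \iso_A x^\augQ_A$.
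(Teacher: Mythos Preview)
Your proposal is correct and follows essentially the same route as the paper: unfold the definitions, apply \cref{lem:cc_neutral_aug} to reduce $\isogCC_\posY \odot_\varphi \isogQ$ to $\isogQ$, observe that only $\posY = \ic{x^\isogQ_B}$ contributes, and cancel the coefficient $1/\nSymOf{\posY}$ against the number of mediating symmetries. Your torsor phrasing makes explicit what the paper compresses into ``direct reasoning on symmetries,'' and your reference to \cref{lem:iso_comp} for well-definedness on isogmentations is a welcome clarification the paper leaves implicit.
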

\begin{proof}
We focus on $\id_B \odot \strS$.
For any $\isogP \in \IAug(B \vdash B)$, we write
$\deseq{\rep{\isogP}} = x^\isogP_\lside \vdash x^\isogP_\rside$.
We have:
\begin{eqnarray*}
\id_B \odot \strS
        & = &
        \sum_{\isogQ \in \IAug(A \vdash B)}\sum_{\isogP \in \IAug(B \vdash
B)}
        \sum_{\varphi: x^\isogQ_B \iso_B x^\isogP_\lside} \strS(\isogQ)
\id_B(\isogP) \cdot
        \isogP \odot_\varphi \isogQ
        \\
        & = &
        \sum_{\isogQ \in \IAug(A \vdash B)}\sum_{\posX \in \pos(B)}
        \sum_{\varphi: x^\isogQ_B \iso_B \rep{\posX}}
        \frac{\strS(\isogQ)}{\nSymOf{\posX}} \cdot
        \isogCC_\posX \odot_\varphi \isogQ
\end{eqnarray*}
using the definitions of the composition and of the identity strategy.
Next, we compute
\begin{eqnarray*}
\id_B \odot \strS
        & = &
        \sum_{\isogQ \in \IAug(A \vdash B)}\sum_{\posX \in \pos(B)}
        \sum_{\varphi: x^\isogQ_B \iso_B \rep{\posX}}
        \frac{\strS(\isogQ)}{\nSymOf{\posX}} \cdot \isogQ
        \\
        & = &
        \sum_{\isogQ \in \IAug(A \vdash B)}
        \sum_{\varphi \in \Sym(\ic{x^{\isogQ}_B})}
        \frac{\strS(\isogQ)}{\nSymOf{\ic{x^{\isogQ}_B}}} \cdot \isogQ
        \\
        & = &
        \sum_{\isogQ \in \IAug(A \vdash B)}
        \strS(\isogQ)\cdot \isogQ
        \quad =\quad \strS
\end{eqnarray*}
by \autoref{lem:cc_neutral_aug} and direct
reasoning on symmetries.
The proof of the identity $\strS \odot \id_A = \strS$ is symmetric.  
\end{proof}

Notice how the sum over all symmetries exactly compensates for the
coefficient in \eqref{eq:def_cc}. Likewise, associativity of the
composition of strategies follows from
\autoref{lem:comp_associative_aug} and bilinearity of composition,
altogether concluding the proof that $\Strat$ is a category.

\section{Resource Categories}
\label{sec:resource_categories}

What further structure does $\Strat$ enjoy, that we could leverage to
design an interpretation of the resource calculus? It is
not clear how to turn $\Strat$ into any existing notion of differential
category. It is not cartesian, and thus is not a cartesian differential
category \cite{BCS09}.  Neither does it have the structure of a model of
differential linear logic; it is additive and symmetric monoidal but
there is no clear candidate for the exponential modality. Overall, it should be
regarded as a new categorical structure, whose morphisms are analogous
to (sums of) \emph{bags} of resource terms inhabiting a simple type,
rather than single resource terms.

Thus we develop \emph{resource categories}, models of the resource
calculus inspired by $\Strat$.

\subsection{Motivation and Definition}
\label{subsec:def_resource_cat}

Both substitution of resource terms and composition of
isogmentations generate \emph{sums}, so we need an additive structure.
Following \cite{BCLS20},
an \definitive{additive symmetric monoidal category (asmc)} is a symmetric
monoidal category where each homset is a commutative monoid, and each
operation preserves the additive structure.

\paragraph{Bialgebras.} As for differential categories, resource
categories build on \emph{bialgebras}:

\begin{defi}
Consider $\C$ an additive symmetric monoidal category.

A \definitive{bialgebra} on $\C$ is $(A,\delta_A, \epsilon_A,
\mu_A, \eta_A)$ with
$(A,\mu_A : A \tensor A \to A, \eta_A : I \to A)$
a commutative monoid (see \autoref{fig:alg_laws}), and 
$(A,\delta_A : A \to A \tensor A, \epsilon_A : A \to I)$
a commutative comonoid (satisfying dual identities),
with additional bialgebra laws (\autoref{fig:bialg_laws}). 
\end{defi}

\begin{figure}[t]
    \[
        \begin{tikzpicture}[rcat]
            \node at (0,0) (in1) {};
            \node at (1.6,0) (in2) {};
            \node at (2.4,0) (in3) {};
            \node[shape=circle, draw] at (0.8,-1) (mu1) {$\mu_A$};
            \node[shape=circle, draw] at (1.6, -2) (mu2) {$\mu_A$}; 
            \node at (1.6, -3) (out1) {};
            \draw (mu2) -- (out1);
            \draw (in1) .. controls (0,-1) .. (mu1);
            \draw (in2) .. controls (1.6, -1) .. (mu1);
            \draw (mu1) .. controls (0.8, -2) .. (mu2);
            \draw (in3) .. controls (2.4, -2) .. (mu2);
            
            \node at (2.9, -1) (=1) {$=$};
            
            \node at (3.4,0) (in4) {};
            \node at (4.2,0) (in5) {};
            \node at (5.8,0) (in6) {};
            \node[shape=circle, draw] at (5,-1) (mu3) {$\mu_A$};
            \node[shape=circle, draw] at (4.2, -2) (mu4) {$\mu_A$}; 
            \node at (4.2, -3) (out2) {};
            \draw (mu4) -- (out2);
            \draw (in4) .. controls (3.4,-2) .. (mu4);
            \draw (mu3) .. controls (5,-2) .. (mu4);
            \draw (in5) .. controls (4.2,-1) .. (mu3);
            \draw (in6) .. controls (5.8,-1) .. (mu3);
        \end{tikzpicture}
        \qquad
        \begin{tikzpicture}[rcat]
           \node[shape=circle, draw] at (0,-1) (eta1) {$\eta_A$};
           \node at (2,0) (in1) {};
           \node[shape=circle, draw] at (1,-2) (mu1) {$\mu_A$};
           \node at (1,-3) (out1) {};      
           \draw (eta1) .. controls (0,-2) .. (mu1);
           \draw (in1) .. controls (2,-2) .. (mu1);
           \draw (mu1) -- (out1);
           
           \node at (2.5,-1) (=1) {$=$};
           
           \node at (3,0) (in2) {};
           \node at (3,-3) (out2) {};
           \draw (in2) -- (out2);
           
           \node at (3.5, -1) (=2) {$=$};
           
           \node[shape=circle, draw] at (6,-1) (eta2) {$\eta_A$};
           \node at (4,0) (in3) {};
           \node[shape=circle, draw] at (5,-2) (mu2) {$\mu_A$};
           \node at (5,-3) (out3) {};      
           \draw (in3) .. controls (4,-2) .. (mu2);
           \draw (eta2) .. controls (6,-2) .. (mu2);
           \draw (mu2) -- (out3);
        \end{tikzpicture}
        \qquad
        \begin{tikzpicture}[rcat]
            \node at (0,0) (in1) {};
            \node at (2,0) (in2) {};
            \node[shape=circle, draw] at (1,-2) (mu1) {$\mu_A$}; 
            \node at (1, -3) (out1) {};
            \draw (mu1) -- (out1);
            \draw (in1) .. controls (0, -2) .. (mu1);
            \draw (in2) .. controls (2, -2) .. (mu1);
            
            \node at (2.5, -1) (=) {$=$};
            
            \node at (3,0) (in3) {};
            \node at (5,0) (in4) {};
            \node[shape=circle, draw] at (4,-2) (mu2) {$\mu_A$}; 
            \node at (4, -3) (out2) {};
            \draw (mu2) -- (out2);
            \draw (in3) .. controls (3, -1) and (4,-1) .. (4,-1) .. controls (5,-1) and (5,-2) .. (mu2);
            \draw (in4) .. controls (5,-1) and (4,-1) .. (4, -1) .. controls (3,-1) and (3, -2) .. (mu2);
        \end{tikzpicture}
    \]
    \caption{Monoid laws}
    \label{fig:alg_laws}
\end{figure}

\begin{figure}
    \[
        \begin{tikzpicture}[rcat]
            \node at (0,0) (in1) {};
            \node at (1.6,0) (in2) {};
            \node[shape=circle, draw] at (0.8,-1) (mu1) {$\mu_A$};
            \node[shape=circle, draw] at (0.8,-2.5) (delta1) {$\delta_A$};
            \node at (0,-3.5) (out1) {};
            \node at (1.6,-3.5) (out2) {};
            \draw (in1) .. controls (0,-1) .. (mu1);
            \draw (in2) .. controls (1.6,-1) .. (mu1);
            \draw (mu1) -- (delta1);
            \draw (delta1) .. controls (0,-2.5) .. (out1);
            \draw (delta1) .. controls (1.6,-2.5) .. (out2);
            
            \node at (2.1, -1.5) (=) {$=$};
            
            \node at (3.4,0) (in3) {};
            \node at (5,0) (in4) {};
            \node[shape=circle, draw] at (3.4,-1) (delta2) {$\delta_A$};
            \node[shape=circle, draw] at (5,-1) (delta3) {$\delta_A$};
            \node[shape=circle, draw] at (3.4,-2.5) (mu2) {$\mu_A$};
            \node[shape=circle, draw] at (5,-2.5) (mu3) {$\mu_A$};
            \node at (3.4,-3.5) (out3) {};
            \node at (5,-3.5) (out4) {};
            
            \draw (in3) -- (delta2);
            \draw (in4) -- (delta3);
            \draw (delta2) .. controls (2.6,-1) and (2.6,-2.5) .. (mu2);    
            \draw (delta2) .. controls (4.2,-1) and (4.2,-2.5) .. (mu3);    
            \draw (delta3) .. controls (4.2,-1) and (4.2,-2.5) .. (mu2);
            \draw (delta3) .. controls (5.8,-1) and (5.8,-2.5) .. (mu3);
            \draw (mu2) -- (out3);
            \draw (mu3) -- (out4);
        \end{tikzpicture}
        \qquad
        \begin{tikzpicture}[rcat]
            \node[shape=circle, draw] at (0,0) (eta1) {$\eta_A$};
            \node[shape=circle, draw] at (0,-1.5) (delta1) {$\delta_A$};
            \node at (-0.8,-2.5) (out1) {};
            \node at (0.8,-2.5) (out2) {};
            \draw (eta1)--(delta1);
            \draw (delta1) .. controls (-0.8, -1.5) .. (out1) ;
            \draw (delta1) .. controls (0.8,-1.5) .. (out2) ;
            
            \node at (1.3,-1) (=) {$=$};
            
            \node[shape=circle, draw] at (2,0) (eta2) {$\eta_A$};
            \node[shape=circle, draw] at (3,0) (eta3) {$\eta_A$};
            \node at (2,-2.5) (out3) {};
            \node at (3,-2.5) (out4) {};
            \draw (eta2) -- (out3);
            \draw (eta3) -- (out4);
        \end{tikzpicture}
        \qquad
        \begin{tikzpicture}[rcat]
            \node at (-0.8,0) (in1) {};
            \node at (0.8,0) (in2) {};
            \node[shape=circle, draw] at (0,-1) (mu1) {$\mu_A$};
            \node[shape=circle, draw] at (0,-2.3) (epsilon1) {$\epsilon_A$};
            \draw (mu1)--(epsilon1);
            \draw (in1) .. controls (-0.8, -1) .. (mu1) ;
            \draw (in2) .. controls (0.8,-1) .. (mu1) ;
            
            \node at (1.3,-1) (=) {$=$};
            
            \node[shape=circle, draw] at (2,-2.3) (epsilon2) {$\epsilon_A$};
            \node[shape=circle, draw] at (3,-2.3) (epsilon3) {$\epsilon_A$};
            \node at (2,0) (in3) {};
            \node at (3,0) (in4) {};
            \draw (in3) -- (epsilon2);
            \draw (in4) -- (epsilon3);
        \end{tikzpicture}
        \qquad
        \begin{tikzpicture}[rcat]
            \node[shape=circle, draw] at (0,0) (eta) {$\eta_A$};
            \node[shape=circle, draw] at (0,-2) (epsilon) {$\epsilon_A$};
            \draw (eta) -- (epsilon);
            \node at (1, -1) (=) {$=$};
        \end{tikzpicture}
    \]
    \caption{Bialgebra laws}
    \label{fig:bialg_laws}
\end{figure}

In a resource category, all objects shall be bialgebras.
Comonoids $(A, \delta_A, \epsilon_A)$ are the usual categorical description of
duplicable objects.
Intuitively, requests made to $\delta_A$ on either side of the
tensor on the rhs, are sent to the lhs. Categorically the monoid
structure $(A, \mu_A, \eta_A)$ is dual, but its intuitive behaviour is
different: each request on the rhs is forwarded, non-deterministically,
to either side of the tensor on the lhs, reflecting the \emph{sums}
arising in substitutions.
These operational intuitions will be captured precisely by the resource
category structure of $\Strat$.

In contrast with differential categories, morphisms in a resource category
intuitively correspond to (sums of) \emph{bags} rather than (sums of)
\emph{terms}.
Morally, the empty bag from $A$ to $B$ is captured from
the bialgebra structure as $\eta_B \circ \epsilon_A \in \C(A, B)$,
written $1$.  Likewise, the \definitive{product}
$f * g = \mu_B \circ (f \tensor g) \circ \delta_A \in \C(A, B)$
of $f, g \in \C(A, B)$
captures the union of bags. This makes $(\C(A,
B), *, 1)$ a commutative monoid, altogether turning 
$\C(A, B)$ into a commutative semiring, although composition and tensor
in $\C$ only preserve the additive monoid.

A bag of morphisms may be \definitive{flattened} into a morphism by the
following operation: if $\bag{f} = [f_1, \dots, f_n] \in \Mf(\C(A,
B))$, we write $\Pi \bag{f} \eqdef f_1 * \dots * f_n \in \C(A, B)$.

\paragraph{Pointed identities.}
Resource categories axiomatize categorically the \emph{singleton bags}.
For that, a pivotal role is played by the \definitive{pointed identity}, a
chosen idempotent $\pid_A \in \C(A, A)$ which we think of as a
singleton bag with a linear copycat behaviour.
From this we may capture the singleton bags
as those $f \in \C(A, B)$ such that $\pid_B \circ f = f$.
Dually, we may also capture the morphisms that access \emph{exactly one}
resource: those are the $f \in \C(A, B)$ such that $f \circ \pid_A = f$.
More formally:

\begin{defi}
Consider $\C$ an asmc where each object is equipped with a bialgebra
structure.
For $A \in \C$, a \definitive{pointed identity} on $A$ is an idempotent
$\pid_A \in \C(A, A)$
satisfying the equations shown as string diagrams in
\autoref{fig:laws_pointed}, plus $\epsilon_A \circ \pid_A = 0$ and
$\pid_A \circ \eta_A = 0$.
\end{defi}

\begin{figure}[t]
\[
\begin{tikzpicture}[rcat,scale=.6,baseline=(base)]
\node[style={circle,draw}] (pid) at (0, 0) {$\pid_A$};
\node[style={circle,draw}] (delta) at (0, -2) {$\delta_A$};
\draw(pid) edge (delta);
\draw(delta) edge[bend right=40] (-2,-4);
\draw(delta) edge[bend left=40] (2, -4);
\draw(0, 2) edge (pid);
\coordinate (base) at (0, -1);
\end{tikzpicture}
=
\quad
\begin{tikzpicture}[rcat,scale=.6,baseline=(base)]
\node[style={circle,draw}] (pid) at (0, 0) {$\pid_A$};
\node[style={circle,draw}] (eta) at (2, 0) {$\eta_A$};
\draw(pid) edge (0, -3);
\draw(0, 3) edge (pid);
\draw(eta) edge (2, -3);
\coordinate (base) at (0, 0);
\end{tikzpicture}
\quad
+
\quad
\begin{tikzpicture}[rcat,scale=.6,baseline=(base)]
\node[style={circle,draw}] (pid) at (2, 0) {$\pid_A$};
\node[style={circle,draw}] (eta) at (0, 0) {$\eta_A$};
\draw (pid) edge (2, -3);
\draw (2, 3) edge (pid);
\draw (eta) edge (0, -3);
\coordinate (base) at (0, 0);
\end{tikzpicture}
\qquad
\begin{tikzpicture}[rcat,scale=.6,baseline=(base)]
\node[style={circle,draw}] (pid) at (0, 0) {$\pid_A$};
\node[style={circle,draw}] (mu) at (0, 2) {$\mu_A$};
\draw (mu) edge (pid);
\draw (-2, 4) edge[bend right=40] (mu);
\draw (2, 4) edge[bend left=40] (mu);
\draw (pid) edge (0,-2);
\coordinate (base) at (0, 1);
\end{tikzpicture}
=
\quad
\begin{tikzpicture}[rcat,scale=.6,baseline=(base)]
\node[style={circle,draw}] (pid) at (0, 0) {$\pid_A$};
\node[style={circle,draw}] (epsilon) at (2, 0) {$\epsilon_A$};
\draw (pid) edge (0, -3);
\draw (0, 3) edge (pid);
\draw (2, 3) edge (epsilon);
\coordinate (base) at (0, 0);
\end{tikzpicture}
\quad
+
\quad
\begin{tikzpicture}[rcat,scale=.6,baseline=(base)]
\node[style={circle,draw}] (pid) at (2, 0) {$\pid_A$};
\node[style={circle,draw}] (epsilon) at (0, 0) {$\epsilon_A$};
\draw(pid) edge (2, -3);
\draw(2, 3) edge (pid);
\draw(0, 3) edge (epsilon);
\coordinate (base) at (0, 0);
\end{tikzpicture}
\]
\caption{Laws for (co)multiplication and pointed identity}
\label{fig:laws_pointed}
\end{figure}

Those laws are reminiscent of the laws of derelictions and
coderelictions in bialgebra modalities
\cite{BCLS20}, except that both roles are played
by $\pid_A$. 
In a resource category $\C$, all objects have a pointed identity.  The
``singleton bags'' are those $f \in \C(A, B)$ that are
\definitive{pointed}, \emph{i.e.}\ $\pid_B \circ f = f$ -- we write
$f\in\C_\bullet(A, B)$. Dually, $f \in \C(A, B)$ is
\definitive{co-pointed} if $f \circ \pid_A = f$, and we write $f \in
\C^\bullet(A, B)$.

\paragraph{Resource categories.}
Altogether, we are now ready to define resource categories:

\begin{figure}[t]
    \begin{gather*}
        \begin{tikzcd}[column sep=-1cm,ampersand replacement=\&]
        \&A\tensor B
        \ar[rr,"\delta_{A\tensor B}"]
        \ar[dl,"\delta_A\tensor\delta_B"']
        \&\&(A\tensor B)\tensor(A\tensor B)
        \ar[dr,"\alpha_{A\tensor B,A,B}"]
        \\
        (A\tensor A)\tensor (B\tensor B)
        \ar[d,"\alpha_{A\tensor A,B,B}"']
        \&\&\&\&((A\tensor B)\tensor A)\tensor B
        \ar[d,"\alpha^{-1}_{A,B,A}\tensor B"]
        \\
        ((A\tensor A)\tensor B)\tensor B
        \ar[drr,"\alpha^{-1}_{A,A,B}\tensor B"']
        \&\&\&\&(A\tensor (B\tensor A))\tensor B
        \\
        \&\&(A\tensor (A\tensor B))\tensor B
        \ar[urr,"(A\tensor\gamma_{A,B})\tensor B"']
    \end{tikzcd}
    \\[1em]
    \begin{tikzcd}[ampersand replacement=\&]
        A\tensor B
        \ar[r,"\epsilon_{A\tensor B}"]
        \ar[d,"\epsilon_A\tensor \epsilon_B"']
        \&I
        \\
        I\tensor I\ar[ur,"\lambda_I=\rho_I"']
    \end{tikzcd}
    \qquad
    \begin{tikzcd}[ampersand replacement=\&]
        I
        \ar[r,bend left,"\epsilon_I"]
        \ar[r,bend right,"\id_I"']
        \&I
    \end{tikzcd}
    \end{gather*}
\caption{Compatibility of comonoids with the monoidal structure --
there are symmetric conditions for the compatibility of monoids with
the monoidal structure.}  
\label{fig:compat_com_mon}
\end{figure}

\begin{defi}
A \definitive{resource category} is an asmc $\C$ where each $A \in \C$ has a
bialgebra structure $(A, \delta_A, \epsilon_A, \mu_A, \eta_A)$ and pointed
identity $\pid_A$, such that the bialgebra structure is compatible with the
monoidal structure of $\C$ in the sense that the diagrams of
\autoref{fig:compat_com_mon} commute, and symmetrically for the monoid structure,
with symmetric coherence laws.
\end{defi}

This simple definition has powerful consequences. In particular, the
following key property, derived from the definition of resource
categories, expresses how the product of a bag of pointed morphisms
interacts with the comonoid structure -- and dually for the product of a
bag of co-pointed morphisms and monoids. Much of the forthcoming proof
that the interpretation is invariant under reduction relies on it:

\begin{figure}[t]
    \[
        \begin{tikzcd}
            A
            \ar[r,"\delta_A"]
            \ar[d,"\Pi \bag{f}"']
            & A\tensor A 
            \ar[d,"\sum_{\bag{f} \splitinto \bag{f}_1 * \bag{f}_2} \Pi \bag{f}_1 \tensor \Pi \bag{f}_2"]
            \\
            B
            \ar[r,"\delta_B"']
            &
            B\tensor B
        \end{tikzcd}
        \qquad
        \begin{tikzcd}
            A\tensor A
            \ar[r,"\mu_A"]
            \ar[d,"\sum_{\bag{f} \splitinto \bag{f}_1 * \bag{f}_2} \Pi \bag{f}_1 \tensor \Pi \bag{f}_2"']
            & A
            \ar[d,"\Pi \bag{f}"]
            \\
            B\tensor B
            \ar[r,"\mu_B"']
            &
            B
        \end{tikzcd}
    \]
    \caption{Compatibility of bags of pointed (resp.\ co-pointed) morphisms with $\delta$ (resp.\ $\mu$).}
    \label{fig:compat_bags}
\end{figure}

\pagebreak[5]

\begin{restatable}{lem}{keylemma}\label{lem:main_lemma}
Consider $\C$ a resource category, then we have the following
properties:
\begin{enumerate}[(1)]
    \item \label{lem:main_lemma:1}
        For any bag of pointed morphisms
        $\bag{f} \in \Mf(\C_\bullet(A, B))$, 
        \begin{enumerate}[(a)]
            \item the first diagram of \autoref{fig:compat_bags}
                commutes; and 
            \item we have $\epsilon_B \circ \Pi \bag{f} = 1$ if $\bag{f}$
                is empty, $0$ otherwise;
        \end{enumerate}
    \item For any bag of co-pointed morphisms
        $\bag{f} \in \Mf(\C^\bullet(A, B))$,
        \begin{enumerate}[(a)]
            \item the second diagram of \autoref{fig:compat_bags}
                commutes; and 
            \item we have $\Pi\bag{f} \circ \eta_A = 1$ if $\bag{f}$ is
                empty, $0$ otherwise.
        \end{enumerate}
\end{enumerate}
\end{restatable}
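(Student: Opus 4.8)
The plan is to prove part (1) by induction on $\size{\bag{f}}$, and to deduce part (2) by a routine dualization. All coherence isomorphisms (associators, unitors, the symmetry $\gamma$) are best left implicit, reasoning in the style of string diagrams; I expect the only genuine difficulty to be the bookkeeping that reconciles the inductive data with the sum over $2$-partitionings.

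Part (1b) barely uses the induction. For $\bag{f} = \emptybag$ we have $\Pi\bag{f} = \eta_B \circ \epsilon_A = 1$, hence $\epsilon_B \circ \Pi\bag{f} = \epsilon_B \circ \eta_B \circ \epsilon_A = \epsilon_A = 1$ by the bialgebra law $\epsilon_B \circ \eta_B = \id_I$. If $\bag{f}$ is nonempty, I would write $\bag{f} = \bag{g} \bagcons [f]$ with $f$ pointed, so $\Pi\bag{f} = \mu_B \circ (\Pi\bag{g} \tensor f) \circ \delta_A$ by definition of the product; then the bialgebra law $\epsilon_B \circ \mu_B = \epsilon_B \tensor \epsilon_B$ (up to unitor) rewrites $\epsilon_B \circ \Pi\bag{f}$ as $((\epsilon_B \circ \Pi\bag{g}) \tensor (\epsilon_B \circ f)) \circ \delta_A$, which vanishes because $\epsilon_B \circ f = \epsilon_B \circ \pid_B \circ f = 0$ (pointedness of $f$ and the law $\epsilon_B \circ \pid_B = 0$) and tensoring with $0$ yields $0$.

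For part (1a) the base case $\bag{f} = \emptybag$ is a coherence check: the only $2$-partitioning gives the splitting $\emptybag \bagcons \emptybag$, so the right-hand leg of the first square of \cref{fig:compat_bags} is $((\eta_B \circ \epsilon_A) \tensor (\eta_B \circ \epsilon_A)) \circ \delta_A$, which matches $\delta_B \circ \eta_B \circ \epsilon_A$ by the bialgebra law $\delta_B \circ \eta_B = \eta_B \tensor \eta_B$ and the comonoid counit law $(\epsilon_A \tensor \epsilon_A) \circ \delta_A = \epsilon_A$ (up to unitors). For the inductive step, writing again $\bag{f} = \bag{g} \bagcons [f]$ with $f$ pointed and $\Pi\bag{f} = \mu_B \circ (\Pi\bag{g} \tensor f) \circ \delta_A$, I would compute $\delta_B \circ \Pi\bag{f}$ in three moves: (i) rewrite $\delta_B \circ \mu_B$ by the bialgebra interchange law $\delta_B \circ \mu_B = (\mu_B \tensor \mu_B) \circ (\id_B \tensor \gamma_{B,B} \tensor \id_B) \circ (\delta_B \tensor \delta_B)$; (ii) expand the two inner factors, using the induction hypothesis for $\delta_B \circ \Pi\bag{g} = (\sum_{\bag{g} \splitinto \bag{g}_1 \bagcons \bag{g}_2} \Pi\bag{g}_1 \tensor \Pi\bag{g}_2) \circ \delta_A$ and the first law of \cref{fig:laws_pointed} together with $\pid_B \circ f = f$ for $\delta_B \circ f = (f \tensor \eta_B) \circ \rho_A^{-1} + (\eta_B \tensor f) \circ \lambda_A^{-1}$; (iii) distribute, then use the monoid unit laws of \cref{fig:alg_laws} to absorb the spurious $\eta_B$ produced by $f$ (collapsing one of the two $\mu_B$'s to a unitor) and comonoid coassociativity and cocommutativity to reassemble the $\delta$'s. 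This should leave a sum, indexed by $2$-partitionings $\bag{g} = \bag{g}_1 \bagcons \bag{g}_2$ and a binary choice of which side $f$ joins, with typical summands $(\Pi(\bag{g}_1 \bagcons [f]) \tensor \Pi\bag{g}_2) \circ \delta_A$ and $(\Pi\bag{g}_1 \tensor \Pi(\bag{g}_2 \bagcons [f])) \circ \delta_A$. The final observation is that, fixing an enumeration of $\bag{f}$ ending with $f$, a $2$-partitioning of $\bag{f}$ is exactly a $2$-partitioning of $\bag{g}$ together with the image of $f$; so this index set is canonically in bijection with the $2$-partitionings of $\bag{f}$, and the sum is precisely $(\sum_{\bag{f} \splitinto \bag{f}_1 \bagcons \bag{f}_2} \Pi\bag{f}_1 \tensor \Pi\bag{f}_2) \circ \delta_A$, closing the induction.

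Part (2) then follows by dualizing: exchange the monoid and comonoid structures, use the interchange law as $\delta_A \circ \mu_A = (\mu_A \tensor \mu_A) \circ (\id_A \tensor \gamma_{A,A} \tensor \id_A) \circ (\delta_A \tensor \delta_A)$, replace the first law of \cref{fig:laws_pointed} by the second ($\pid_A \circ \mu_A = (\pid_A \tensor \epsilon_A) + (\epsilon_A \tensor \pid_A)$, up to unitors), and use $\pid_A \circ \eta_A = 0$ in place of $\epsilon_A \circ \pid_A = 0$; the induction is again on $\size{\bag{f}}$, now analysing $\Pi\bag{f} \circ \mu_A$ through $\delta_A \circ \mu_A$. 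As anticipated, the main obstacle will be controlling the symmetric-monoidal coherence together with the partitioning combinatorics in step (iii) and in identifying the two partitioning sums; working with string diagrams makes the coherence disappear and renders that identification essentially immediate.
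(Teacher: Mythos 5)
Your proposal is correct: the paper only asserts that the lemma follows from a "lengthy but mostly direct diagram chase", and your induction on $\size{\bag{f}}$ — using the bialgebra interchange law, the pointed-identity law of Figure \ref{fig:laws_pointed} together with $\pid_B \circ f = f$, the unit/counit laws, and the identification of $2$-partitionings of $\bag{f}$ with $2$-partitionings of $\bag{g}$ plus a side for $f$ — is exactly that chase, carried out in an organized way, with part (2) obtained by the expected dualization. No gaps; this matches the paper's intended argument.
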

\begin{proof}
This follows from a lengthy but mostly direct diagram chase.
\end{proof}

\paragraph{Closed resource categories.}
The notion of resource category is enough to reflect categorically the
mechanism of resource substitution, but in order to also capture resource
\emph{reduction}, we must furthermore impose some closure condition.

First, assuming that the monoidal structure of $\C$ is closed,
\emph{i.e.}\ that the endofunctor $- \tensor A$ has
a right adjoint $A \tto -$ for each $A \in \C$, we obtain the
following constructions, borrowing the terminology and notations
from the theory of cartesian closed categories:
if $f \in \C(A \tensor B, C)$,
the \definitive{currying of $f$} is the image of $f$ by the adjunction,
written
\[\Lambda_{A, B, C}(f) \in \C(A, B \tto C)\,,\]
and for any two objects $A, B \in \C$, we have
the \definitive{evaluation morphism}
\[
       \evm_{A, B} 
       \eqdef \Lambda^{-1}_{A\tto B, A, B}
       (\id_{A\tto B})  \in \C((A \tto B) \tensor A, B)\,.
\]

\begin{defi}\label{defi:closed}
    A resource category is said to be \definitive{closed} if
    its monoidal structure is closed and, moreover,
    currying is compatible with pointed identities:
    for all $A,B \in \C$, 
    \[ 
        \pid_{A \tto B} = \Lambda_{A \tto B,A,B}(\pid_{B} \circ \evm_{A,B})\,.
    \]
\end{defi}

\subsection{Interpretation of the Resource Calculus}

In order to describe the interpretation of the resource calculus, it
will be convenient to introduce some additional combinators,
mimicking the structure of cartesian categories in resource categories:

\paragraph{Cartesian combinators.}
The \definitive{pairing} of $f \in \C(\Gamma, A)$ and $g \in \C(\Gamma, B)$ is 
\[
\tuple{f, g} \eqdef (f \tensor g) \circ \delta_\Gamma \in \C(\Gamma, A \tensor B)\,;
\]
likewise
$\pi_1 \eqdef \rho_A \circ (A \tensor \epsilon_B) \in \C(A \tensor B, A)$ and
$\pi_2 \eqdef \lambda_B \circ (\epsilon_A \tensor B) \in \C(A \tensor B, B)$
are the two \definitive{projections}
-- we shall also use their obvious $n$-ary generalizations.
The laws of cartesian categories fail in general:
we have $\tuple{\pi_1, \pi_2} = \id_{A\tensor B}$, but
\emph{e.g.} $\pi_1 \circ \tuple{f, h} = f$ only holds if $h$ is
\definitive{erasable} (\emph{i.e.}\ $\epsilon_B \circ h = \epsilon_\Gamma$)
and $\tuple{f, g} \circ h = \tuple{f \circ h, g \circ h}$ if $h$ is
\definitive{duplicable}
(\emph{i.e.}\ $\delta_\Gamma \circ h = (h \tensor h) \circ \delta_\Delta$)
-- so we do get the usual laws if $h$ is a \emph{comonoid morphism}
\cite{M09}.

\paragraph{Lemmas on propagation of substitutions.}
Morphisms coming from the interpretation are not comonoid morphisms,
but many structural morphisms are: for instance it follows from a
direct diagram chase that projections \emph{are} comonoid morphisms.

As explained above, comonoid morphisms propagate in tuples as in a
cartesian category.
But, importantly, resource categories also specify how some non comonoid
morphisms propagate through a pairing, even paired with a comonoid morphism: 

\begin{lem}\label{lem:tuple_comp_prod}
Let $\C$ be a resource category,
$\bag{b} \in \Mf(\C_\bullet(\Delta, A))$, $h \in
\C(\Delta, \Gamma)$, $f \in \C(\Gamma \tensor A,
B)$, $g \in \C(\Gamma \tensor A, C)$.
If $h$ is a comonoid morphism, then we have:
\[  
\tuple{f, g} \circ \tuple{h, \Pi \bag{b}} = \sum_{\bag{b} \splitinto
\bag{b}_1 * \bag{b}_2} \tuple{f \circ \tuple{h, \Pi
\bag{b}_1}, g \circ \tuple{h, \Pi \bag{b}_2}}\,.
\]

If moreover $B=C$ then
\[
(f * g) \circ \tuple{h,\Pi \bag{b}}
= \sum_{\bag{b} \splitinto \bag{b}_1 * \bag{b}_2}
(f \circ \tuple{h, \Pi{\bag{b}_1}}) * (g \circ \tuple{h, \Pi{\bag{b}_2}}))\,.
\]

Finally, $1 \circ \tuple{h, \Pi{\bag{b}}} = 1$ if $\bag{b}$ is empty, $0$ otherwise.
\end{lem}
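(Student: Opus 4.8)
I would prove the three identities by unfolding definitions and reducing everything to the key compatibility of \cref{lem:main_lemma} together with the (co)monoid and bialgebra laws. The cleanest route is to observe that $\tuple{h,\Pi\bag b} = (h\tensor \Pi\bag b)\circ\delta_\Delta$ by definition of pairing, and then push $\delta_\Delta$ through the pair $h,\Pi\bag b$.

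\paragraph{First identity.} I would start from
\[
  \tuple{f,g}\circ\tuple{h,\Pi\bag b}
  = (f\tensor g)\circ\delta_{\Gamma\tensor A}\circ(h\tensor\Pi\bag b)\circ\delta_\Delta\,.
\]
The idea is to commute $\delta_{\Gamma\tensor A}$ past $h\tensor\Pi\bag b$. Since $\delta_{\Gamma\tensor A}$ factors (via the top diagram of \cref{fig:compat_com_mon}, i.e.\ compatibility of $\delta$ with $\tensor$) as essentially $\delta_\Gamma\tensor\delta_A$ followed by the symmetry reshuffle, naturality of $\tensor$ lets me move $h$ across $\delta_\Gamma$ \emph{on the nose} because $h$ is a comonoid morphism ($\delta_\Gamma\circ h=(h\tensor h)\circ\delta_\Delta$), and move $\Pi\bag b$ across $\delta_A$ using part~\eqref{lem:main_lemma:1} of \cref{lem:main_lemma}: $\delta_B\circ\Pi\bag b = \bigl(\sum_{\bag b\splitinto\bag b_1*\bag b_2}\Pi\bag b_1\tensor\Pi\bag b_2\bigr)\circ\delta_A$, wait — here $\bag b$ has target $A$, so it is $\delta_A\circ\Pi\bag b = (\sum_{\bag b\splitinto \bag b_1 * \bag b_2}\Pi\bag b_1\tensor\Pi\bag b_2)\circ\delta_\Delta$. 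Collecting the two copies of $\delta_\Delta$ into one via coassociativity/cocommutativity, and using that composition and $\tensor$ are additive, the sum over partitionings $\bag b\splitinto\bag b_1*\bag b_2$ pops out, and each summand reassembles as $\tuple{f\circ\tuple{h,\Pi\bag b_1},\,g\circ\tuple{h,\Pi\bag b_2}}$. This is the core computation; the others are corollaries.

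\paragraph{Second and third identities.} For the product case, $f*g = \mu_B\circ(f\tensor g)\circ\delta_{\Gamma\tensor A}$, so $(f*g)\circ\tuple{h,\Pi\bag b} = \mu_B\circ\bigl(\tuple{f,g}\circ\tuple{h,\Pi\bag b}\bigr)$ precomposed appropriately — actually more directly, $(f*g)\circ\tuple{h,\Pi\bag b} = \mu_B\circ\tuple{f,g}\circ\tuple{h,\Pi\bag b}$, so I apply the first identity and then postcompose with $\mu_B$, turning each $\tuple{\,\cdot\,,\cdot\,}$ into a $*$; additivity of $\mu_B\circ -$ passes it through the sum. For the last identity, $1 = \eta_B\circ\epsilon_\Delta$ (or rather $1\in\C(\Gamma,B)$ is $\eta_B\circ\epsilon_\Gamma$), so $1\circ\tuple{h,\Pi\bag b} = \eta_B\circ\epsilon_\Gamma\circ(h\tensor\Pi\bag b)\circ\delta_\Delta$. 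Since $h$ is a comonoid morphism it is in particular erasable, $\epsilon_\Gamma\circ h = \epsilon_\Delta$, and by part~(1b) of \cref{lem:main_lemma}, $\epsilon_A\circ\Pi\bag b$ is $1=\mathrm{id}_I$-flavoured, i.e.\ $\epsilon_I$, when $\bag b=\emptybag$ and $0$ otherwise; combined with $\epsilon_{\Gamma\tensor A}=(\epsilon_\Gamma\tensor\epsilon_A)$ up to the iso $\lambda_I=\rho_I$ (second diagram of \cref{fig:compat_com_mon}) and $(\epsilon_\Delta\tensor\epsilon_\Delta)\circ\delta_\Delta=\epsilon_\Delta$, this collapses to $\eta_B\circ\epsilon_\Delta = 1$ if $\bag b$ is empty, and to $0$ otherwise.

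\paragraph{Main obstacle.} The only genuinely fiddly point is bookkeeping the associators and symmetries when commuting $\delta_{\Gamma\tensor A}$ past $h\tensor\Pi\bag b$ and then merging the resulting two instances of $\delta_\Delta$ via coassociativity and cocommutativity — i.e.\ checking that the reshuffling isomorphisms of \cref{fig:compat_com_mon} line up exactly so that the middle-four-interchange goes through and the partitioning sum of \cref{lem:main_lemma}\eqref{lem:main_lemma:1} is the one that survives. This is routine in any symmetric monoidal category with a bialgebra on each object, so I would state it as a diagram chase and suppress the coherence details, exactly as the excerpt does for \cref{lem:main_lemma} itself. I expect the write-up to be short: invoke \cref{lem:main_lemma}, invoke the comonoid-morphism hypothesis on $h$, and let additivity of composition and tensor do the rest.
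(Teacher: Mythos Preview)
Your proposal is correct and follows essentially the same route as the paper's proof: unfold both pairings, factor $\delta_{\Gamma\tensor A}$ via the compatibility diagram of \cref{fig:compat_com_mon}, push $\delta_\Gamma$ past $h$ (comonoid morphism) and $\delta_A$ past $\Pi\bag b$ (via \cref{lem:main_lemma}\eqref{lem:main_lemma:1}), then reassemble using naturality of the reshuffle and coassociativity/cocommutativity of $\delta_\Delta$. Your derivation of the second identity by postcomposing with $\mu_B$, and of the third via erasability of $h$ and part~(1b) of \cref{lem:main_lemma}, is exactly what the paper means by ``essentially the same proof'' and ``straightforward''; the minor type wobbles in your sketch (e.g.\ the domain of the leftmost~$1$ is $\Gamma\tensor A$, not $\Gamma$) are easily fixed in a clean write-up.
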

\begin{proof}
This is a simple verification based on \autoref{lem:main_lemma:1} of
\autoref{lem:main_lemma}. 

For the first equality, for instance, we start by unfolding:
\begin{eqnarray}
\tuple{f, g} \circ \tuple{h, \Pi \bag{b}} = (f \tensor g) \circ
\delta_{\Gamma \tensor A} \circ (h \tensor \Pi \bag{b}) \circ
\delta_\Delta
\label{eq:aux1}
\end{eqnarray}
and then we reason equationally, denoting by
$i_{A, B} : (A \tensor A) \tensor (B \tensor B) \iso (A \tensor B)
\tensor (A \tensor B)$ the obvious structural isomorphism as in
\autoref{fig:compat_com_mon}:
\begin{eqnarray*}
\delta_{\Gamma \tensor A} \circ (h \tensor \Pi \bag{b}) 
&=& i_{\Gamma, A} \circ (\delta_\Gamma \tensor \delta_A) \circ (h \tensor
\Pi \bag{b})\\
&=& i_{\Gamma, A} \circ (\delta_\Gamma \circ h) \tensor (\delta_A \circ
\Pi \bag{b})\\
&=& i_{\Gamma, A} \circ ((h \tensor h) \circ \delta_\Delta) \tensor
((\sum_{\bag{b} \splitinto \bag{b}_1 * \bag{b}_2} \Pi \bag{b}_1 \tensor
\Pi \bag{b}_2) \circ \delta_\Delta)\\
&=& \sum_{\bag{b} \splitinto \bag{b}_1 * \bag{b}_2} i_{\Gamma, A} \circ
((h \tensor h) \circ \delta_\Delta) \tensor ((\Pi \bag{b}_1 \tensor
\Pi \bag{b}_2) \circ \delta_\Delta)\\
&=& \sum_{\bag{b} \splitinto \bag{b}_1 * \bag{b}_2} i_{\Gamma, A} \circ
((h \tensor h) \tensor (\Pi \bag{b}_1 \tensor
\Pi \bag{b}_2)) \circ (\delta_\Delta \tensor \delta_\Delta)\\
&=& \sum_{\bag{b} \splitinto \bag{b}_1 * \bag{b}_2}
((h \tensor \Pi \bag{b}_1) \tensor (h \tensor \Pi \bag{b}_2)) \circ
i_{\Delta, \Delta} \circ (\delta_\Delta \tensor \delta_\Delta)\\
&=& \sum_{\bag{b} \splitinto \bag{b}_1 * \bag{b}_2}
((h \tensor \Pi \bag{b}_1) \tensor (h \tensor \Pi \bag{b}_2)) \circ
\delta_{\Delta \tensor \Delta}
\end{eqnarray*}
using compatibility of $\delta_A$ with the monoidal structure,
bifunctoriality of the tensor, the fact that $h$ is a comonoid morphism
along with \autoref{lem:main_lemma:1} of \autoref{lem:main_lemma},
linearity of all operations with respect to the sum, again
bifunctoriality of the tensor, naturality of $i_{A, B}$, and again
compatibility of $\delta$ with the monoidal structure.

Substituting this in \eqref{eq:aux1}, we obtain (accounting for linearity):
\begin{eqnarray*}
\tuple{f, g} \circ \tuple{h, \Pi \bag{b}} &=& 
\sum_{\bag{b} \splitinto \bag{b}_1 * \bag{b}_2}
(f \tensor g) \circ ((h \tensor \Pi \bag{b}_1) \tensor (h \tensor \Pi
\bag{b}_2)) \circ
\delta_{\Delta \tensor \Delta} \circ \delta_\Delta\\
&=&\sum_{\bag{b} \splitinto \bag{b}_1 * \bag{b}_2}
(f \tensor g) \circ ((h \tensor \Pi \bag{b}_1) \tensor (h \tensor \Pi
\bag{b}_2)) \circ (\delta_\Delta \tensor \delta_\Delta) \circ
\delta_\Delta\\
&=& \sum_{\bag{b} \splitinto \bag{b}_1 * \bag{b}_2}
\tuple{f \circ \tuple{h, \Pi \bag{b}_1}, g \circ \tuple{h, \Pi
\bag{b}_2}}
\end{eqnarray*}
where $\delta_\Delta \circ \delta_{\Delta \tensor \Delta} =
\delta_\Delta \circ (\delta_\Delta \circ \delta_\Delta)$ follows from
the fact that $\Delta$ is a \emph{commutative} comonoid, and the rest
is by definition of tupling, and bifunctoriality of tensor.

The second equality has essentially the same proof, and the last is
straightforward.
\end{proof}

This is fairly close to how substitutions propagate through terms in
the resource $\lambda$-calculus (see \autoref{sec:syntax}): we sum
over all the partitions of the bag $\bag{b}$ into two components, to be
distributed to the two components of the pair -- when using this lemma
in the proof of the substitution lemma,  the comonoid morphism $h$
shall simply be an identity leaving all the unsubstituted variables
unchanged. Syntactic substitution has another important case, namely
when a substitution encounters a variable occurrence. Likewise here, we have:

\begin{restatable}{lem}{pickone}\label{lem:pickone}
Consider $\bag{f} \in \Mf(\C_\bullet(A, B))$.
Then $\pid_B \circ \Pi \bag{f} = v$ if $\bag{f} = [v]$, $0$ otherwise.
\end{restatable}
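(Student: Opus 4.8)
The plan is a straightforward case analysis on the size $n=\size{\bag{f}}$, writing $\bag{f}=\mset{f_1,\dotsc,f_n}$ so that each $f_i\in\C_\bullet(A,B)$, i.e.\ $\pid_B\circ f_i=f_i$. I expect no deep obstacle here; the argument is essentially forced by the two pointed-identity axioms and by \cref{lem:main_lemma}.

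First I would dispatch the two boundary cases directly. If $n=0$, then $\Pi\bag{f}=1=\eta_B\circ\epsilon_A$, hence $\pid_B\circ\Pi\bag{f}=(\pid_B\circ\eta_B)\circ\epsilon_A=0$ by the axiom $\pid_B\circ\eta_B=0$. If $n=1$ there is nothing to prove: $\Pi\mset{v}=v$, and $\pid_B\circ v=v$ because $v$ is pointed.

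The content is the case $n\geq 2$. Here I would write $\bag{f}=\mset{f_1}\bagcons\bag{g}$ with $\bag{g}=\mset{f_2,\dotsc,f_n}$ a \emph{nonempty} bag of pointed morphisms, so that $\Pi\bag{f}=f_1*\Pi\bag{g}=\mu_B\circ(f_1\tensor\Pi\bag{g})\circ\delta_A$. Post-composing with $\pid_B$ and using the law relating $\pid$ and $\mu$ from \cref{fig:laws_pointed}, together with bifunctoriality of $\tensor$ and linearity of composition, gives
\[
\pid_B\circ\Pi\bag{f}
=\rho_B\circ\bigl((\pid_B\circ f_1)\tensor(\epsilon_B\circ\Pi\bag{g})\bigr)\circ\delta_A
+\lambda_B\circ\bigl((\epsilon_B\circ f_1)\tensor(\pid_B\circ\Pi\bag{g})\bigr)\circ\delta_A\,.
\]
Now $\epsilon_B\circ f_1=\epsilon_B\circ\pid_B\circ f_1=0$ by the axiom $\epsilon_B\circ\pid_B=0$, and $\epsilon_B\circ\Pi\bag{g}=0$ by \cref{lem:main_lemma} since $\bag{g}$ is a nonempty bag of pointed morphisms. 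As $\tensor$ is bilinear, $0$ is absorbing on either tensor factor, so both summands vanish and $\pid_B\circ\Pi\bag{f}=0$, as required.

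The only point that needs care — and the closest thing to an obstacle — is that one must treat $n=1$ on its own rather than by peeling off the unique element: doing so would leave the empty product $1=\eta_B\circ\epsilon_A$, and the $\mu$-law does not visibly collapse the result back to $f_1$. Everything else reduces to the two pointed-identity axioms and \cref{lem:main_lemma}; the bookkeeping with the unitors $\rho_B,\lambda_B$ and with the associativity and commutativity of $*$ in the ambient asmc is entirely routine.
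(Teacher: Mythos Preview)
Your proof is correct and is exactly the diagram chase the paper has in mind: it invokes the pointed-identity axioms $\pid_B\circ\eta_B=0$ and $\epsilon_B\circ\pid_B=0$, the $\pid$--$\mu$ law of \cref{fig:laws_pointed}, and item~(1)(b) of \cref{lem:main_lemma}, which is all the paper's one-line proof alludes to. A tiny remark: your caveat about $n=1$ is overcautious --- if you do peel off the unique element, the first summand actually does collapse to $f_1$ via the comonoid counit law, while the second vanishes --- but treating $n=1$ separately is of course cleaner.
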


Again this lemma follows by diagram chasing, using the conditions of a resource
category.
It illustrates how the pointed identity is able to pick a single element of a
bag.
If the bag has too many elements or not enough, then the composition yields $0$
-- this reflects syntactic substitution, where the substitution must offer
\emph{exactly} as many resources as there are occurrences of the
substituted variable.

\paragraph{Interpretation.}
From now on, we fix a closed resource category $\C$ with a chosen
object $o$.

We first set
$\intr{\typeo} \eqdef o$,
$\intr{\tuple{\typeA_1,\dotsc,\typeA_n}} \eqdef
\intr{\typeA_1}\otimes\cdots\otimes\intr{\typeA_n}$
and $\intr{\typeA \to \typeB} \eqdef \intr{\typeA} \tto \intr{\typeB}$.
For contexts,
$\intr{\Gamma} \eqdef \bigotimes_{(x : \typeA) \in \Gamma} \intr{\typeA}$.
Note that currying and associativity morphisms induce an isomorphism
\[\begin{tikzcd}
    \intr{\typeA} \ar[r,"\zeta_{\typeA}"]
    &
    \intr{\vtypeB}\tto\typeo
\end{tikzcd}\]
for any type $\typeA=\vtypeB\to\typeo$.
If $(x:\typeA)\in\Gamma$, we then write
\[\begin{tikzcd}
    \intr{\Gamma} \ar[r,"\var^\Gamma_x"]
    &
    \intr{\vtypeB}\tto\typeo
\end{tikzcd}\]
for the projection morphism $\intr{\Gamma}\to\intr{\typeA}$
followed by $\zeta_{\typeA}$.
For $\Gamma$ and $\Delta$ disjoint we also use the iso
$
\merge_{\Gamma, \Delta} \in \C(\intr{\Gamma} \otimes \intr{\Delta},
\intr{\Gamma, \Delta})
$.
The properties of closed resource categories ensure the following
commutation:
\begin{lem}
  For any type $\typeA=\vtypeB\to\typeo$,
  \( 
      \zeta_{\typeA}\circ\pid_{\intr{\typeA}} =
      \pid_{\intr{\vtypeB\to\typeo}} \circ \zeta_{\typeA}
      \,.
  \)
\end{lem}

The interpretation of terms (or, rather, of typing derivations) then
follows the three kinds of judgements from \autoref{sec:syntax}:
for $\Gamma, A \in \C$ and $\vec{A} = \tuple{A_1, \dots, A_n}$, we define
\[  
\Trm_\C(\Gamma; A) \eqdef \C_\bullet(\Gamma, A)
\quad
\Bag_\C(\Gamma; A) \eqdef \Mf(\Trm_\C(\Gamma; A))
\quad
\Seq_\C(\Gamma; \vec{A}) \eqdef
\Pi_{1\leq i \leq n} \Bag_\C(\Gamma; A_i)
\,.
\]

Notably, sequences and bags are interpreted as \emph{actual} sequences and
bags at the ``meta-level'', rather than via the ``internal'' bags
(\emph{i.e.}\ products of pointed maps) or products (\emph{i.e.}\ via the
monoidal structure) in $\C$.
This apparent duplication of structure will be resolved when interpreting
applications.
For that purpose, in addition to the product $\Pi\bag{f}\in\C(\Gamma,A)$
of a bag of morphisms $\bag{f}\in\Bag_\C(\Gamma;A)$, we also define the
\definitive{packing}
\[
    \pack{\seq{f}\,} \eqdef
    \tuple{\Pi \bag{f}_1, \dots, \Pi \bag{f}_n} \in \C(\Gamma,\vec{A}^\otimes)
\]
of any sequence of morphisms
$\vec{f}= \tuple{\bag{f}_1, \dots, \bag{f}_n}\in\Seq_\C(\Gamma;\vec{A})$.

We now define the three interpretation functions
\[
\Trm(\Gamma; \typeA) \to \Trm_\C(\intr{\Gamma}; \intr{\typeA})
\quad
\Bag(\Gamma; \typeA) \to \Bag_\C(\intr{\Gamma}; \intr{\typeA})
\quad
\Seq(\Gamma; \vtypeA) \to \Seq_\C(\intr{\Gamma}; \intr{\vtypeA})
\]
all written $\intr{-}$, by mutual induction,
as in \autoref{fig:interpretation}.
The interpretation is extended to sums of terms
$\STrm(\Gamma; \typeA) \to \Trm_\C(\intr{\Gamma}; \intr{\typeA})$
relying on the additive structure of $\C$ -- we give no interpretation to
sums of bags or sequences.

\begin{figure}[t]
    \begin{align*}
        \intr{\Gamma \jugTrm \labs{\varA}{\termA}:\typeA \to \typeB}
        &\eqdef
        \Lambda_{\intr{\Gamma}, \intr{\typeA}, \intr{\typeB}}
        (\intr{\Gamma, \varA:\typeA \jugTrm \termA:\typeB}
        \circ \merge_{\intr{\Gamma}, \intr{x : \typeA}})
        \\ 
        \intr{\Gamma \jugTrm \rappl{\varA}{\vbagB}:\typeo}
        &\eqdef
        \evm_{\intr{\vtypeA},o} \circ
        \tuple{
            \pid_{\intr{\vtypeA}\tto o} \circ \var^{\Gamma}_{\varA},
            \pack{\intr{\Gamma \jugSeq \vbagB :\vtypeA}}
        }\\
        \intr{\Gamma \jugTrm \rappl{\termA}{\bagB}:\typeB}
        &\eqdef
        \evm_{\intr{\typeA},\intr{\typeB}} \circ \tuple{
            \intr{\Gamma \jugTrm \termA :\typeA\to\typeB},
            \Pi{\intr{\Gamma \jugBag \bagB :\typeA}}
        }\\
        \intr{\Gamma\jugBag\mset{\termA_1,\dotsc,\termA_n}:\typeA}
        &\eqdef
        \mset{
            \intr{\Gamma\jugTrm\termA_i:\typeA} \mid 1 \leq i \leq n
        }\\
        \intr{\Gamma\jugSeq\tuple{\bagA_1,\dotsc,\bagA_n}:\vtypeA}
        &\eqdef
        \tuple{
            \intr{\Gamma\jugBag\bagA_i:\typeA_i} \mid 1 \leq i \leq n
        }
    \end{align*}
\caption{Interpretation of the resource calculus}
\label{fig:interpretation}
\end{figure}

\subsection{The Soundness Theorem}

We show that this interpretation is invariant under reduction.

\paragraph{Semantic substitution.}
The bulk of the proof consists in
proving a suitable substitution lemma, for which we must first give a
semantic account of substitution.
We define three semantic substitution functions:
\[
\begin{array}{rcrcl}
\ssubst{-}{\vec{x}}{-} 
&:& \Trm_\C(\intr{\Gamma,\varA:\typeA}; \intr{\typeB})
\times \Bag_\C(\intr{\Gamma}; \intr{\typeA})
&\to& \Trm_\C(\intr{\Gamma}; \intr{\typeB})\\
\ssubst{-}{\vec{x}}{-}
&:& \Bag_\C(\intr{\Gamma,\varA:\typeA};\intr{\typeB})
\times \Bag_\C(\intr{\Gamma}; \intr{\typeA})
&\to& \C(\intr{\Gamma},\intr{\typeB})\\
\ssubst{-}{\vec{x}}{-}
&:& \Seq_\C(\intr{\Gamma,\varA:\typeA}; \intr{\vtypeB})
\times \Bag_\C(\intr{\Gamma}; \intr{\typeA})
&\to& \C(\intr{\Gamma}, \intr{\vtypeB})
\end{array}
\]
using our cartesian-like notations:
\begin{align*}
\ssubst{f}{\varA}{\bag{g}}
& \eqdef f \circ \merge_{\intr{\Gamma}, \intr{\varA : \typeA}}
\circ \tuple{\id_{\intr{\Gamma}}, \Pi{\bag{g}}}
\\
\ssubst{\bag{f}}{\varA}{\bag{g}}
&\eqdef \Pi{\bag{f}} \circ
\merge_{\intr{\Gamma}, \intr{\varA : \typeA}} \circ
\tuple{\id_{\intr{\Gamma}}, \Pi{\bag{g}}}
\\
\ssubst{\seq{f}}{\varA}{\bag{g}}
&\eqdef \pack{\seq{f}\,} \circ
\merge_{\intr{\Gamma}, \intr{\varA : \typeA}} \circ
\tuple{\id_{\intr{\Gamma}}, \Pi{\bag{g}}}
\end{align*}
for
$f \in \Trm_\C(\intr{\Gamma,\varA:\typeA}; \intr{\typeB})$,
$\bag{f} \in \Bag_\C(\intr{\Gamma,\varA:\typeA}; \intr{\typeB})$,
and $\seq{f} \in \Seq_\C(\intr{\Gamma, \varA:\typeA}; \intr{\vtypeB})$
respectively.
We may now state the substitution lemma:

\begin{restatable}{lem}{substitutionlemma}\label{lem:substitution_lemma}
Consider $\bagB \in \BagOf{\Gamma}{\typeA}$,
$\Delta = \Gamma, \varA : \typeA$ and $\termA \in \TrmOf{\Delta}{\typeB}$.
Then, 
\[ 
    \intr{\rsubst{\termA}{\varA}{\bagB}} 
    = \ssubst{\intr{\termA}}{\varA}{\intr{\bagB}}\,.
\]
\end{restatable}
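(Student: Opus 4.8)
The plan is to prove, simultaneously for the three syntactic kinds and by induction on the size of the expression, the stated identity for terms together with its analogues for bags $\bagC\in\BagOf{\Delta}{\typeC}$ and sequences $\vbagC\in\SeqOf{\Delta}{\vtypeC}$ (with $\Delta=\Gamma,\varA:\typeA$):
\[
\Pi\intr{\rsubst{\bagC}{\varA}{\bagB}}=\ssubst{\intr{\bagC}}{\varA}{\intr{\bagB}}\,,
\qquad
\pack{\intr{\rsubst{\vbagC}{\varA}{\bagB}}}=\ssubst{\intr{\vbagC}}{\varA}{\intr{\bagB}}\,,
\]
where on the left-hand sides $\Pi$, $\pack{-}$ and $\intr{-}$ are extended by linearity to the finite sums of bags (resp. sequences) produced by $\rsubst{-}{\varA}{-}$, so that both sides are genuine morphisms of $\C$ even though sums of bags are not themselves interpreted. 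Since every immediate subexpression of a constructor is strictly smaller than the whole, the induction is well-founded on size alone; throughout, $\varA$ and $\typeA$ stay fixed while $\bagB$, $\Gamma$ and the target type are universally quantified.

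The easy cases are bags, sequences and abstractions. For $\mset{\termA_1,\dots,\termA_n}$ the syntactic substitution splits $\bagB$ into $n$ parts fed componentwise; semantically $\Pi\intr{\mset{\termA_1,\dots,\termA_n}}=\intr{\termA_1}\bagcons\dots\bagcons\intr{\termA_n}$ (a $\bagcons$-product in $\C$), and an $n$-fold application of the $\bagcons$-version of \cref{lem:tuple_comp_prod}, with $h$ the identity on $\intr{\Gamma}$ (trivially a comonoid morphism) and the structural isomorphism $\intr{\Gamma}\tensor\intr{\typeA}\iso\intr{\Delta}$ absorbed as a comonoid morphism, distributes the substitution over precisely the same sum over partitionings of $\intr{\bagB}$; the induction hypotheses on the $\termA_i$ and linearity rebuild the right-hand side. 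The sequence case is identical, with $\pack{-}$ and the pairing version of \cref{lem:tuple_comp_prod}. For $\labs{\varC}{\termA}$ (with $\varC$ fresh, so $\varC\neq\varA$), one unfolds the currying in $\intr{\labs{\varC}{\termA}}$, slides the precomposition by the substitution morphism underneath $\Lambda$ using naturality of currying and coherence of the structural isomorphisms, invokes the induction hypothesis on $\termA$, and recombines by linearity of $\Lambda$. A recurring point is that a partitioning of an enumeration of $\bagB$ induces the identically indexed partitioning of the corresponding enumeration of $\intr{\bagB}$, so the syntactic and semantic sums (notation of \eqref{eq:splitinto}) match summand by summand.

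The two application constructors carry the real content. For a general application, using $\rsubst*{\rappl{\termA}{\bagC}}{\varA}{\bagB}=\sum_{\bagB\splitinto\bagB_1\bagcons\bagB_2}\rappl*{\rsubst{\termA}{\varA}{\bagB_1}}*{\rsubst{\bagC}{\varA}{\bagB_2}}$, one unfolds $\intr{\rappl{\termA}{\bagC}}=\evm\circ\tuple{\intr{\termA},\Pi\intr{\bagC}}$, precomposes by the substitution morphism, and applies \cref{lem:tuple_comp_prod}: $\intr{\bagB}$ is split between the two components of the pair, the induction hypotheses on $\termA$ and on $\bagC$ identify those components with $\intr{\rsubst{\termA}{\varA}{\bagB_1}}$ and $\Pi\intr{\rsubst{\bagC}{\varA}{\bagB_2}}$, and bilinearity of $\evm\circ\tuple{-,-}$ reassembles the sum into $\intr{}$ of the syntactic sum. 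The variable case $\rappl{\varB}{\vbagC}$, with $(\varB:\vtypeC\to\typeo)\in\Delta$, is the crux. One unfolds $\intr{\rappl{\varB}{\vbagC}}=\evm\circ\tuple{\pid\circ\var^{\Delta}_{\varB},\pack{\intr{\vbagC}}}$ and propagates the substitution through the pair as above; the delicate point is the head component $\pid\circ\var^{\Delta}_{\varB}$ precomposed with the substitution morphism restricted to a sub-bag $\bagB_1$ of $\intr{\bagB}$. If $\varB\neq\varA$, then $\var^{\Delta}_{\varB}$ factors through the projection erasing the $\intr{\typeA}$-component, so by the second clause of item~(1) of \cref{lem:main_lemma} that summand vanishes unless $\bagB_1$ is empty, mirroring $\rsubst{\varB}{\varA}{\bagB_1}=\varB$ if $\bagB_1=\emptybag$ and $0$ otherwise, the surviving summand being $\intr{\rappl{\varB}*{\rsubst{\vbagC}{\varA}{\bagB}}}$ by the sequence induction hypothesis. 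If $\varB=\varA$, then $\var^{\Delta}_{\varA}$ is the projection onto the $\intr{\typeA}$-component, so up to the coherence isomorphism $\zeta_{\typeA}$ (for which a straightforward commutation with $\pid$ holds) the head component becomes $\zeta_{\typeA}\circ\pid\circ\Pi\bagB_1$: by \cref{lem:pickone} this is $0$ unless $\bagB_1=[v]$ is a singleton, in which case it is $\zeta_{\typeA}\circ v$ with $v=\intr{\termB'}$ the term selected by $\rsubst{\varA}{\varA}{\mset{\termB'}}=\termB'$; since $\intr{\termB'}$ is pointed, $\pid\circ\intr{\termB'}=\intr{\termB'}$, and recombining with $\evm$ and the sequence induction hypothesis — together with the routine coherence identity relating the packed $n$-ary application of the variable clause to iterated binary applications — yields $\intr{\rappl{\termB'}*{\rsubst{\vbagC}{\varA}{\bagB_2}}}$, as required.

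The main obstacle I anticipate is precisely this last subcase. It is where \cref{lem:pickone} does the work of the syntactic requirement ``supply exactly one resource per occurrence of $\varA$'', and making it fit requires carefully tracking how $\var^{\Delta}_{\varA}$, the coherence isomorphism $\zeta_{\typeA}$, the pointed identity $\pid$, and the $n$-ary-versus-iterated presentations of evaluation interact — in particular checking that $\pid$ commutes with $\zeta_{\typeA}$ so that \cref{lem:pickone} applies at the right type. Everything else is essentially bookkeeping, provided one stays disciplined about keeping the sums over bag-partitionings in lockstep between the syntactic and semantic sides.
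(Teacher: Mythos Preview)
Your proposal is correct and follows essentially the same approach as the paper: the paper's proof is the brief statement that one proves the three simultaneous claims (for terms, bags via $\Pi$, and sequences via $\pack{-}$) by induction on typing derivations using Lemmas~\ref{lem:tuple_comp_prod} and~\ref{lem:pickone}, which is exactly what you have fleshed out. You have also correctly identified the delicate point (the head-variable case, where \cref{lem:pickone} and the $\pid$/$\zeta_\typeA$ interaction come into play, together with the coherence between the $n$-ary evaluation of the variable clause and iterated binary application), which the paper's sketch leaves implicit.
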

\begin{proof}
We show the stronger statement that for all
$\bagB \in \BagOf{\Gamma}{\typeA}$
and $\Delta = \Gamma, \varA : \typeA$,
\[
\begin{array}{rl}
\text{\emph{(1)}}
& \text{if $\termA \in \TrmOf{\Delta}{\typeB}$, then
$\intr{\rsubst{\termA}{\varA}{\bagB}} =
\ssubst{\intr{\termA}}{\varA}{\intr{\bagB}}$;}\\
\text{\emph{(2)}}
& \text{assume $\bagA \in \BagOf{\Delta}{\typeB}$ with
    $\rsubst{\bagA}{\varA}{\bagB} = \sum_{1\leq i \leq n}
\bagA_i$, where $\bagA_i \in \BagOf{\Gamma}{\typeB}$,}\\
&\text{then $\sum_{1\leq i \leq n} \Pi{\intr{\bagA_i}} =
\ssubst{\intr{\bagA}}{\varA}{\intr{\bagB}}$;}\\
\text{\emph{(3)}}
& \text{assume $\vbagA \in \SeqOf{\Delta}{\vtypeB}$ with
    $\rsubst{\vbagA}{\varA}{\bagB} = \sum_{1\leq i \leq n}
\vbagA_i$, where  $\vbagA_i \in \SeqOf{\Gamma}{\vtypeB}$,}\\
&\text{then $\sum_{1\leq i \leq n} \pack{\intr{\vbagA_i}} =
\ssubst{\intr{\vbagA}}{\varA}{\intr{\bagB}}$;}
\end{array}
\]
which follows by induction on typing derivations, using the lemmas of the
previous subsection.
\end{proof}

From the substitution lemma above, we may easily
deduce:

\begin{restatable}{thm}{invariance}
If $\termsA\in\STrmOf{\Gamma}{\typeA}$ and $\termsA\bred\termsA'$ then
$\intr{\termsA} = \intr{\termsA'}$.
\end{restatable}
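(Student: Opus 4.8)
The plan is to peel off the sum structure, reduce to a single-step reduction of a single term, and then induct on the reduction derivation exactly as in the proof of subject reduction; the one real computation, a semantic $\beta$-law, then reduces to \cref{lem:substitution_lemma}, so very little work remains.

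First I would reduce to single terms. Since $\intr{-}$ is extended coordinatewise to sums, $\intr{\sum_{i} \termA_i} = \sum_{i} \intr{\termA_i}$, and since $\termsA \bred \termsA'$ means $\termsA = \termB + \termsC$, $\termsA' = \termsB' + \termsC$ with $\termB$ a single term and $\termB \bred \termsB'$, it suffices to prove $\intr{\termB} = \intr{\termsB'}$ for $\termB \in \TrmOf{\Gamma}{\typeA}$. I would then proceed by induction on the derivation of $\termB \bred \termsB'$ from the rules of \cref{fig:reduction}, carried out simultaneously with the bag-level statement: if $\bagB \in \BagOf{\Gamma}{\typeB}$ and $\bagB \bred \bagsB'$ with $\bagsB' = \sum_i \bagB_i$, then $\sum_i \Pi\intr{\bagB_i} = \Pi\intr{\bagB}$ in $\C(\intr{\Gamma}, \intr{\typeB})$. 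This $\Pi$-reformulation is the shape already needed for clauses (2)--(3) of the strengthened form of \cref{lem:substitution_lemma}: a bag reduces to a \emph{sum} of bags, which cannot be compared to a single meta-bag, but whose $\Pi$-flattenings can.

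For the base case, the $\beta$-redex $\rappl*{\labs{\varA}{\termA}}{\bagB} \bred \rsubst{\termA}{\varA}{\bagB}$, I would unfold the application clause and the abstraction clause of \cref{fig:interpretation} to get $\intr{\rappl*{\labs{\varA}{\termA}}{\bagB}} = \evm \circ \tuple{\Lambda(\intr{\termA} \circ m), \Pi\intr{\bagB}}$, where $\Lambda$ is currying and $m$ the canonical context-merging isomorphism. Expanding the pairing as $(- \tensor -) \circ \delta$, using bifunctoriality of $\tensor$ to factor out $\id \tensor \Pi\intr{\bagB}$, and applying the triangle identity $\evm \circ (\Lambda(f) \tensor \id) = f$ of the closed structure, this collapses to $\intr{\termA} \circ m \circ \tuple{\id, \Pi\intr{\bagB}}$, which is by definition $\ssubst{\intr{\termA}}{\varA}{\intr{\bagB}}$; and $\ssubst{\intr{\termA}}{\varA}{\intr{\bagB}} = \intr{\rsubst{\termA}{\varA}{\bagB}}$ by \cref{lem:substitution_lemma}. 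For the inductive (contextual) cases --- reduction under $\labs{\varA}{-}$, in a function position $\rappl{-}{\bagB}$, in an argument position $\rappl{\termA}{-}$, inside a bag $\mset{-}\bagcons\bagB$, and inside a bag occurring under a fully applied variable $\rappl{\varA}{\vbagB}$ (reached through the nested-application reading of such a term, as in the subject reduction proof) --- I would inspect the relevant clause of \cref{fig:interpretation} and observe that every operation it uses, namely $\Lambda(- \circ m)$, $\evm \circ \tuple{-,-}$, the product $\Pi$, the packing $\pack{-}$, and meta-bag formation followed by $\Pi$, is additive in the argument being rewritten. Hence the induction hypothesis --- $\intr{\termA} = \intr{\termsA'}$ for a term subderivation, or $\sum_i \Pi\intr{\bagB_i} = \Pi\intr{\bagB}$ for a bag subderivation --- propagates through the context verbatim, and the passage back to sums of terms follows at once from linearity of $\intr{-}$.

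I expect there to be essentially no hard step left: all the genuine work --- propagation of substitution through the three syntactic kinds, the bag-partitioning combinatorics, and the uses of \cref{lem:main_lemma} and \cref{lem:pickone} --- has already been discharged inside \cref{lem:substitution_lemma}, and the $\beta$-case above is only the triangle identity of the closed structure together with bifunctoriality. The one point deserving care is the bookkeeping of the simultaneous term/bag induction, and in particular the var case: one must check that a reduction inside $\rappl{\varA}{\vbagB}$ only ever produces a sum of terms of the same shape with a single bag replaced by its reduct, so that additivity of $\pack{-}$ in each coordinate lets the bag-level hypothesis pass through --- this mirrors the treatment already made in the proof of subject reduction.
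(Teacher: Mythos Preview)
Your proof is correct and follows essentially the same route as the paper: reduce to a single term, handle the $\beta$-case via the substitution lemma (plus the triangle identity of the closed structure, which the paper leaves implicit), and propagate through contexts using additivity. The only presentational difference is that the paper organizes the contextual induction as a \emph{three}-way mutual statement on terms, bags and sequences (with $\Pi\intr{-}$ for bags and $\pack{\intr{-}}$ for sequences), whereas you fold the sequence case into the term case for $\rappl{\varA}{\vbagB}$; both organizations work, and your explicit remark on additivity of $\pack{-}$ in each coordinate is exactly what is needed there.
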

\begin{proof}
Preservation of $\beta$-reduction follows from \autoref{lem:substitution_lemma}. 
Invariance for bags and sequences must be stated carefully: reduction
yields sums of bags and sequences whereas the sets $\Bag_\C(\Gamma, A)$
and $\Seq_\C(\Gamma, \seq{A})$ are not stable under sums.
To show that invariance extends by context closure, we prove the three
statements:
\[
\begin{array}{rl}
\text{\emph{(1)}} &
\text{if $\termA\in\TrmOf{\Gamma}{\typeA}$ and $\termA\bred\termsA'$
then $\intr{\termA} = \intr{\termsA'}$;}\\
\text{\emph{(2)}} &
\text{if $\bagA\in\BagOf{\Gamma}{\typeA}$ and
    $\bagA\bred\sum_{i\in I} \bagA_i$
    then $\Pi{\intr{\bagA}} = \sum_{i\in I} \Pi{\intr{\bagA_i}}$;}\\
\text{\emph{(3)}} &
\text{if $\vbagA\in\SeqOf{\Gamma}{\vtypeA}$ and
    $\vbagA\bred\sum_{i\in I} \vbagA_i$
    then $\pack{\intr{\vbagA}} = \sum_{i\in I} \pack{\intr{\vbagA_i}}$;}
\end{array}
\]
by mutual induction, following the inductive definition of context
closure. Finally, it is immediate that this extends to sums as
required.
\end{proof}

\section{Game Semantics as a Resource Category}
\label{sec:strategies_resource_cat}

It remains to check that $\Strat$ is indeed a resource category, and
that the induced interpretation of normal forms coincides with the
bijections from \autoref{th:bij_aug_term}. 

\subsection{Additive Symmetric Monoidal Structure}

\paragraph{Tensor.} As for composition we first define
the tensor of augmentations, then isogmentations, then strategies.
For $A_i$, $B_i$ arenas with $\augQ_i \in
\Aug(A_i \vdash B_i)$ for $i=1,2$, we set $\augQ_1 \tensor \augQ_2 \in
\Aug(A_1 \tensor A_2 \vdash B_1 \tensor B_2)$
with
$\ev{\augQ_1 \tensor \augQ_2} = \ev{\augQ_1} + \ev{\augQ_2}$ and 
$\display_{\augQ_1 \tensor \augQ_2}(i, m) = (j,(i,n))
\,\text{ if }\, \display_{\augQ_i}(m) = (j,n)$,
and the orders $\leq_{\augQ_1 \tensor \augQ_2}$ and $\leq_{\deseq{\augQ_1
\tensor \augQ_2}}$ inherited. This construction preserves isomorphisms,
hence the tensor $\isogQ_1 \tensor \isogQ_2 \in \IAug(A_1 \tensor A_2 \vdash
B_1 \tensor B_2)$ may be defined via any representative -- for
definiteness, we use the chosen representatives of $\isogQ_1$ and $\isogQ_2$.
We lift the definition to strategies with,
for $\strS_1 : \Gamma_1 \vdash A_1$ and $\strS_2 : \Gamma_2 \vdash
A_2$:
\[
\strS_1 \tensor \strS_2 \eqdef
\sum_{\isogQ_1 \in \IAug(A_1 \vdash B_1)} \,
\sum_{\isogQ_2 \in \IAug(A_2 \vdash B_2)} \,
\strS_1(\isogQ_1) \, \strS_2(\isogQ_2) \, \cdot
\left( \isogQ_1 \tensor \isogQ_2 \right) \,.
\]

\paragraph{Structural morphisms.} Structural morphisms are all
variations of copycat. As we did for copycat itself, we start
with concrete representatives. Consider $A$, $B$, $C$ arenas, and
$x \in \conf(A)$, $y \in \conf(B)$, $z \in \conf(C)$. Noting $\vide$ the empty
configuration on $1$, we set:
\[
\begin{array}{lclclcl}
        \deseq{\lambda_A^x} & = & \vide \tensor x \vdash x \,,
        & \qquad &
        \deseq{ \alpha_{A,B,C}^{x,y,z}} & = &
        x \tensor (y \tensor z) \vdash (x \tensor y) \tensor z \,,
        \\
        \deseq{\rho_A^x} & = & x \tensor \vide \vdash x \,,
        & \qquad &
        \deseq{\gamma_{A,B}^{x,y}} & = & x \tensor y \vdash y \tensor x
\,.
        
\end{array}
\]
and $\lambda_A^x$, $\rho_A^x$, $\alpha_{A,B,C}^{x,y,z}$ and
$\gamma_{A,B}^{x,y}$
are defined from these, augmented with the obvious copycat behaviour.

We lift this to isogmentations: for $\posX \in \pos(A)$,
$\blambda_A^\posX$ is the isomorphism class of $\lambda_A^{\rep{\posX}}$; and
likewise for the others. Then the strategy $\lambda_A$ is defined as
for $\id_A$ in \eqref{eq:def_cc} (\autopageref{eq:def_cc}) by setting
\[
    \lambda_A \eqdef \sum_{\posX \in \pos(A)}
    \frac{1}{\nSymOf{\posX}} \cdot \blambda_A^{{\posX}}
\]
and likewise for $\rho_A$, $\alpha_{A,B,C}$ and $\gamma_{A,B}$.
These
structural morphisms satisfy the necessary conditions to make $(\Strat,
\tensor, 1)$ a symmetric monoidal category.

\paragraph{Additive Structure.}
The additive structure is simply given by the formal sum.
Consider $\Gamma$, $A$ negative arenas and $\strS, \strT : \Gamma \vdash A$,
then the sum $\strS + \strT : \Gamma \vdash A$ is:
\[
       \strS + \strT \eqdef \sum_{\isogQ \in \IAug(\Gamma \vdash A)}
       \left( \strS(\isogQ) + \strT(\isogQ) \right) \cdot \isogQ \,, 
\]
and $0$ is the empty strategy ($\supp(0)=\emptyset$). 
The tensor and the composition are compatible with the additive
structure, hence $\Strat$ is an asmc.

\subsection{Resource Category Structure}

\paragraph{Bialgebra.}
As resource categories rest on bialgebras, this is our next step in
understanding the categorical structure of $\Strat$. 
For the strategies for (co)multiplication, we first set configurations
$\deseq{\delta_A^{x,y}} = x * y \vdash x \tensor y$ and
$\deseq{\mu_A^{x,y}} = x \tensor y \vdash x * y$
for any $A$ and $x, y \in \conf(A)$; $\delta_A^{x,y}$, $\mu_A^{x,y}$ are
obtained by adjoining copycat behaviour on $x$ and $y$. This lifts to
isogmentations $\bdelta_A^{\posX,\posY}$ and $\bmu_A^{\posX,\posY}$ for
$\posX,\posY\in\pos(A)$.
We can now define the strategies as:
\[
       \delta_A \eqdef
       \sum_{\posX,\posY \in \pos(A)}
       \frac{1}{\nSymOf{\posX} \times \nSymOf{\posY}}
       \cdot \bdelta_A^{\posX,\posY}
       \qquad
       \mu_A \eqdef
       \sum_{\posX,\posY \in \pos(A)}
       \frac{1}{\nSymOf{\posX} \times \nSymOf{\posY}}
       \cdot \bmu_A^{\posX,\posY}
       \,.
\]
The unit $\epsilon_A$ and co-unit and $\eta_A$ are both strategies
with only the empty isogmentation in their support, with coefficient
$1$. This provides the components for the following result:

\begin{prop}
For any negative arena $A$,
the tuple $(A,\delta_A, \epsilon_A, \mu_A, \eta_A)$ is a bialgebra.
\end{prop}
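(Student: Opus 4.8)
The plan is to verify each of the bialgebra axioms by working at the level of augmentations, and then transporting to isogmentations and strategies. First I would observe that all the structural strategies $\delta_A, \epsilon_A, \mu_A, \eta_A$ are defined (like $\id_A$ and the monoidal structure morphisms) as sums over positions of copycat-like isogmentations, weighted by the inverse cardinalities of symmetry groups. Hence, by the same mechanism that makes $\id_A$ neutral (Proposition~\ref{prop:cc_neutral}), the sums over all mediating symmetries in a composition will exactly cancel these coefficients; so proving the bialgebra laws reduces to proving them at the level of the underlying copycat-style augmentations, up to isomorphism, for each choice of representatives of the positions involved.

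Concretely, I would proceed axiom by axiom following Figures~\ref{fig:alg_laws} and~\ref{fig:bialg_laws}. For the comonoid laws (coassociativity, counit) the argument mirrors the one for copycat: both sides of an equation, when composed concretely, are copycat augmentations on the appropriate iterated configuration $x_1 * \dots * x_k \vdash x_1 \otimes \dots \otimes x_k$ (or the dual), and the induced isomorphisms are read off directly from the combinatorics of events — this is completely analogous to Lemmas~\ref{lem:cc_neutral_aug} and~\ref{lem:comp_associative_aug}. The monoid laws (associativity and unit of $\mu_A$) are formally dual. The genuinely new content is in the four bialgebra compatibility laws relating $\delta$ and $\mu$; here I would unfold $\mu_A^{x,y}$ composed with $\delta_A^{z,w}$ via interaction-then-hiding, using Lemmas~\ref{lem:imc_int_forward} and~\ref{lem:imc_int_backward} to control immediate causality in the interaction. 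The key combinatorial fact is that when a comultiplication copycat meets a multiplication copycat over a shared configuration which itself is a disjoint union $u * v$, the interaction splits the "incoming" copies and redistributes them, and summing over the symmetry identifying the shared state produces exactly the sum over ways of partitioning the bag of copies — matching the right-hand side, which is a double $\delta$ followed by a double $\mu$ with a swap in the middle. The remaining three laws ($\delta \circ \eta$, $\epsilon \circ \mu$, $\epsilon \circ \eta$) are degenerate cases: since $\epsilon_A$ and $\eta_A$ have only the empty isogmentation in their support, these reduce to trivial identities about the empty configuration, and can be checked by a one-line support computation.

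I would also need to confirm that composition of these structural strategies is well-defined independently of representatives — but this was already established by the congruence argument in Section~\ref{subsec:catlaws} (Lemma~\ref{lem:iso_comp}), so it can be invoked silently. The main obstacle I expect is the core bialgebra law (the first diagram of Figure~\ref{fig:bialg_laws}): one must carefully track how the interaction of $\delta_A^{z,w}$ with $\mu_A^{x,y}$ re-partitions events when $z * w$ and $x * y$ are identified by a symmetry $\varphi$, and verify that the weighted sum over all such $\varphi$ reorganizes into the sum indexed by partitions on the right-hand side. This is essentially the "$\mu$ distributes over $\delta$ by summing over partitions" phenomenon already visible in Figure~\ref{fig:compat_bags}, and indeed it is the combinatorial heart of why $\Strat$ is a resource category; the verification is bookkeeping-heavy but follows the template of the categorical-law proofs in Section~\ref{subsec:catlaws}. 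I would relegate the detailed diagram chase to an appendix, stating in the main text only that it proceeds by the now-familiar method of establishing the laws on copycat-style augmentations and cancelling symmetry coefficients through the sum in \eqref{eq:def_comp}.
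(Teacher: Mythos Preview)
Your approach matches the paper's: most laws are indeed straightforward copycat-style verifications, and the paper also singles out the exchange law $\delta_A \odot \mu_A = \rassemble_A \odot \distribute_A$ as the only subtle case, deferring it to an appendix.

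There is, however, one point where your plan is slightly too optimistic. You write that ``by the same mechanism that makes $\id_A$ neutral, the sums over all mediating symmetries will exactly cancel these coefficients; so proving the bialgebra laws reduces to proving them at the level of the underlying copycat-style augmentations''. That cancellation pattern works for the (co)monoid laws, but for the exchange law it does \emph{not} reduce to a mere augmentation-level isomorphism plus simple cancellation. On the left, $\delta_A$ and $\mu_A$ are indexed by pairs of positions with coefficient $\frac{1}{\nSymOf{\posX}\nSymOf{\posY}}$ each; on the right, $\distribute_A$ and $\rassemble_A$ are indexed by \emph{quadruples} with coefficient $\frac{1}{\nSymOf{\posX_l}\nSymOf{\posX_r}\nSymOf{\posY_l}\nSymOf{\posY_r}}$ each. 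Matching these requires the combinatorial identity
\[
\nSymOf{\posX * \posY} = |\posX * \posY \darksplitinto \posX, \posY| \times \nSymOf{\posX} \times \nSymOf{\posY}\,,
\]
which the paper isolates as a separate lemma. Your later paragraph does gesture at this (``verify that the weighted sum over all such $\varphi$ reorganizes into the sum indexed by partitions''), so you have the right instinct; but you should be aware that this reorganization is not the same trivial $\frac{\nSymOf{\posX}}{\nSymOf{\posX}}=1$ cancellation as in Proposition~\ref{prop:cc_neutral}, and that identifying and proving this splitting-count identity is the actual combinatorial content of the exchange law.
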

\begin{proof}
Most of the identities involved in the bialgebra structure are
straightforward. The only subtle law is the distributivity between
$\delta_A$ and $\mu_A$, which is detailed in \autoref{app:proof_bialgebra}.
\end{proof}

Additionally, it is direct that this is compatible with the monoidal structure.

\paragraph{Pointed Identity.}
Now, we focus on the next component of resource categories:
pointed identities and morphisms.
So we start by defining the pointed identity as the copycat strategy
over pointed positions only:
\[
       \pid_A = \sum_{\posX \in \pos_\bullet(A)}
       \frac{1}{\nSymOf{\posX}} \cdot \isogCC_{\posX}
\]
where $\pos_\bullet(A)$ denotes pointed positions of $A$,
\emph{i.e.}\ $\posX\in\pos(A)$ such that $\rep{\posX}$ is pointed.
Note that for any $\posX \in \pos_\bullet(A)$, 
we have $\deseq{\rep{\isogCC_{\posX}}} = \rep{\posX} \vdash \rep{\posX}$,
where both $\rep{\posX}$'s are pointed.
Hence, isogmentations in $\id_A^\bullet$ have both left and right 
components pointed
-- which explains why $\pid_A$ satisfies the pointed identity laws.

This definition corresponds to the intuition given in the previous section.
We want \emph{pointed morphisms} to represent singleton bags
-- strategies whose isogmentations are all pointed.
But for any $\strS:\Gamma \vdash A$, since $\pid_A$ is defined over pointed
positions only, the composition $\pid_A \odot \strS$ preserves only the
isogmentations in $\strS$ which are themselves pointed 
-- hence $\pid_A \odot \strS = \strS$ iff all isogmentations of $\strS$ are
pointed.
Dually, we want \emph{co-pointed morphisms} to represent morphisms which
require exactly one resource. Consider again $\strS : \Gamma \vdash A$, 
then $\strS \odot \pid_\Gamma$ preserves only the isogmentations
which have their left component pointed -- hence $\strS$ is 
co-pointed iff the left components of all of its isogmentations are pointed.

\paragraph{Closed structure.}
Finally, we look at monoidal closure.
We use the currying bijection $\Lambda_{\Gamma, A, B}$
from \autoref{sec:resource_as_aug}. For
$\strS : \Gamma \tensor A \vdash B$, we set
$        \Lambda_{\Gamma, A, B}(\strS)
        =
        \sum_{\isogQ \in \IAug(\Gamma \tensor A \vdash B)}
        \strS(\isogQ) \cdot \Lambda_{\Gamma, A, B}(\isogQ)$,
which directly yields 
\[
\Lambda_{\Gamma, A, B}: \Strat(\Gamma \tensor A, B) \iso \Strat(\Gamma, A \tto B)
\,.
\]
The laws of monoidal closure are easily
verified, as well as the additional identity of
\autoref{defi:closed}, finally yielding the desired structure:
\begin{restatable}{thm}{thmStrat}
        $\Strat$ is a closed resource category.
\end{restatable}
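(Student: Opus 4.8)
The plan is to assemble the verification of $\textbf{Theorem}$ ($\Strat$ is a closed resource category) from the pieces already established in \cref{sec:strategies} and \cref{sec:strategies_resource_cat}, checking the axioms of \cref{def:resource_cat} (resource category) in order. First, $\Strat$ is a category by \cref{th:strat_cat}, and it is an asmc: the tensor on strategies was defined above, the structural morphisms $\blambda$, $\brho$, $\balpha$, $\bgamma$ are copycat-like strategies whose coherence follows, as with the identity laws in \cref{prop:cc_neutral}, from the neutrality and associativity lemmas on augmentations (\cref{lem:cc_neutral_aug}, \cref{lem:comp_associative_aug}) together with bilinearity of composition and the compensating coefficients $1/\nSymOf{\posX}$; the additive structure is the formal sum, and composition and tensor are bilinear by construction of \eqref{eq:def_comp} and of $\strS_1 \tensor \strS_2$. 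So the asmc structure is in place.

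Next I would invoke the already-stated results that each arena $A$ carries a bialgebra structure $(A,\delta_A,\epsilon_A,\mu_A,\eta_A)$ (\cref{prop:bialgebra}, with the single subtle law deferred to the appendix) and that this bialgebra structure is compatible with the monoidal structure in the sense of \cref{fig:compat_com_mon} — again a computation on copycat-style augmentations modulo symmetry counting. Then I would check that $\pid_A$, defined as the copycat strategy restricted to \emph{pointed} positions, is an idempotent satisfying the pointed-identity laws of \cref{fig:laws_pointed} together with $\epsilon_A \odot \pid_A = 0$ and $\pid_A \odot \eta_A = 0$: idempotence and the last two equations are immediate because $\isogCC_\posX$ with $\posX$ pointed has exactly one initial Opponent move on each side, so composing with $\epsilon_A$ or $\eta_A$ (supported on the empty isogmentation only) yields $0$; the (co)multiplication laws follow from analysing how $\bdelta_A^{\posX,\posY}$ and $\bmu_A^{\posX,\posY}$ interact with copycat on pointed positions, splitting a configuration at its unique initial move — this is where the ``$\pid_A$ picks a single element'' intuition is realised, and it mirrors \cref{lem:pickone}. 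Finally, monoidal closure: the bijection $\Lambda_{\Gamma,A,B} : \Strat(\Gamma \tensor A, B) \iso \Strat(\Gamma, A \tto B)$ was already exhibited by lifting the currying bijection of \cref{prop:def_val} linearly to strategies, and one checks naturality in $\Gamma$ and the triangle identities with $\evm$, which reduce to the corresponding facts about $\Lambda$ on augmentations.

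The main obstacle, and the reason most of the bialgebra verification is pushed to \cref{app:proof_bialgebra}, is the distributivity law between $\delta_A$ and $\mu_A$ (the first diagram of \cref{fig:bialg_laws}): unlike the other bialgebra equations, both sides involve genuine sums over symmetries arising from the interaction of $\bmu$ with $\bdelta$, and one must verify that the multiplicities match — that the number of ways to mediate the interaction on the common interface, weighted by the $1/\nSymOf{\posX}$ coefficients, is the same on both sides. Concretely this is a bijective-counting argument: the left-hand side sums over positions $\posX$ with a splitting of $x$ into the two outputs of $\mu$ and two mediating symmetries, while the right-hand side sums over a pair of positions, a pair of splittings, and a recombination; one sets up a bijection between the indexing data making the summands equal up to the known ratios of symmetry-group cardinalities. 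Everything else — the monoid laws, the comonoid laws, compatibility with the monoidal structure, the pointed-identity laws, and monoidal closure — is, as the paper says elsewhere, a lengthy but routine diagram chase on copycat-shaped augmentations modulo symmetry, of exactly the flavour already illustrated in the proof of \cref{prop:cc_neutral}.
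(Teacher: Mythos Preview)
Your proposal is correct and follows essentially the same approach as the paper: the theorem is assembled from the pieces established throughout \cref{sec:strategies_resource_cat}, with the only genuinely delicate point being the $\delta$/$\mu$ distributivity law, which is exactly the one the paper isolates and defers to \cref{app:proof_bialgebra} as a symmetry-counting argument. Your diagnosis of that argument --- matching multiplicities of splittings weighted by $1/\nSymOf{\posX}$ coefficients via a bijection of indexing data --- is precisely the mechanism used there (Lemmas \ref{lem:main_distr} and \ref{lem:other_distr}).
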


\subsection{Compatibility with normal forms}
\label{subsec:compat_nf}
Finally, we show that, up to the bijection $\sintr{-}$
between normal resource terms and isogmentations,
the interpretation of a resource term in the resource category $\Strat$
coincides with its normal form.

\begin{prop}\label{prop:compat_nf}
Consider $\termA \in \TrmOfNf{\Gamma}{\typeA}$.
Then $\intr{\termA}$ is the sum having $\sintr{\termA}$ with
coefficient $1$, and $0$ everywhere else.
\end{prop}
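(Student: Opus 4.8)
The plan is to prove \cref{prop:compat_nf} by mutual induction on typing derivations of normal forms, simultaneously establishing the analogous statements for bags and sequences. Concretely, I would prove together: (1) for $\termA \in \TrmOfNf{\Gamma}{\typeA}$, the strategy $\intr{\termA}$ has $\sintr{\termA}_\Trm$ with coefficient $1$ and $0$ elsewhere; (2) for $\bagA = \mset{\termA_1,\dots,\termA_n}\in\BagOfNf{\Gamma}{\typeA}$, the morphism $\Pi\intr{\bagA}\in\Strat(\intr\Gamma,\intr\typeA)$ equals $\sintr{\bagA}_\Bag = \sintr{\termA_1}_\Trm * \cdots * \sintr{\termA_n}_\Trm$, i.e. is supported on the single isogmentation $\sintr{\bagA}_\Bag$ with coefficient $1$; and (3) for $\vbagA\in\SeqOfNf{\Gamma}{\vtypeA}$, the packing $\pack{\intr{\vbagA}}$ is supported on $\sintr{\vbagA}_\Seq$ with coefficient $1$. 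By \cref{lem:typing_normal_forms}, normal forms are typed without $\rulename{app}$, so the only term cases are $\rulename{abs}$ and $\rulename{var}$, and I additionally have the $\rulename{bag}$ and $\rulename{seq}$ cases.

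For the $\rulename{abs}$ case, $\termA = \labs{\varA}{\termB}$ with $\intr{\termA} = \Lambda_{\intr\Gamma,\intr\typeA,\intr\typeB}(\intr{\termB}\circ\merge)$, and by definition of $\sintr{-}_\Trm$ via \cref{prop:def_val} we have $\sintr{\termA}_\Trm = \Lambda_{\intr\Gamma,\intr\typeA,\intr\typeB}(\sintr{\termB}_\Trm)$ (modulo the bookkeeping isomorphism $\merge$, which I must check matches the implicit reassociation in $\Lambda$ on arenas); applying the induction hypothesis to $\termB$ and the fact that currying on strategies is linear and bijective, $\intr{\termA}$ is supported on $\sintr{\termA}_\Trm$ with coefficient $1$. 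For $\rulename{bag}$ and $\rulename{seq}$, the statement reduces directly to the induction hypotheses for the components once one observes that the product $*$ of strategies, restricted to the pointed isogmentations coming from the $\intr{\termA_i}$, collapses — crucially using that each $\intr{\termA_i}$ is a \emph{single} isogmentation with coefficient $1$, so no cross terms or coefficient blow-up occur, and matching with \cref{prop:def_bag}, \cref{prop:def_seq}. The genuinely delicate case is $\rulename{var}$: here $\termA = \rappl{\varA}{\vbagB}$ and $\intr{\termA} = \evm \circ \tuple{\pid \circ \var^\Gamma_\varA, \pack{\intr{\vbagB}}}$; I must show this evaluates to exactly the $i$-lifting $\lift_i(\sintr{\vbagB}_\Seq)$ of \cref{prop:def_base}, where $i$ is the arena component picked out by the variable $\varA$. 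This requires unfolding the copycat strategy $\pid$, the structural iso $\zeta$, the evaluation morphism, and checking that the sum over symmetries in the composition of strategies is exactly compensated by the coefficients $1/\nSymOf{-}$ in $\pid$ and in the implicit identities — so that one lands on a single isogmentation with coefficient $1$ — and that the resulting causal structure is precisely the $i$-lifting.

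I expect the main obstacle to be this $\rulename{var}$ case: unlike the others it mixes several structural strategies (copycat, evaluation, pairing) whose composition is defined by summing over mediating symmetries, and one must verify carefully that all the symmetry-counting coefficients cancel to yield coefficient $1$, and that the underlying isogmentation obtained by hiding the interaction is exactly $\lift_i$. A clean way to organize this is to first prove a small lemma computing $\evm_{\intr\vtypeA,o}\circ\tuple{\pid_{\intr\vtypeA\tto o}\circ \iota, \pack{\seq{f}}}$ for a ``projection-like'' pointed map $\iota$ and an arbitrary sequence $\seq f$ of morphisms, identifying it with a lifting-style construction on strategies; then the $\rulename{var}$ case follows by instantiating $\iota = \var^\Gamma_\varA$ (which is essentially a projection composed with $\zeta$, hence interacts with copycat cleanly) and $\seq f = \intr{\vbagB}$, and invoking the induction hypothesis (3) for $\vbagB$. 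The remaining verifications — that $\merge$, $\zeta$ and the arena isomorphisms line up with the bijections of \cref{prop:def_seq}–\cref{prop:def_base} — are routine but must be stated, since $\sintr{-}$ is defined up to exactly those isomorphisms.
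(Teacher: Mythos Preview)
Your proposal is correct and follows essentially the same approach as the paper: mutual induction on typing derivations of normal forms (excluding $\rulename{app}$ by \cref{lem:typing_normal_forms}), with the $\rulename{abs}$, $\rulename{bag}$, $\rulename{seq}$ cases handled directly via the bijections of \cref{prop:def_seq}--\cref{prop:def_val}, and the $\rulename{var}$ case identified as the delicate one requiring that composition with $\evm_{\intr{\vtypeA},o}$ (which is $\Lambda^{-1}(\id)$) realize the $i$-lifting of \cref{prop:def_base}. The paper's proof is considerably more terse about this last point, merely asserting the correspondence; your planned unfolding of the symmetry-counting cancellation and the auxiliary lemma on $\evm\circ\tuple{\pid\circ\iota,\pack{\seq f}}$ is exactly the work hidden behind that assertion.
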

\begin{proof}
Recall the interpretation of the resource calculus in a resource category 
given in \autoref{fig:interpretation}.
Restricting to normal forms rules out the third clause in that figure.
We treat the remaining cases inductively.
The identity follows immediately from the induction hypothesis
for sequences, bags and abstraction terms, since
the interpretation matches the bijections of Propositions
\ref{prop:def_seq}, \ref{prop:def_bag} and \ref{prop:def_val}.
The case of a fully applied variable is less obvious, but the $i$-lifting
described in \autoref{fig:lifting} actually corresponds to the composition
with $\evm_{\intr{\vtypeA}, o}$ in the variable clause of
\autoref{fig:interpretation},
recalling that
\[
       \evm_{\intr{\vtypeA}, o} 
       = \Lambda^{-1}_{\intr{\vtypeA}\tto o, \intr{\vtypeA}, o}
       (\id_{\intr{\vtypeA} \tto o}) \,.
       \qedhere
\]
\end{proof}

Thus, our interpretation in $\Strat$ computes a representation of the
normal form:

\begin{cor}\label{cor:main}
If $\termA \in \TrmOf{\Gamma}{\typeA}$ has normal form $\sum_{i\in I}
\termA_i$, then $\intr{\termA} = \sum_{i\in I} \sintr{\termA_i}$.
\end{cor}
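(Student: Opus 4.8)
The plan is to derive Corollary~\ref{cor:main} by combining the invariance theorem with Proposition~\ref{prop:compat_nf}. Given $\termA\in\TrmOf{\Gamma}{\typeA}$ with normal form $\sum_{i\in I}\termA_i$, we have $\termA\breds\sum_{i\in I}\termA_i$, and each $\termA_i\in\TrmOfNf{\Gamma}{\typeA}$ by subject reduction. First I would invoke the invariance theorem (the restatable \texttt{invariance} result), which gives $\intr{\termA}=\intr{\termsA'}$ whenever $\termA\bred\termsA'$; iterating along the reduction sequence $\termA\breds\sum_{i\in I}\termA_i$ yields $\intr{\termA}=\intr{\sum_{i\in I}\termA_i}$.

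Next I would unfold the interpretation of sums: by definition the interpretation is extended to $\STrm(\Gamma;\typeA)$ via the additive structure of $\C=\Strat$, so $\intr{\sum_{i\in I}\termA_i}=\sum_{i\in I}\intr{\termA_i}$. Then I would apply Proposition~\ref{prop:compat_nf} to each $\termA_i$: since each is a normal resource term, $\intr{\termA_i}$ is the strategy assigning coefficient $1$ to the isogmentation $\sintr{\termA_i}$ and $0$ elsewhere — in the notation of the corollary, $\intr{\termA_i}=\sintr{\termA_i}$, identifying an isogmentation with the corresponding indicator strategy. Summing over $i\in I$ gives $\intr{\termA}=\sum_{i\in I}\sintr{\termA_i}$ as claimed.

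The only genuine subtlety — and the step I expect to be the main (minor) obstacle — is bookkeeping around sums rather than any real mathematics: confirming that the invariance statement, which is phrased for single-step reduction $\termsA\bred\termsA'$ on $\STrm(\Gamma;\typeA)$, propagates through the multi-step normalizing sequence, and that the confluence and strong normalization theorem (cited from~\cite{ER08}) legitimately lets us speak of \emph{the} normal form $\sum_{i\in I}\termA_i$ with each summand in $\TrmOfNf{\Gamma}{\typeA}$. Both are immediate: invariance is preserved under reflexive-transitive closure by a trivial induction on the length of the reduction, and the well-typedness of each normal-form summand is exactly subject reduction. Once those are noted, the corollary is a one-line composition of the two preceding results, so the proof is short.

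\begin{proof}
Let $\termA\in\TrmOf{\Gamma}{\typeA}$ with normal form $\sum_{i\in I}\termA_i$,
so that $\termA\breds\sum_{i\in I}\termA_i$ and, by subject reduction,
$\termA_i\in\TrmOfNf{\Gamma}{\typeA}$ for each $i\in I$.
By invariance under reduction, applied along the steps of this reduction,
$\intr{\termA}=\intr{\sum_{i\in I}\termA_i}$.
By definition of the interpretation on sums,
$\intr{\sum_{i\in I}\termA_i}=\sum_{i\in I}\intr{\termA_i}$.
Finally, by Proposition~\ref{prop:compat_nf}, for each $i\in I$ the strategy
$\intr{\termA_i}$ assigns coefficient $1$ to $\sintr{\termA_i}$ and $0$
elsewhere, \emph{i.e.} $\intr{\termA_i}=\sintr{\termA_i}$.
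Hence $\intr{\termA}=\sum_{i\in I}\sintr{\termA_i}$.
\end{proof}
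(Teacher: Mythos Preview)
Your proof is correct and matches the paper's approach exactly: the paper states this as an immediate corollary of the invariance theorem together with Proposition~\ref{prop:compat_nf}, without even spelling out the steps you give. Your additional remarks on subject reduction, linearity of the interpretation on sums, and iterating invariance along $\breds$ are the right bookkeeping to make the one-line argument precise.
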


\section{Concluding remarks}
\label{sec:conclusion}

The correspondence with game semantics relies on the terms of the resource
calculus to be $\eta$-expanded. This was expected
-- as in \cite{TO16} --
but some consequences deserve discussion.

Firstly, $\varA : \typeA \to \typeB$ is not a valid term as it is not
$\eta$-long: it hides some infinitary copycat behaviour, requiring an
infinite sum as in \eqref{eq:def_cc}. This makes our calculus finitary
in a stronger sense than usual: each normal resource term describes a
simple, finite behaviour, and one can prove that it corresponds to a single
point of the relational model -- as in \cite{TO16} again.
This also means that in the absence of infinite sums, our typed
syntax is \emph{not} a resource category as it lacks identities.

Secondly, one might think that having an $\eta$-long syntax puts the pure,
untyped $\lambda$-calculus out of reach.
In \cite{TO16}, Tsukada and Ong suggest the resource calculus with tests
\cite{BCEM12} as a candidate for extending the
correspondence to the untyped setting, but this does not seem fit for the task:
it gives a syntactic counterpart to points of the relational model that do not
correspond to any normal resource term nor any pointed augmentation.

It is in fact possible to enforce full $η$-expansion on terms without typing,
but this requires altering the syntax of the calculus allowing for infinite
sequences of abstractions, as well as applications to infinite sequences of
(almost always empty) bags.
Keeping the correspondence between terms and augmentations then relies on
finding the analogue of a reflexive object in the category of games.
And the resulting \emph{extensional resource calculus}, enjoys a close
relationship with Nakajima trees \cite{N75}
(see also \cite[Exercise 19.4.4]{B85}),
similar to that of the ordinary resource calculus with Böhm trees.
More precisely, it is possible to design an extensional version 
of Taylor expansion, so that the normal form of the Taylor expansion
of a $λ$-term characterizes exactly its Nakajima tree or, equivalently, its
class in the greatest sensible consistent $λ$-theory $\mathcal H^*$.
This is the subject of a separate paper, in preparation \cite{BCVA24}.

\appendix

\section{Proof of the absence of deadlocks in interactions}
\label{subsec:deadlockfree}

The goal of this section is to prove \autoref{prop:inter_acyclic}:
fixing $\augQ \in \Aug(A\vdash B)$ and $\augP \in \Aug(B \vdash C)$ and
$\varphi : x^\augQ_B \sym_B x^\augP_B$, we must prove that $\causes$ is acyclic
on $\augP \inter_\varphi \augQ$.

If $m \in \ev{\augP \inter_\varphi \augQ}$, it \textbf{occurs in $A$}
if it has form $(1, m')$ with $\partial_\augQ(m') = (1, a)$; and
\textbf{occurs in $C$} if it has form $(2, m')$ with
$\partial_\augP(m') = (2, c)$. Otherwise, it \textbf{occurs in $B$}.

First, we note that a cycle in $\causes$ induces a cycle that is entirely
in $B$.

\begin{lem}\label{lem:cycle_inB}
Consider $\augQ \in \Aug(A \vdash B)$, $\augP \in \Aug(B \vdash C)$
and $\varphi : x^\augQ_B \iso_B x^\augP_B$.

If $\causes$ has a cycle in $\augP \inter_{\varphi} \augQ$,
then it has a cycle entirely in $B$.
\end{lem}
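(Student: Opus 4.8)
The strategy is to show that any cycle can be "projected" down to a cycle that stays entirely within the moves occurring in $B$. The key observation is a structural one about what the edges of $\causes$ look like depending on where their endpoints occur. First I would record the following facts about the three components $\causes_\augQ$, $\causes_\augP$, $\causes_\varphi$ making up $\causes$: an edge in $\causes_\varphi$ always connects two moves occurring in $B$ (since $x^\augQ_B$ and $x^\augP_B$ are precisely the $B$-displaying events); an edge $(1,m)\causes_\augQ(1,m')$ with $m'$ occurring in $A$ forces $m$ to occur in $A$ as well (because $\augQ\in\Aug(A\vdash B)$ is rule-abiding and $\leq_{\deseq\augQ}$-down-closed sets of events occurring in $A$ are down-closed for $\leq_\augQ$ too — or more simply, a $\leq_\augQ$-predecessor of an $A$-move must be an $A$-move, since $\deseq\augQ\in\conf(A\vdash B)$ is a forest with $A$ and $B$ in separate trees and rule-abiding transfers this); symmetrically an edge $(2,m)\causes_\augP(2,m')$ with $m'$ occurring in $C$ forces $m$ to occur in $C$.

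With these facts in hand, consider a cycle $n_0\causes n_1\causes\cdots\causes n_k=n_0$ in $\augP\inter_\varphi\augQ$. If some $n_j$ occurs in $A$, then by the facts above, tracing backwards along the cycle, its predecessor $n_{j-1}$ must also occur in $A$ (an edge into an $A$-move can only come from $\causes_\augQ$ and only from another $A$-move), and iterating around the whole cycle shows \emph{every} $n_j$ occurs in $A$; but then the cycle lives entirely in $\augQ$ via $\causes_\augQ$, contradicting that $\augQ$ is a forest (its order $\leq_\augQ$ is a partial order, hence acyclic). Symmetrically, no $n_j$ may occur in $C$. Hence every $n_j$ occurs in $B$, which is exactly the claim.

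The main subtlety — and the step I would be most careful about — is the precise statement "an edge into an $A$-move comes only from an $A$-move." This needs the observation that the $\causes_\varphi$ edges only ever touch $B$-moves (true by construction) and that an edge $(1,m)\causes_\augQ(1,m')$ lands in $A$ only when it starts in $A$: this in turn rests on $\deseq\augQ\in\conf(A\vdash B)$ being a forest whose minimal events split cleanly between $A$ and $B$ (via the \emph{minimality-respecting} and \emph{negative} conditions), together with rule-abiding, so that $m<_\augQ m'$ with $m'$ above $A$ forces $m$ above $A$. I would isolate this as a one-line sub-observation before running the cycle argument. Everything else is a short case analysis on the polarity/side conditions in the definition of $\causes$, with no real computation.
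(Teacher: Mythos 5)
There is a genuine gap: your key sub-observation --- that an edge $(1,m)\causes_\augQ(1,m')$ landing on an $A$-move must start from an $A$-move --- is false. Rule-abiding only gives the inclusion $\leq_{\deseq{\augQ}}\subseteq\leq_\augQ$, not the converse, so the clean separation of $A$-moves and $B$-moves in the static forest $\deseq{\augQ}$ does not transfer to the dynamic order $\leq_\augQ$. In fact the opposite situation is unavoidable: since $\augQ\in\Aug(A\vdash B)$ is \emph{negative} and rule-abiding, all its $\leq_\augQ$-minimal events display to $B$ (the minimal moves of $A^\bot$ are positive), so every $A$-move of $\augQ$ sits above some $B$-move in $\leq_\augQ$; already the two-event augmentation on $o\vdash o$ with $b^-\imc_\augQ a^+$, $b$ in $B$ and $a$ in $A$, gives an edge $(1,b)\causes_\augQ(1,a)$ from a $B$-move into an $A$-move. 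Consequently your backwards-propagation step (``if $n_j$ occurs in $A$ then so does $n_{j-1}$, hence the whole cycle lies in $A$'') collapses, and with it the rest of the argument.

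The paper's proof goes the other way around: it first notes that $\causes$ has no direct $A$--$C$ link and that a cycle entirely in $A$ (or $C$) would contradict that $\leq_\augQ$ (or $\leq_\augP$) is a partial order, so the cycle meets $B$; then it takes a maximal segment of the cycle lying in $A$, observes that it is bracketed by two $B$-moves and that all its edges, including the entering and exiting ones, are $\causes_\augQ$-edges (since only $\causes_\augQ$ touches $A$), and contracts the whole segment to a single edge $m_i^B\causes_\augQ m_{j+1}^B$ by transitivity of $<_\augQ$. Doing this for every $A$-segment and, symmetrically, every $C$-segment yields a cycle entirely in $B$. So the correct mechanism is contraction of excursions into $A$ and $C$, not the claim that the cycle never leaves $B$-moves once it enters $A$; your proposal would need to be rewritten along these lines.
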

\begin{proof}
First, observe that $\causes$ has no direct link between $A$
and $C$. Consider a cycle
\[
m_1 \causes \dots \causes m_n \causes m_1\,,
\]
first note that this cycle must pass through $B$. Otherwise, it is
entirely in $A$ or entirely in $C$, making $\causes_\augQ$ or
$\causes_\augQ$ cyclic, contradiction. Now, consider a section
\[
m_i^B \causes m_{i+1}^A \causes \dots m_j^A
\causes m_{j+1}^B
\]
with the segment $m_{i+1} \causes \dots \causes m_j$
entirely in $A$. By definition, we must have
\[
m_i^B \causes_\augQ m_{i+1}^A \causes_\augQ \dots
\causes_\augQ m_j^A \causes_\augQ m_{j+1}^B\,,
\]
so that $m_i^B \causes_\augQ m_{j+1}^B$ by transitivity of
$\causes_\augQ$. Hence, a segment of the cycle in $A$ may be
removed, preserving the cycle. Symmetrically, any segment in $C$ may be
removed, yielding a cycle within $B$.
\end{proof}

Hence, we restrict our attention to cycles entirely within $B$.
Given a cycle
\[
m_1 \causes \dots \causes m_n \causes m_1
\]
we call $n$ its \textbf{length}. Next we show that this cycle can also
be assumed not to contain any element minimal in $B$. We
introduce additional conventions. If $m$ occurs in $B$, we say it
\textbf{occurs in $B^\lside$} if it is $(1, m')$ for $m' \in \ev{\augQ}$,
and \textbf{occurs in $B^\rside$} otherwise.

We will use this lemma:

\begin{lem}\label{lem:min_static_dynamic}
        Consider $q\in \Aug(A \vdash B)$ and $m, m' \in \ev{\augQ}$ such that $m
        <_\augQ m'$, with $m'$ occurring in $B$ and $\partial_\augQ(m)$ minimal in $B$.
        Then, $m <_{\deseq{\augQ}} m'$.
\end{lem}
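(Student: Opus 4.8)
The plan is to show that the $\leq_{\deseq{\augQ}}$-minimal ancestor $m_0$ of $m'$ (the root of the tree of $\deseq{\augQ}$ containing $m'$) is in fact equal to $m$; once this is established we have $m = m_0 \leq_{\deseq{\augQ}} m'$, and since $m <_\augQ m'$ forces $m \neq m'$, we conclude $m <_{\deseq{\augQ}} m'$ as wanted.

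First I would pin down the polarities and positions of $m$ and $m_0$. Since $\partial_\augQ(m)$ is minimal in $B$, we have $\display_\augQ(m) = (2, b_0)$ with $b_0 \in \min(B)$; as $B$ is negative this gives $\pol(m) = -$, and by \emph{minimality-respecting} $m \in \min(\deseq{\augQ})$. For $m_0$: since $m'$ occurs in $B$, $\display_\augQ(m') = (2, b)$, and a chain $m_0 \imc_{\deseq{\augQ}} \cdots \imc_{\deseq{\augQ}} m'$ witnessing $m_0 \leq_{\deseq{\augQ}} m'$ is sent by \emph{causality-preserving} to a $\imc_{A \vdash B}$-chain ending at $(2, b)$; as $\leq_{A\vdash B}$ is component-wise, this chain stays in the $B$-component, so $\display_\augQ(m_0) = (2, b_0')$ with $b_0' \leq_B b$. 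Since $m_0 \in \min(\deseq{\augQ})$, \emph{minimality-respecting} forces $b_0' \in \min(B)$, hence $\pol(m_0) = -$ as well.

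Next I would prove $m = m_0$. By \emph{rule-abiding}, $m_0 \leq_{\deseq{\augQ}} m'$ yields $m_0 \leq_\augQ m'$; together with $m <_\augQ m'$ this places both $m$ and $m_0$ in the principal ideal of $m'$ for $\leq_\augQ$, which is linearly ordered since $\tuple{\ev{\augQ}, \leq_\augQ}$ is a forest. So $m$ and $m_0$ are $\leq_\augQ$-comparable; say the larger of the two is $p$ and the smaller is $p'$, connected by a nonempty chain $p' \imc_\augQ \cdots \imc_\augQ p$ if $p \neq p'$. But $p$ is negative (both $m$ and $m_0$ are), so the last link of that chain satisfies the hypothesis of \emph{courtesy}, hence is already a $\imc_{\deseq{\augQ}}$-link — contradicting the fact that $p \in \{m, m_0\} \subseteq \min(\deseq{\augQ})$. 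Therefore $p = p'$, i.e. $m = m_0$, and the conclusion follows as explained.

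The argument is essentially routine; the only step needing a little care is the comparability argument in the last paragraph — extracting from the forest structure of $\leq_\augQ$ that $m$ and $m_0$ must be $\leq_\augQ$-related, and then using negativity together with \emph{courtesy} to collapse any nontrivial chain between them.
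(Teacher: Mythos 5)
Your proof is correct and follows essentially the same route as the paper's: take the $\leq_{\deseq{\augQ}}$-minimal ancestor of $m'$ and use the forest structure of $\leq_\augQ$ (comparability below $m'$) to show it coincides with $m$. You merely spell out, via \emph{minimality-respecting} and \emph{courtesy}, the minimality-for-$\leq_\augQ$ step that the paper's one-line proof passes over tersely.
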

\begin{proof}
        Since $\deseq{\augQ}$ is a forest, there is a unique $n <_{\deseq{\augQ}} m'$
        such that $n$ is minimal in $B$. By \emph{rule-abiding}, $n$ is minimal
        for $\leq_\augQ$. Since $\augQ$ is a forest, it follows that $n = m$.
\end{proof}

Exploiting that, we prove:

\begin{lem}\label{lem:cycle_nf}
        Consider $\augQ \in \Aug(A\vdash B)$, $\augP \in\Aug(B \vdash C)$
        and $\varphi : x^\augQ_B \iso_B x^\augP_B$.

        If $\causes$ has a cycle, then it has one entirely in $B$ and
        without minimal event in $B$.
\end{lem}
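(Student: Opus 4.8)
The plan is to take, using Lemma~\ref{lem:cycle_inB}, a cycle $m_1\causes\dots\causes m_n\causes m_1$ (indices cyclic) of minimal length $n$ lying entirely in $B$, and to show that if it contained an event whose display is minimal in $B$ then one could splice out part of it to obtain a strictly shorter cycle still lying in $B$, contradicting minimality. Two polarity observations will be used repeatedly: a minimal-in-$B$ event displays to some $\beta\in\min(B)$, which is negative in the arena $B$; hence if it occurs in $B^\lside$ — say it is $(1,a)$ with $\partial_\augQ(a)=(2,\beta)$ — it is negative in $\augQ$, and if it occurs in $B^\rside$ — say $(2,b)$ with $\partial_\augP(b)=(1,\beta)$ — it is positive in $\augP$. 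Also, being minimal in $B$ is preserved by $\varphi$, since $\varphi$ preserves the display into $B$.

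Next I would determine the forced local shape of the cycle around such an event $m_i$. In the $B^\lside$ case $m_i=(1,a)$: \emph{minimality-respecting} gives $a\in\min(\deseq{\augQ})$, and \emph{courteous} (applied to any $\imc_\augQ$-predecessor of the negative event $a$) upgrades this to $a\in\min(\augQ)$; inspecting the clauses of $\causes$ by polarity then forces the edge into $m_i$ to be a $\causes_\varphi$-edge, so $m_{i-1}=(2,\varphi(a))$, forces the edge into $m_{i-1}$ to be a $\causes_\augP$-edge, so $m_{i-2}=(2,p)$ with $p<_\augP\varphi(a)$ (such a $p$ exists since $\varphi(a)$, being positive in $\augP$, is not minimal there), and forces the edge out of $m_i$ to be a $\causes_\augQ$-edge, so $m_{i+1}=(1,a')$ with $a<_\augQ a'$, whence $a<_{\deseq{\augQ}}a'$ by Lemma~\ref{lem:min_static_dynamic}. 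In the $B^\rside$ case $m_i=(2,b)$: the edge into $m_i$ is forced to be a $\causes_\augP$-edge, so $m_{i-1}=(2,p)$ with $p<_\augP b$, while the edge out of $m_i$ is either a $\causes_\augP$-edge to some $(2,b')$ or the $\causes_\varphi$-edge to the partner $(1,\varphi^{-1}(b))$.

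Finally comes the surgery. If the cycle has a minimal-in-$B$ event only in $B^\rside$, pick such an $m_i=(2,b)$: if its outgoing edge goes to $(2,b')$ then $p<_\augP b<_\augP b'$ gives a $\causes_\augP$-edge $m_{i-1}\causes m_{i+1}$, so deleting $m_i$ yields a shorter cycle in $B$; and if it goes to $(1,\varphi^{-1}(b))$, that partner is a minimal-in-$B$ event in $B^\lside$, contradicting the case assumption. So we may assume there is a minimal-in-$B$ event $m_i=(1,a)$ in $B^\lside$, with the structure above. Walk forward from $m_{i+1}=(1,a')$: as long as the edge used is not a $\causes_\varphi$-edge it must be a $\causes_\augQ$-edge (the only other option from a left event), so the walk stays in $B^\lside$; it must eventually leave $B^\lside$, since the cycle contains the right event $m_{i-1}$ and sides change only through $\causes_\varphi$-edges. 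Let $m_k=(1,c)\causes m_{k+1}=(2,\varphi(c))$ be the first $\causes_\varphi$-edge; the intervening $\causes_\augQ$-edges give $a'\le_\augQ c$, hence $a<_{\deseq{\augQ}}c$ by Lemma~\ref{lem:min_static_dynamic}, hence $\varphi(a)<_{\deseq{\augP}}\varphi(c)$ since $\varphi$ is an order-isomorphism on $B$, hence $p<_\augP\varphi(c)$, i.e.\ $m_{i-2}=(2,p)\causes(2,\varphi(c))=m_{k+1}$ is a $\causes_\augP$-edge. Replacing the segment from $m_{i-2}$ to $m_{k+1}$ — which has at least four edges since $k\ge i+1$ — by this single edge produces a strictly shorter cycle in $B$, unless $m_{k+1}$ coincides with $m_{i-1}$ (which would give $a<_\augQ a$) or with $m_{i-2}$ (which would give $\varphi(a)<_\augP\varphi(c)=p<_\augP\varphi(a)$), both impossible since $\augQ$ and $\augP$ are forests. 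Minimality is thus contradicted in every case. The main obstacle is not any single computation but organizing this case analysis so that the surgery is uniform and genuinely shortens the cycle — in particular choosing the splice point $m_{k+1}$ correctly and ruling out the degenerate coincidences via acyclicity of $\augQ$ and $\augP$.
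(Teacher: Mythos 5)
Your proof is correct and takes essentially the same route as the paper's: a minimal-length cycle entirely in $B$ (via Lemma~\ref{lem:cycle_inB}), a case split on which side a minimal-in-$B$ event lies, and in the $B^\lside$ case a walk along $\causes_\augQ$-edges to the first $\varphi$-jump, where Lemma~\ref{lem:min_static_dynamic}, the fact that $\varphi$ is an order-isomorphism, and \emph{rule-abiding} produce a $\causes_\augP$ shortcut that strictly shortens the cycle, contradicting minimality. The only differences are cosmetic: the paper dispatches the $B^\rside$ case by observing that the $\varphi$-successor is a minimal event in $B^\lside$ rather than splicing directly, and it splices at $m_{i-1}=\varphi(a)$ rather than one step earlier at $m_{i-2}$, which spares it your (correct) explicit exclusion of the degenerate coincidences.
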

\begin{proof}
        By \autoref{lem:cycle_inB}, assume the cycle is entirely in $B$.
        Consider a cycle
        \[
        m_1 \causes \dots \causes m_n \causes m_1
        \]
        entirely in $B$ of minimal length.
        Seeking a contradiction, consider $m_i$ minimal in $B$.

        Assume first it is in $B^\rside$.
        Since $B$ is negative, $m_i$ is positive in $\augP$,
        hence we cannot have 
        $m_{i-1} \causes_\varphi m_i$ and we must have 
        $m_{i-1} \causes_\augP m_i$.
        Then, if $m_{i+1}$ is also in $B^\rside$, we have
        \[
        m_{i-1} \causes_\augP m_i \causes_\augP m_{i+1}
        \]
        so $m_{i-1} \causes_\augP m_{i+1}$, shortening the cycle and
        contradicting its minimality. So $m_{i+1}$ is in $B^\lside$. But then
        $m_i \causes_\varphi m_{i+1}$ so that $\varphi(m_{i+1})
        = m_i$. As an isomorphism of
        configurations, $\varphi$ preserves the display to $B$, so $m_{i+1}$ is
        minimal in $B^\lside$.

        In all cases, the cycle thus contains a minimal element of $B^\lside$,
        call it $m_i=(1,m'_i)$, with $m'_i$ negative in $\augQ$:
        since $\augQ$ is courteous, $m'_i$ is also minimal in $\augQ$
        so we cannot have $m_{i-1} \causes_\augQ m_i$.
        Then
        \[
        m_{i-1} \causes_\varphi
        m_i \causes_\augQ m_{i+1} \causes_\augQ \dots
        \causes_\augQ m_j \causes_\varphi m_{j+1}
        \]
        where all relations in between $m_i$ and $m_j=(1,m'_j)$ are in
        $\causes_\augQ$ (by definition, only those can apply until we
        jump to $B^\rside$ via $\causes_\varphi$), and where by
        definition, $\varphi(m'_j) = m'_{j+1}$.
        But then, by transitivity, $m_i \causes_\augQ m_j$,
        \emph{i.e.}\ $m'_i<_\augQ m'_j$.
        But by \autoref{lem:min_static_dynamic},
        this entails $m'_i <_{\deseq{\augQ}} m'_j$.
        Now as $\varphi$ is an isomorphism of configurations,
        this entails $\varphi(m'_i) <_{\deseq{\augP}} \varphi(m'_j)$,
        and by \emph{rule-abiding}, this entails
        $\varphi(m'_i) <_{\augP} \varphi(m'_j)$,
        \emph{i.e.}\ $m_{i-1} \causes_\augP m_{j+1}$.
        But this means that the segment $m_i \dots m_j$ may be
        removed from the cycle, contradicting the minimality of the latter.
\end{proof}

So it suffices to focus on cycles entirely in $B$, comprising no
minimal event.

Let us write $\deseq{\augP \inter_\varphi \augQ}$ for
$\ev{\augQ}+\ev{\augP}$, partially ordered by:
$(i,m)<_{\deseq{\augP \inter_\varphi \augQ}}(j,n)$
iff $i=j=1$ and $m <_{\deseq\augQ} n$, % $m<_{\augQ} n$,
 or $i=j=2$ and $m <_{\deseq\augP} n$. %$m<_{\augP}n$.
If $m \in \ev{\augP \inter_\varphi \augQ}$ is not minimal in $\deseq{\augP
\inter_\varphi \augQ}$, it has a unique predecessor in $\deseq{\augP
\inter_\varphi \augQ}$ called its \textbf{justifier} and written
$\just(m)$.
We say $m$ occurring in $B$ has \textbf{polarity $\lside$} if it has the form
$(1, m')$ with $\pol_{A\vdash B}(\partial_\augQ(m')) = +$,
\textbf{has polarity $\rside$} if it has the form $(2, m')$ with
$\pol_{B \vdash C}(\partial_\augP(m')) = +$, and
\textbf{has polarity $\varphi$} otherwise.
We may then write $m^{\lside}$, $m^{\rside}$ or $m^{\varphi}$
instead of $m$, depending on its polarity.

\begin{lem}\label{lem:deadlock_free_aux}
        We have the following properties:
        \[  
        \begin{array}{rl}
                \text{\emph{(1)}} &
                \text{if $m \causes_\augQ n^\varphi$, then $m \causes_\augQ^* \just(n)$,}\\
                \text{\emph{(2)}} &
                \text{if $m \causes_\augP n^\varphi$, then $m \causes_\augP^* \just(n)$,}
        \end{array}
        \]
        where the events are annotated with their assumed polarity.
\end{lem}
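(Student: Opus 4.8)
The plan is to prove Lemma~\ref{lem:deadlock_free_aux} by analyzing the possible immediate causal links leading to a $\varphi$-polarized event and showing they are forced to come from one ``side'' of the interaction. Recall that an event $n^\varphi$ occurring in $B$ is, by definition, either $(1,n')$ with $\pol_{A\vdash B}(\partial_\augQ(n'))=-$, or $(2,n')$ with $\pol_{B\vdash C}(\partial_\augP(n'))=-$. I will prove~(1) in detail; (2) is symmetric.

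First I would reduce to immediate causal links: if $m \causes_\augQ n^\varphi$ then (taking the transitive closure into account) it suffices to handle the case where $m$ is an \emph{immediate} $\causes_\augQ$-predecessor of $n^\varphi$, i.e.\ $m \imc_{\augP\inter_\varphi\augQ} n^\varphi$ arising from the clause $\causes_\augQ$, and then chain. So suppose $n^\varphi = (1,n')$ (the case $n^\varphi=(2,n')$ cannot arise from $\causes_\augQ$, since $\causes_\augQ$ only relates events tagged $1$, so only the first form is relevant here). Then $m = (1,m')$ with $m' <_\augQ n'$; we may assume $m' \imc_\augQ n'$. Now $n'$ is \emph{negative} in $\augQ$ by the definition of polarity $\varphi$ on the $\lside$ alternative. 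I would invoke \emph{courtesy} of $\augQ$: since $\pol(n')=-$, the immediate causal link $m' \imc_\augQ n'$ forces $m' \imc_{\deseq\augQ} n'$. Hence $n'$ is not minimal in $\deseq\augQ$, so $\just(n)$ is defined, and moreover $m'$ is exactly the $\deseq\augQ$-predecessor of $n'$ inside $\augQ$.

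The key point is then to identify $\just(n)$ with $m$ or place $m$ below it via $\causes_\augQ^*$. By Lemma~\ref{lem:imc_int_backward}, the immediate $\causes$-predecessor of $n^\varphi=(1,n')$ in $\augP\inter_\varphi\augQ$ is determined by the polarity of $n'$ and whether it displays to $A$ or to $B$: since $n'$ is negative and displays to $B$ (it occurs in $B$), we are in case~(3) of that lemma, so $\just(n) = (1,\varphi^{-1}(\text{--}))$ — more precisely the unique $\causes$-predecessor comes via $\causes_\varphi$ from the $\rside$ side, \emph{unless} that predecessor is in fact absent because $n'$ is $\deseq\augP$-minimal on the other side; but here I should instead argue directly on $\deseq{\augP\inter_\varphi\augQ}$. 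The cleanest route: the justifier $\just(n)$ is the immediate predecessor of $n$ in $\deseq{\augP\inter_\varphi\augQ} = x^\augQ_A \text{-part} \cup x^\augP_C\text{-part}$ glued along $\varphi$; since $m' \imc_{\deseq\augQ} n'$, the event $(1,m')$ satisfies $(1,m') \le_{\deseq{\augP\inter_\varphi\augQ}} n$, and as $\deseq{\augP\inter_\varphi\augQ}$ restricted to the $B$-part is (isomorphic to) $\deseq\augQ$ in the $\lside$ region, $(1,m')$ is \emph{the} justifier, i.e.\ $\just(n) = m$, giving $m \causes_\augQ^* \just(n)$ trivially. If instead $m$ was only a non-immediate $\causes_\augQ$-predecessor, factor $m \causes_\augQ (1,m') \causes_\augQ \dots$ with $(1,m') \imc_\augQ n'$ and conclude $m \causes_\augQ^* (1,m') = \just(n)$ by transitivity.

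The main obstacle I anticipate is bookkeeping around the display-side analysis: one must be careful that an event occurring in $B$ can come from either $\augQ$ (tag $1$) or $\augP$ (tag $2$), and that ``$\causes_\augQ n^\varphi$'' constrains $n$ to be tagged $1$, after which courtesy of $\augQ$ plus the explicit form of $\deseq{\augP\inter_\varphi\augQ}$ does the work; for~(2) the same with $\augP$, tag $2$, and courtesy of $\augP$. I expect no deep difficulty beyond matching the definitions of \emph{courteous}, \emph{rule-abiding}, and the justifier, and invoking Lemmas~\ref{lem:imc_int_forward} and~\ref{lem:imc_int_backward} to rule out spurious predecessors.
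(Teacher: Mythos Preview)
Your argument is correct and follows essentially the same route as the paper: use courtesy of $\augQ$ on the negative event $n'$ to show that its immediate $\leq_\augQ$-predecessor coincides with its $\leq_{\deseq\augQ}$-predecessor (the justifier), then use the forest property of $\augQ$ to conclude $m' \leq_\augQ \just(n')$. Your reduction to the immediate case and the detour through Lemmas~\ref{lem:imc_int_forward} and~\ref{lem:imc_int_backward} are unnecessary (and your identification of the immediate $\causes_\augQ$-predecessor with an $\imc_{\augP\inter_\varphi\augQ}$-predecessor is actually incorrect, since by Lemma~\ref{lem:imc_int_backward} the interaction-predecessor of $n^\varphi$ comes via $\causes_\varphi$, not $\causes_\augQ$); the paper simply argues directly from $m' <_\augQ n'$, rule-abiding, courtesy, and the forest structure, never touching the interaction order.
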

\begin{proof}
        \emph{(1)} We must have $m = (1, m')$ and $n = (1, n')$ with $m' <_\augQ
        n'$, with $n'$ negative. By \emph{rule-abiding} and \emph{courtesy},
        $\just(n') \imc_\augQ n'$. Since $\augQ$ is a forest, $m' \leq_\augQ \just(n')$.

        \emph{(2)} Symmetric.
\end{proof}

Now, for any $m \in \ev{\augP \inter_\varphi \augQ}$, its \textbf{depth},
written $\depth(m)$, is $0$ if $m$ is minimal in $\deseq{\augP
\inter_\varphi \augQ}$, otherwise $\depth(\just(m)) + 1$. Now, we prove
the \emph{deadlock-free lemma}: 

\begin{lem}\label{lem:deadlock_free}
        Consider $\augQ \in \Aug(A\vdash B)$, $\augP \in \Aug(B \vdash
C)$ and $\varphi :
        x^\augQ_B \iso_B x^\augP_B$.

        Then $\causes$ is acyclic.
\end{lem}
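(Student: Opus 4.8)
The plan is to assume that $\causes$ has a cycle on $\augP\inter_\varphi\augQ$ and derive a contradiction by a well-founded descent along justifiers. By Lemma~\ref{lem:cycle_nf}, it suffices to rule out cycles
\[
m_1\causes m_2\causes\dots\causes m_n\causes m_1
\]
lying entirely in $B$ and containing no event whose display is minimal in $B$. Any such cycle must use at least one $\causes_\varphi$-edge: a cycle lying entirely in $B^\lside$ (resp.\ $B^\rside$) would consist only of $\causes_\augQ$-edges (resp.\ $\causes_\augP$-edges), which is impossible since the transitive closure of $\causes_\augQ$ (resp.\ $\causes_\augP$) is the strict order of the forest $\augQ$ (resp.\ $\augP$). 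Since every $\causes_\varphi$-edge points to an event of polarity $\varphi$, the cycle contains such an event.

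So I would fix a cycle as above that is minimal for a suitable well-founded measure on the multiset of depths of its events, and pick an event $n^\varphi$ of polarity $\varphi$ occurring in it, of minimal depth among those. As a polarity-$\varphi$ event is negative on the side carrying it and its display in $B$ is not minimal (by choice of the cycle), minimality-respecting for $\deseq{\augP\inter_\varphi\augQ}$ forces $n$ to be non-minimal there, so $n$ has a justifier $\just(n)$ with $\depth(\just(n))<\depth(n)$. I would then analyse the incoming edge $m_{i-1}\causes n^\varphi$: if it is a $\causes_\augQ$- or $\causes_\augP$-edge, Lemma~\ref{lem:deadlock_free_aux} yields a chain $m_{i-1}\causes_\augQ^*\just(n)$ (resp.\ $m_{i-1}\causes_\augP^*\just(n)$) along a single side; if it is a $\causes_\varphi$-edge, then using that $\varphi$ is an isomorphism of configurations together with Lemma~\ref{lem:min_static_dynamic} one checks that $m_{i-1}$ is already $\just(n)$ or lies $\causes$-below it along one side. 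In all cases, since $\just(n)\causes n\causes m_{i+1}$, splicing $m_{i-1}\causes^*\just(n)\causes n\causes m_{i+1}$ into the cycle in place of $m_{i-1}\causes n\causes m_{i+1}$ replaces the occurrence of $n$ by the strictly lower $\just(n)$. After re-applying Lemma~\ref{lem:cycle_inB} to excise any detour through $A$ or $C$ introduced by that chain, and the argument of Lemma~\ref{lem:cycle_nf} to drop any newly exposed minimal-in-$B$ event, this produces a cycle of the same kind that is strictly smaller for the chosen measure, contradicting minimality. Hence $\causes$ is acyclic; equivalently its transitive closure is a strict partial order, which is what Proposition~\ref{prop:inter_acyclic} asserts.

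The main obstacle, in my view, is getting this descent bookkeeping exactly right: choosing the well-founded measure on cycles so that the splice provably decreases it, even though the chain $m_{i-1}\causes^*\just(n)$ may pass through events of larger depth and temporarily leave $B$; and discharging the $\causes_\varphi$ sub-case of the incoming edge, where one has to track precisely how $\varphi$ and the justifier structure interact across the two sides. Once the case analysis on the polarity of $n$ and of its cycle-predecessor is organised, the remaining verifications are routine consequences of rule-abiding, courtesy and the forest structure of $\augQ$, $\augP$.
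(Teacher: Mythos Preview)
Your high-level plan (descent via justifiers, after the reduction of Lemma~\ref{lem:cycle_nf}) matches the paper's, but the local-splice step has a real gap precisely in the case you flag as the obstacle. Observe that for an event $n^\varphi$ in $B^\lside$ (say), the outgoing edge $n\causes m_{i+1}$ is necessarily $\causes_\augQ$ (since $n$ is not positive on its side, it cannot source a $\causes_\varphi$ edge). Hence if the incoming edge $m_{i-1}\causes n$ were also $\causes_\augQ$, transitivity would give $m_{i-1}\causes_\augQ m_{i+1}$ and you could simply drop $n$, contradicting minimality of the cycle for your multiset measure. So in a minimal cycle the incoming edge to every polarity-$\varphi$ event is $\causes_\varphi$: your ``easy'' case never occurs, and only the $\causes_\varphi$ sub-case matters.

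In that sub-case your claimed relation fails. If $m_{i-1}=(1,m'')\causes_\varphi n=(2,\varphi(m''))$, then $\just(n)=(2,\varphi(\just(m'')))$ lies on the opposite side from $m_{i-1}$, so $m_{i-1}\neq\just(n)$; and the only $\causes$-edge relating them through $\varphi$ goes the \emph{wrong way}: since $\just(m'')$ is negative in $A\vdash B$, one gets $\just(n)\causes_\varphi\just(m_{i-1})$, not $m_{i-1}\causes\just(n)$. There is thus no way to splice $\just(n)$ in place of $n$ while keeping the cycle oriented. This reversal is not an artefact: it is the crux of the argument. The paper exploits it globally rather than locally: after normalising the minimal cycle to the alternating shape $(\causes_\augQ\,\causes_\varphi\,\causes_\augP\,\causes_\varphi)^k$, it shows $\just(m_{4i+1})\causes_\augQ\just(m_{4i})$, $\just(m_{4i+3})\causes_\augP\just(m_{4i+2})$, and $\just(m_{4i+2})\causes_\varphi\just(m_{4i+1})$, so that the \emph{entire} cycle of justifiers exists with direction reversed, same length, strictly smaller total depth. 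That global reversal is the missing idea in your proposal; a one-event splice cannot reproduce it.
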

\begin{proof}
        Seeking a contradiction, assume there is a cycle. By
        \autoref{lem:cycle_nf} it is entirely in $B$, without a
        minimal event in $B$. Writing it
        $\rho = m_1 \causes \dots \causes m_n
        \causes m_1$, define its \textbf{depth}
        \[
        \depth(\rho) = \sum_{i=1}^n \depth(m_i)\,,
        \]
        and \emph{w.l.o.g.} assume $\rho$ minimal for the product order on
        pairs $(n, d)$ where $d = \depth(\rho)$ and $n$ is its length. We
        notice that $\rho$ has no consecutive $\causes_\augQ$ or
        $\causes_\augP$ -- or we shorten the cycle by transitivity,
        breaking minimality. It also has no consecutive
        $\causes_\varphi$ by definition. 
        This entails that $n = 4k$, with \emph{w.l.o.g.}
        \[
        m_{4i}
        \causes_\augQ m_{4i+1}
        \causes_\varphi m_{4i+2}
        \causes_\augP m_{4i+3}
        \causes_\varphi m_{4i+4}\,,
        \]
        and from now on all indices are considered modulo $n$.

        Then $m_{4i+1}$ has polarity $\lside$. Otherwise, it has polarity
        $\varphi$, making $m_{4i+1} \causes_\varphi m_{4i+2}$
        impossible. Likewise, $m_{4i+3}$ has polarity $\rside$, while $m_{4i+2}$
        and $m_{4i+4}$ have polarity $\varphi$.

        We claim $\just(m_{4i+1}) \causes_\augQ \just(m_{4i})$. Observe
        $\just(m_{4i+1}) \imc_{\deseq{\augP \inter_\varphi \augQ}} m_{4i+1}$ 
        by definition; but by \emph{rule-abiding} this entails that
        $\just(m_{4i+1}) \causes_\augQ m_{4i+1}$. Since $\augQ$ is a forest,
        that makes $\just(m_{4i+1})$ comparable with $m_{4i}$ for
        $\causes_\augQ^*$. If $\just(m_{4i+1}) = m_{4i}$, then
        \[
        m_{4i-1} \causes_\augP m_{4i+2}
        \]
        since $\varphi$ is a symmetry and by
        \emph{rule-abiding} -- but this allows us to shorten the cycle,
        contradicting its minimality. Likewise, if $m_{4i} \causes_\augQ
        \just(m_{4i+1})$ then, observing that $\just(m_{4i+1})$ is of polarity
        $\varphi$ ($m_{4i+1}$ has polarity $\lside$), we obtain
        \[
        m_{4i} \causes_\augQ \just(\just(m_{4i+1}))
        \]
        by \autoref{lem:deadlock_free_aux} -- we cannot have an equality as
        they have distinct polarities. But then
        \[
        m_{4i} \causes_\augQ \just(\just(m_{4i+1})) \causes_\varphi
        \just(\just(m_{4i+2})) \causes_\augP m_{4i+3}
        \]
        yielding a cycle with the same length but strictly smaller depth,
        absurd. The last case remaining has $\just(m_{4i+1})
        \causes_\augQ m_{4i}$, but so $\just(m_{4i+1})
        \causes_\augQ \just(m_{4i})$ by \autoref{lem:deadlock_free_aux}
        (again, the equality is impossible for polarity reasons).

        With the same reasoning, $\just(m_{4i+3}) \causes_\augP
        \just(m_{4i+2})$; and $\just(m_{4i+2}) \causes_\varphi
        \just(m_{4i+1})$ and $\just(m_{4i+4}) \causes_\varphi
        \just(4i+3)$ by definition. So we can replace the whole cycle with
        \[
        \just(m_1) \causes \just(m_n) \causes \ldots \causes \just(m_1)
        \]
        reversing directions, with the same length but strictly smaller
        depth, contradiction.
\end{proof}

\section{Proof of the bialgebra distributivity law}
\label{app:proof_bialgebra}

Most laws for the bialgebra structure are straightforward; we
detail the exchange rule between $\delta_A$ and
$\mu_A$, which is more subtle.

For that, we must first introduce some notation: we write 
\[
\distribute_A \eqdef (\id_A \tensor \gamma_{A, A} \tensor \id_A) \odot
(\delta_A \tensor \delta_A)\,,
\qquad\qquad
\rassemble_A \eqdef \mu_A \tensor \mu_A
\]
for ``distribute'' and ``gather'',
which lets us phrase the desired bialgebra law as $\delta_A \odot \mu_A
= \rassemble_A \odot \distribute_A$. We skip the direct verification
that, writing $S(\posX_1,\ldots,\posX_n) = \nSymOf{\posX_1}
\times \ldots \times \nSymOf{\posX_n}$, 
\begin{eqnarray*}
\distribute_A &=& 
\sum_{\posX, \posY, \posU, \posV \in \pos(A)}
\frac{1}{S(\posX,\posY,\posU,\posV)}
\cdot
\bdistribute_A^{\posX, \posY, \posU, \posV}\\
\rassemble_A &=&
\sum_{\posX, \posY, \posU, \posV \in \pos(A)}
\frac{1}{S(\posX,\posY,\posU,\posV)}
\cdot
\brassemble_A^{\posX, \posY, \posU, \posV}
\end{eqnarray*}
where $\bdistribute_A^{\posX, \posY, \posU, \posV}$ and
$\brassemble_A^{\posX, \posY, \posU, \posV}$ are isomorphism
classes of $\distribute_A^{\rep{\posX}, \rep{\posY}, \rep{\posU},
\rep{\posV}}$ and $\rassemble_A^{\rep{\posX}, \rep{\posY}, \rep{\posU},
\rep{\posV}}$ for
\begin{eqnarray*}
\deseq{\distribute_A^{x,y,u,v}} &=& (x * y) \tensor (u * v) \vdash (x
\tensor u) \tensor (y \tensor v)\\
\deseq{\rassemble_A^{x,y,u,v}} &=& (x \tensor y) \tensor (u \tensor v)
\vdash (x * y) \tensor (u * v)
\end{eqnarray*}
with the obvious copycat behaviour on each component $x, y, u, v$.

We must introduce some additional notation. Consider $A$
a negative arena and $x \in \conf(A)$. We write $x = y \starplus z$
when $y, z \in \conf(A)$, $x = y \cup z$ and $y \cap z = \emptyset$ --
this is analogous to $x = y * z$ (and entails $x \sym y * z$), but
instead of the tagged disjoint union we have the standard set-theoretic
union, which happens to be disjoint. If $x = y \starplus z$
with $y \in \posY$ and $z \in \posZ$, we write
\[
x \darksplitinto \posY,  \posZ\,.
\]

But there may be several splittings of $x$ into $\posY$ and $\posZ$,
\emph{i.e.}\ pairs $(y, z)$ such that $x = y \starplus z$ with $y \in
\posY$ and $z \in \posZ$. We write $|x \darksplitinto \posY,  \posZ|$
the number of such pairs. It is easy to see that this is invariant
under symmetry, thus we may write $|\posX \darksplitinto \posY, \posZ$
for $|x \darksplitinto \posY,  \posZ|$ for any $x \in \posX$.

For our proof, the first key observation is the following lemma:

\begin{lem}\label{lem:main_distr}
Consider $A$ a negative arena, and $\posX,\posY,\posX',\posY' \in
\pos(A)$. Then,
\[
\bdelta_A^{\posX,\posY} \odot \bmu_A^{\posX',\posY'}
=
\quad\smashoperator{
\sum_{
\substack{
\posX_l, \posX_r, \posY_l, \posY_r
\text{~s.t.~}\\
\posX_l * \posX_r = \posX,~
\posY_l * \posY_r = \posY\\
\posX_l * \posY_l = \posX',~
\posX_r * \posY_r = \posY'
}
}}
\quad
|\posX \darksplitinto \posX_l, \posX_r|~
|\posY \darksplitinto \posY_l, \posY_r|~
|\posX' \darksplitinto \posX_l, \posY_l|~
|\posY' \darksplitinto \posX_r, \posY_r| \cdot 
\brassemble_A^{\posX_l, \posY_l, \posX_r, \posY_r}
\odot
\bdistribute_A^{\posX_l, \posX_r, \posY_l, \posY_r}\,.
\]
\end{lem}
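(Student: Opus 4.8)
The plan is to compute both sides of the equation explicitly by unfolding the definitions of composition on augmentations, carefully tracking the combinatorics of symmetries. The left-hand side $\bdelta_A^{\posX,\posY} \odot \bmu_A^{\posX',\posY'}$ is a composition of two isogmentations, which by definition means picking representatives $\delta_A^{x,y}$ and $\mu_A^{x',y'}$ (for chosen $x \in \posX$, etc.) and summing their composites over all mediating symmetries $\varphi : x' * y' \sym_A x * y$ between the common interface. The key point is that $\delta_A^{x,y}$ and $\mu_A^{x',y'}$ are both pure copycat-like augmentations: all their dynamic causal links are copycat links on the respective components. So the interaction $\delta_A^{x,y} \inter_\varphi \mu_A^{x',y'}$ is entirely determined by $\varphi$, and the composite is again a copycat-like augmentation on some decomposition of the interface.

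First I would analyze what a symmetry $\varphi : x' * y' \sym_A x * y$ looks like. Since $x' * y'$ is the tagged disjoint union (with events of $x'$ on the left, $y'$ on the right) and similarly $x * y$, and $\varphi$ must commute with the display maps and preserve the forest/order structure, $\varphi$ decomposes according to where each initial tree gets sent. Concretely, $\varphi$ restricts to four pieces: the part of $x'$ landing in $x$, the part of $x'$ landing in $y$, the part of $y'$ landing in $x$, and the part of $y'$ landing in $y$. This yields configurations $x_l, x_r, y_l, y_r$ (the images/preimages) with $x_l \starplus x_r = x$, $y_l \starplus y_r = y$ (as set-theoretic disjoint unions, since $\varphi$ is injective on $x'$ and on $y'$ and surjective onto $x$ and onto $y$), and dually $x_l \starplus y_l \sym x'$, $x_r \starplus y_r \sym y'$. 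Tracking the copycat links through the interaction and hiding, the composite $\mu_A^{x',y'}$-with-$\delta_A^{x,y}$ is isomorphic to $\rassemble_A^{x_l,y_l,x_r,y_r} \odot \bdistribute_A$-representative on the corresponding positions — i.e. the "gather after distribute" augmentation on the split pieces. So each symmetry $\varphi$ contributes exactly one summand of the form $\brassemble_A^{\posX_l,\posY_l,\posX_r,\posY_r} \odot \bdistribute_A^{\posX_l,\posX_r,\posY_l,\posY_r}$ where $\posX_l = \ic{x_l}$, etc.

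The remaining work — and the main obstacle — is the bookkeeping of multiplicities: I must count, for each fixed tuple of positions $(\posX_l,\posX_r,\posY_l,\posY_r)$ satisfying the four constraints $\posX_l * \posX_r = \posX$, $\posY_l * \posY_r = \posY$, $\posX_l * \posY_l = \posX'$, $\posX_r * \posY_r = \posY'$, exactly how many symmetries $\varphi$ give rise to a composite isomorphic to the corresponding $\brassemble \odot \bdistribute$. Fixing a representative, a symmetry $\varphi$ is equivalent to: a choice of how to split $x$ into an unordered pair realizing $(\posX_l,\posX_r)$ — there are $|\posX \darksplitinto \posX_l,\posX_r|$ such splittings as defined in the excerpt — together with similar choices for $y$, $x'$, $y'$, subject to compatibility (the piece of $x'$ identified with the piece of $x$ must be tracked consistently). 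A careful count shows the four splitting-numbers multiply together, giving exactly the coefficient $|\posX \darksplitinto \posX_l,\posX_r|\,|\posY \darksplitinto \posY_l,\posY_r|\,|\posX' \darksplitinto \posX_l,\posY_l|\,|\posY' \darksplitinto \posX_r,\posY_r|$. Hence I would organize the proof as: (1) show the decomposition of symmetries into four data, (2) show each $\varphi$ produces the claimed copycat composite up to iso, (3) show the map from symmetries to (split-position-tuples, quadruple of splittings) is a bijection, and (4) conclude by reindexing the sum. The delicate step is (3): I need to verify there is no overcounting or undercounting — in particular that two symmetries yield the same composite isogmentation iff they correspond to the same abstract split data, and that every compatible quadruple of splittings arises. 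I would handle this by working with fixed chosen representatives $\rep{\posX}$ throughout and noting that the interaction is rigid (its structure leaves no room for automorphisms beyond those already accounted for), so the correspondence is exact.
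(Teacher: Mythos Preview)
Your overall plan---decompose each mediating symmetry $\varphi : x'*y' \sym_A x*y$ according to where events land, identify the resulting composite as a gather-after-distribute augmentation, then reindex---is exactly the paper's approach. However, your bookkeeping in step~(3) contains a genuine error that happens to be compensated by a second error, so the argument as written does not go through.

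The problem is that $\brassemble_A^{\posX_l,\posY_l,\posX_r,\posY_r} \odot \bdistribute_A^{\posX_l,\posX_r,\posY_l,\posY_r}$ on the right-hand side is \emph{not} a single isogmentation: it is itself a composition of isogmentations in the strategy sense, i.e.\ a sum over all mediating symmetries $\psi$ on the tensor interface $(x_l\tensor y_l)\tensor(x_r\tensor y_r)$. So ``each $\varphi$ contributes one summand of the form $\brassemble\odot\bdistribute$'' is not the right picture. Correspondingly, your claim that a symmetry $\varphi$ is equivalent to merely a quadruple of splittings is false: the decomposition of $\varphi$ yields four splittings \emph{together with} four sub-symmetries $\varphi_{l,l}:x_l\sym x'_l$, $\varphi_{l,r}:x_r\sym y'_l$, $\varphi_{r,l}:y_l\sym x'_r$, $\varphi_{r,r}:y_r\sym y'_r$, and these sub-symmetries are additional independent data. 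Hence the number of $\varphi$'s yielding a fixed split-position-tuple is the product of the four splitting counts \emph{times} $\nSymOf{\posX_l}\cdot\nSymOf{\posX_r}\cdot\nSymOf{\posY_l}\cdot\nSymOf{\posY_r}$, not just the four splitting counts.

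The paper's fix is conceptually clean: rather than counting isogmentations, it observes that the four sub-symmetries assemble into exactly a mediating symmetry $\psi = (\varphi_{l,l}\tensor\varphi_{r,l})\tensor(\varphi_{l,r}\tensor\varphi_{r,r})$ for the composition $\rassemble\odot_\psi\distribute$, and that $\delta_A^{x,y}\odot_\varphi\mu_A^{x',y'} = \rassemble_A^{x'_l,x'_r,y'_l,y'_r}\odot_\psi\distribute_A^{x_l,x_r,y_l,y_r}$ on the nose. This gives a bijection between the indexing set on the left (all $\varphi$) and the indexing set on the right (splittings $\times$ all $\psi$), under which the summands agree; the splitting coefficients then drop out when you group by position classes. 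You should restructure step~(3) along these lines: do not try to count how many $\varphi$ yield the same composite isogmentation, but instead match each $\varphi$ with a specific $(\text{splittings},\psi)$ on the right.
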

\begin{proof}
Consider a symmetry $\varphi : x  * y \sym x' * y'$. This symmetry
sends some events of $x$ to $x'$, and some others to $y'$ -- likewise,
it sends some events of $y$ to $x'$, and some to $y'$. Following these
partitions, symmetries $\varphi : x  * y \sym x' * y'$ are in bijection
with 
\[
\left\{
\begin{array}{rcl}
x &=& x_l \starplus x_r\\
y &=& y_l \starplus y_r\\
x' &=& x'_l \starplus x'_r\\
y' &=& y'_l \starplus y'_r
\end{array}
\qquad,\qquad
\begin{array}{rcrcl}
\varphi_{l,l} &:& x_l &\sym& x'_l\\
\varphi_{l,r} &:& x_r &\sym& y'_l\\
\varphi_{r,l} &:& y_l &\sym& x'_r\\
\varphi_{r,r} &:& y_r &\sym& y'_r
\end{array}
\right\}\,,
\]

Additionally, this decomposition satisfies 
\[
\delta_A^{x,y} \odot_\varphi \mu_A^{x',y'}
=
\rassemble_A^{x'_l,x'_r,y'_l,y'_r} \odot_{(\varphi_{l,l} \tensor
\varphi_{r,l}) \tensor (\varphi_{l,r} \tensor \varphi_{r,r})}
\distribute_A^{x_l,x_r,y_l,y_r}
\]
which is verified by an immediate analysis of the copycat behaviour of
this composition. 

To prove the lemma, we then proceed with, for arbitrary $x \in \posX, y \in
\posY, x' \in \posX', y' \in \posY'$:
\begin{align*}
&\bdelta_A^{\posX,\posY} \odot \bmu_A^{\posX',\posY'}
\displaybreak[0]\\
&=
\sum_{\varphi : x * y \sym x' * y'}
\ic{\delta_A^{x,y} \odot_\varphi \mu_A^{x',y'}}
\displaybreak[1]\\
&= \sum_{
\substack{x = x_l \smallstarplus x_r\\
y = y_l \smallstarplus y_r\\
x' = x'_l \smallstarplus x'_r\\
y' = y'_l \smallstarplus y'_r
}
}
\sum_{
\substack{
\varphi_{l,l} : x_l \sym x'_l\\
\varphi_{l,r} : x_r \sym y'_l\\
\varphi_{r,l} : y_l \sym x'_r\\
\varphi_{r,r} : y_r \sym y'_r
}
}
\ic{
\rassemble_A^{x'_l,x'_r,y'_l,y'_r} \odot_{(\varphi_{l,l} \tensor
\varphi_{r,l}) \tensor (\varphi_{l,r} \tensor \varphi_{r,r})}
\distribute_A^{x_l,x_r,y_l,y_r}}
\displaybreak[2]\\
&= 
\qquad\smashoperator[l]{
\sum_{
\substack{
\posX_l, \posX_r, \posY_l, \posY_r
\text{~s.t.~}\\
\posX_l * \posX_r = \posX,~
\posY_l * \posY_r = \posY\\
\posX_l * \posY_l = \posX',~
\posX_r * \posY_r = \posY'
}
}}
\sum_{
\substack{
x = x_l \smallstarplus x_r\text{~s.t.~}x_l \in \posX_l, x_r \in \posX_r\\
y = y_l \smallstarplus y_r\text{~s.t.~}y_l \in \posY_l, y_r \in \posY_r\\
x' = x'_l \smallstarplus x'_r\text{~s.t.~}x'_l \in \posX'_l, x'_r \in \posX'_r\\
y' = y'_l \smallstarplus y'_r\text{~s.t.~}y'_l \in \posY'_l, y'_r \in \posY'_r
}
}
\sum_{
\substack{
\varphi_{l,l} : x_l \sym x'_l\\
\varphi_{l,r} : x_r \sym y'_l\\
\varphi_{r,l} : y_l \sym x'_r\\
\varphi_{r,r} : y_r \sym y'_r
}
}
\ic{
\rassemble_A^{x'_l,x'_r,y'_l,y'_r} \odot_{(\varphi_{l,l} \tensor
\varphi_{r,l}) \tensor (\varphi_{l,r} \tensor \varphi_{r,r})}
\distribute_A^{x_l,x_r,y_l,y_r}}
\displaybreak[2]\\
&=
\sum_{
\substack{
\posX_l, \posX_r, \posY_l, \posY_r
\text{~s.t.~}\\
\posX_l * \posX_r = \posX,~
\posY_l * \posY_r = \posY\\
\posX_l * \posY_l = \posX',~
\posX_r * \posY_r = \posY'
}
}
\sum_{
\substack{
x = x_l \smallstarplus x_r\text{~s.t.~}x_l \in \posX_l, x_r \in \posX_r\\
y = y_l \smallstarplus y_r\text{~s.t.~}y_l \in \posY_l, y_r \in \posY_r\\
x' = x'_l \smallstarplus x'_r\text{~s.t.~}x'_l \in \posX'_l, x'_r \in \posX'_r\\
y' = y'_l \smallstarplus y'_r\text{~s.t.~}y'_l \in \posY'_l, y'_r \in \posY'_r
}
}
\brassemble_A^{\posX_l, \posY_l, \posX_r, \posY_r}
\odot
\bdistribute_A^{\posX_l, \posX_r, \posY_l, \posY_r}\\
\displaybreak[0]\\
&= 
\sum_{
\substack{
\posX_l, \posX_r, \posY_l, \posY_r
\text{~s.t.~}\\
\posX_l * \posX_r = \posX,~
\posY_l * \posY_r = \posY\\
\posX_l * \posY_l = \posX',~
\posX_r * \posY_r = \posY'
}
}
|\posX \darksplitinto \posX_l, \posX_r|\ 
|\posY \darksplitinto \posY_l, \posY_r|\ 
|\posX' \darksplitinto \posX_l, \posY_l|\ 
|\posY' \darksplitinto \posX_r, \posY_r| \cdot
\brassemble_A^{\posX_l, \posY_l, \posX_r, \posY_r}
\odot
\bdistribute_A^{\posX_l, \posX_r, \posY_l, \posY_r}\,.
\end{align*}
by the definition of composition of isogmentations (which 
does not depend on the chosen representative); the observation above;
reorganizing the sum by symmetry classes; again via the definition of
composition of isogmentations; and by definition of
splittings.
\end{proof}

This is sufficient to ensure that $\delta_A \odot \mu_A$ and
$\rassemble_A \odot \distribute_A$ have the same isogmentations, but
not that they occur with the same coefficient. For that, the next key
observation is:

\begin{lem}\label{lem:other_distr}
Consider $A$ a negative arena, and $\posX, \posY \in \pos(A)$. Then,
\[
\nSymOf{\posX * \posY} = | \posX * \posY \darksplitinto \posX, \posY |
\times
\nSymOf{\posX} \times \nSymOf{\posY}\,.
\]
\end{lem}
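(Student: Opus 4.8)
The plan is to prove the identity
\[
\nSymOf{\posX * \posY} = | \posX * \posY \darksplitinto \posX, \posY | \times \nSymOf{\posX} \times \nSymOf{\posY}
\]
by a standard orbit–stabilizer counting argument. Fix representatives $x \in \posX$ and $y \in \posY$, so that $x * y \in \conf(A)$ is a representative of $\posX * \posY$, with events $\ev{x} + \ev{y}$. An endosymmetry $\varphi \in \SymOf{\posX * \posY}$ is an order-isomorphism of $\deseq{x * y}$ commuting with the display map. The key point is that such a $\varphi$ need not preserve the tagged decomposition $\ev{x * y} = \ev{x} + \ev{y}$; what it \emph{does} preserve (being a display-respecting order-isomorphism) is the partition of $x * y$ into its connected components in the forest, and hence it maps the sub-configuration $x$ onto \emph{some} sub-configuration $x'$ of $x * y$ with $\ic{x'} = \posX$, and $y$ onto the complementary $y'$ with $\ic{y'} = \posY$, \emph{i.e.} $x * y \cong x' \starplus y'$ and $x' * y' = x * y$ (this uses negativity/forest structure exactly as in the surjectivity arguments of \cref{prop:def_bag}).

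First I would make precise a map $\Phi : \SymOf{\posX * \posY} \to \{\,(x',y') \mid x * y = x' \starplus y',\ x' \in \posX,\ y' \in \posY\,\}$ sending $\varphi$ to $(\varphi(x), \varphi(y))$. I would check $\Phi$ is surjective: given any splitting $x * y = x' \starplus y'$ with $x' \in \posX$, $y' \in \posY$, pick symmetries $\alpha : x \sym_A x'$ and $\beta : y \sym_A y'$, which exist since $\ic{x'} = \posX = \ic{x}$ and similarly for $y$; then $[\alpha, \beta] : x * y \to x * y$ (reading $x'$ and $y'$ as subsets of $x*y$, and using that they are complementary with inherited structure) is an endosymmetry mapping to $(x', y')$. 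By definition, the number of splittings in the codomain is exactly $| \posX * \posY \darksplitinto \posX, \posY |$. Next I would identify the fibre of $\Phi$ over a fixed splitting $(x', y')$: an endosymmetry $\varphi$ with $\varphi(x) = x'$ and $\varphi(y) = y'$ is precisely a pair $(\varphi\restrict x, \varphi\restrict y)$ of symmetries $x \sym_A x'$ and $y \sym_A y'$ — this is the same decomposition argument as in the injectivity part of \cref{prop:def_bag}, now applied to an endomorphism. The set of symmetries $x \sym_A x'$ is a torsor under $\SymOf{\posX}$ (compose with a fixed such symmetry), hence has cardinality $\nSymOf{\posX}$, and likewise the set of symmetries $y \sym_A y'$ has cardinality $\nSymOf{\posY}$. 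Therefore every fibre of $\Phi$ has cardinality $\nSymOf{\posX} \times \nSymOf{\posY}$, and summing over the $| \posX * \posY \darksplitinto \posX, \posY |$ fibres gives the claimed formula.

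The main obstacle — and the step deserving the most care — is the claim that an endosymmetry $\varphi$ of $x * y$ must carry the component $x$ onto a sub-configuration whose class is again $\posX$ (rather than permuting events between $x$ and $y$ in a way that scrambles the classes). This is precisely where the forest structure and the minimality-respecting condition of \cref{def:conf} are used: each event of $x * y$ lies above a unique minimal event, $\varphi$ permutes the set of minimal events, and since $\varphi$ commutes with $\display$ and preserves $\imc$, the image of $x$ (the union of branches rooted at $\min(x)$) is the union of branches rooted at $\varphi(\min(x))$, which is itself a sub-configuration isomorphic to $x$ via $\varphi\restrict x$, hence of class $\posX$; and its complement in $x * y$ is then forced to have class $\posY$ by a cardinality/structure count on the remaining branches. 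Once this is established, everything else is the routine orbit–stabilizer bookkeeping sketched above, and I would not belabour it.
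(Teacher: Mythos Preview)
Your proof is correct and follows essentially the same route as the paper: both decompose an endosymmetry of $x*y$ (the paper phrases it as a symmetry $z \sym x*y$ for $z = x \starplus y$) into a splitting of the domain plus componentwise symmetries, and count accordingly. One remark: your ``main obstacle'' is less of an obstacle than you suggest --- since $\varphi|_x : x \sym_A \varphi(x)$ is itself a symmetry, $\varphi(x) \in \posX$ is immediate, and likewise $\varphi(y) \in \posY$; no separate cardinality or structure count on the complement is needed.
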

\begin{proof}
Fix arbitrary $x \in \posX$ and $y \in \posY$ that we assume disjoint,
and $z = x \starplus y \in \posX * \posY$. The set of symmetries on
$\posX * \posY$ is clearly in bijection with the set of symmetries
\[
\varphi : z \sym x * y
\]
which we shall study. As in the lemma above, such a symmetry sends some
events of $z$ to $x$ and some to $y$; this induces a splitting $z = x'
\starplus y'$ with induced $\varphi_x : x' \sym x$ and $\varphi_y : y'
\sym y$, so that $x' \in \posX$ and $y' \in \posY$. Conversely, any
such splitting of $z$ with accompanying symmetries yields a symmetry $z
\sym x * y$. From this it is straightforward to obtain a bijection
witnessing the announced equality (keeping in mind that we may fix in
advance a chosen $\kappa_x : x \sym \rep{\posX}$ for all $x \in \posX$,
so as to bridge between symmetries $x' \sym x$ and endosymmetries
$\rep{\posX} \sym \rep{\posX}$).
\end{proof}

With this in place, we may finally prove the exchange law of
bialgebras:

\begin{align*}
& \delta_A \odot \mu_A
\\
&=
\sum_{
\substack{
\posX, \posY \in \pos(A)\\
\posX', \posY' \in \pos(A)}}
\frac{1}{S(\posX, \posY, \posX', \posY')} 
\cdot
\bdelta_A^{\posX,\posY} \odot \bmu_A^{\posX',\posY'}
\displaybreak[0]\\
&=
\sum_{
\substack{
\posX, \posY \in \pos(A)\\
\posX', \posY' \in \pos(A)}}
\sum_{
\crampedsubstack{
\posX_l, \posX_r, \posY_l, \posY_r
\text{~s.t.~}\\
\posX_l * \posX_r = \posX,~
\posY_l * \posY_r = \posY\\
\posX_l * \posY_l = \posX',~
\posX_r * \posY_r = \posY'
}
}
\kern -13pt\frac{
    |\posX \darksplitinto \posX_l, \posX_r|
    \ |\posY \darksplitinto \posY_l, \posY_r|
    \ |\posX' \darksplitinto \posX_l, \posY_l|
    \ |\posY' \darksplitinto \posX_r, \posY_r|
}{S(\posX, \posY, \posX', \posY')}
\cdot 
\brassemble_A^{\posX_l, \posY_l, \posX_r, \posY_r}
\odot
\bdistribute_A^{\posX_l, \posX_r, \posY_l, \posY_r}
\displaybreak[2]\\
&=
\sum_{
\posX_l, \posX_r, \posY_l, \posY_r \in \pos(A)
}
\frac{
    |\posX \darksplitinto \posX_l, \posX_r|
    \ |\posY \darksplitinto \posY_l, \posY_r|
    \ |\posX' \darksplitinto \posX_l, \posY_l|
    \ |\posY' \darksplitinto \posX_r, \posY_r|
}{S(\posX_l * \posX_r, \posY_l
* \posY_r, \posX_l * \posY_l, \posX_r * \posY_r)}
\cdot 
\brassemble_A^{\posX_l, \posY_l, \posX_r, \posY_r}
\odot
\bdistribute_A^{\posX_l, \posX_r, \posY_l, \posY_r}
\displaybreak[2]\\
&=
\sum_{\posX_l,\posX_r,\posY_l,\posY_r \in \pos(A)}
\frac{1}
{S(\posX_l,\posX_r,\posY_l,\posY_r)^2}
\cdot
\brassemble_A^{\posX_l, \posY_l, \posX_r, \posY_r}
\odot
\bdistribute_A^{\posX_l, \posX_r, \posY_l, \posY_r}
\displaybreak[2]\\
&=
\left(
\sum_{\posX_l,\posX_r,\posY_l,\posY_r \in \pos(A)}
\frac{1}
{S(\posX_l,\posX_r,\posY_l,\posY_r)}
\cdot \brassemble_A^{\posX_l, \posY_l, \posX_r, \posY_r}
\right)
\odot 
\left(
\sum_{\posX_l,\posX_r,\posY_l,\posY_r \in \pos(A)}
\frac{1}
{S(\posX_l,\posX_r,\posY_l,\posY_r)}
\cdot \bdistribute_A^{\posX_l, \posY_l, \posX_r, \posY_r}
\right)\\
&= \rassemble_A \odot \distribute_A
\end{align*}
by first unfolding the definitions of $\delta_A$ and $\mu_A$; then by
\autoref{lem:main_distr}; then re-indexing the sum in the obvious way;
then applying \autoref{lem:other_distr}; and finally by
linearity, observing that all the compositions
$\brassemble_A^{\posX_l, \posY_l, \posX_r, \posY_r}
\odot
\bdistribute_A^{\posX'_l, \posX'_r, \posY'_l, \posY'_r}$ where one of
$\posX_l = \posX'_l, \posY_l = \posX'_r, \posX_r = \posY'_l$ and
$\posY_r = \posY'_r$ does not hold is null.

\bibliographystyle{alphaurl}
\bibliography{main}

\end{document}